
\documentclass[12pt]{article}     	 


\usepackage[margin=1in]{geometry} 
\usepackage{comment}                
\usepackage[tbtags]{amsmath}                
\usepackage{xcolor}                 
\usepackage{enumitem}               
\usepackage{amssymb}                
\usepackage{amsthm}                 
\usepackage{hyperref}               
\usepackage{graphicx}               
\usepackage{tikz, pgfplots}         
\usepackage{float}                  
\usepackage{natbib}                 
\usepackage{rotating}
\usepackage{setspace}
\usepackage{booktabs}
\usepackage{cleveref}
\usepackage{tcolorbox}
\usepackage{pdfpages}
\usepackage{bbm}
\usepackage[title]{appendix}

\theoremstyle{theorem}

\newtheorem{definition}{Definition}
\newtheorem{proposition}{Proposition}
\newtheorem{corollary}{Corollary}
\newtheorem{lemma}{Lemma}

\numberwithin{equation}{section}
	
\newcommand{\indep}{\perp \!\!\! \perp}
\setlength \parindent{0pt}

\title{%
 	Identification of Impulse Response Functions for Nonlinear Dynamic Models\thanks{
 		We thank Burda, M., Lu, Y., Melino, A., Renault, E., the participants of the 2023 NBER/NSF Time Series Conference (Montreal), and of the 38th Canadian Econometrics Study Group Annual Meeting (Hamilton) for their helpful comments. This paper is a split from a previous version titled ``Nonlinear Impulse Response Functions and Local Projections", https://arxiv.org/abs/2305.18145. }}
\author{Gouri\'eroux, C.,\footnote{University of Toronto, Toulouse School of Economics and CREST, email: \textit{Christian.Gourieroux@ensae.fr}} and Q., Lee\footnote{University of Toronto, email: \textit{qt.lee@mail.utoronto.ca}}}
\date{\today, All comments welcome.}
\setlength\parindent{0pt}
\begin{document}
		\setstretch{1}
	\maketitle
	\begin{abstract}
		\noindent We explore the issues of identification for nonlinear Impulse Response Functions in nonlinear dynamic models and discuss the settings in which the problem can be mitigated. In particular, we introduce the nonlinear autoregressive representation with Gaussian innovations and characterize the identified set. This set arises from the multiplicity of nonlinear innovations and transformations which leave invariant the standard normal density. We then discuss possible identifying restrictions, such as non-Gaussianity of independent sources, or identifiable parameters by means of learning algorithms, and the possibility of identification in nonlinear dynamic factor models when the underlying latent factors have different dynamics. We also explain how these identification results depend ultimately on the set of series under consideration. 
		 \\
		
		\noindent\textbf{Keywords:} Nonlinear Autoregressive Model, Generative Model, Impulse Response Functions, Nonlinear Independent Component Analysis, Local Projections, Nonlinear Innovations, Partial Identification, Recurrent Markov Process. \\
		\vspace{0in}\\
		\noindent\textbf{JEL Codes:} C01, C22.  \\
	\end{abstract}

	\newpage
	
	\section{Introduction}
	
It is well known that in a  linear dynamic framework, such as the Gaussian Structural Vector Autoregressive (SVAR) model, there exists an identification issue for the structural innovations and respectively, the Impulse Response Function (IRF). However, when a nonlinear dynamic is introduced for the autoregressive process, it is possible for some of these problems to be mitigated. The objective in this paper is to highlight the key identification issues in this context and to discuss some possible remedies. \\

We begin by establishing the nonlinear autoregressive representation for a Markov process. This leads to the definition of nonlinear innovations and their potential interpretation in terms of shocks. In general however, the multivariate nonlinear autoregressive representation, the nonlinear innovations, the associated shocks and IRF are not defined in a unique way without additional parametric or nonparametric restrictions. We characterize the identification issue in this setting by describing the identified set (IS), and demonstrating that underidentification arises from transformations which preserve either the normal or uniform distribution on $[0,1]^n$. \\

There are a number of ways to resolve such identification issues in a nonlinear multivariate dynamic model. This arises if the innovations are non Gaussian, which facilitates identification based on the different power moments of the errors, or if the underlying dynamic independent sources have different nonlinear dynamics. In both cases, identification is conditional on the uniqueness of deconvolution. For instance, we can impose independence restrictions on non-Gaussian sources, leading to identifications by means of linear ICA, which facilitates identification based on the different power moments of the errors. Under partial identification, an appropriate use of algorithmic estimation (learning) approaches and of simulations can allow the recovery of identifiable (functional) parameters, such as the term structure of some Pseudo Impulse Response Functions (PIRF). Identification can also be facilitated in nonlinear dynamic factor models, when the underlying factors are independent with different dynamics. Howeverm it is important to highlight that these notions and key results ultimately depend on the universe, that is, on the set of series under consideration.\\

This paper is organized as follows. Section 2 describes the nonlinear autoregressive representation of a Markov process and presents the notion of nonlinear impulse response function in the nonlinear dynamic framework. Section 3 explains the identification issue arising from the multiplicity of nonlinear innovations. Section 4 provides methods in which the identification issue can be reduced. Section 5 discusses identification in the context of nonlinear dynamic factor models where the underlying factors feature different dynamics. Section 6 discusses the role of the universe in the context of identification, and Section 7 concludes. Proofs are gathered in the appendices and addition results are provided in the online appendices.

	\section{Nonlinear Framework and Impulse Response Function}
	
	\subsection{Nonlinear Autoregressive Representation of a Markov Process}

Let us first introduce the notion of the nonlinear autoregressive representation for a Markov process. We consider an $n$-dimensional Markov process  $(y_t$) of order 1 with values in $\mathcal{Y}=\mathbb{R}^n$. If its distribution is continuous, the Markov condition can be expressed in terms of its transition density, that is $f(y_t|\underline{y_{t-1}})=f(y_t|y_{t-1})$, $\forall y_t,\underline{y_{t-1}}$, where $\underline{y_{t-1}}=(y_{t-1},y_{t-2},...)$. Equivalently the Markov condition can be represented by the proposition below. 
\begin{proposition}
$(y_t)$ is a Markov process of order 1 on $\mathcal{Y}=\mathbb{R}^n$ with a strictly positive transition density: $f(y_t|y_{t-1})>0$, $\forall y_t,y_{t-1}$, if and only if it admits a nonlinear autoregressive representation:
\begin{equation}\label{nlar}
	y_t = g(y_{t-1};\varepsilon_t), \ t \geq 1,
\end{equation} 
where the $\varepsilon_t$'s are independent and identically distributed $N(0,Id)$ variables, with $\varepsilon_t$ being independent of $y_{t-1}$, and $g$ is a one-to-one transformation with respect to $\varepsilon_t$, that is continuously differentiable with a strictly positive Jacobian. The process $(\varepsilon_t)$ defines a Gaussian nonlinear innovation of the process $(y_t)$. 
\end{proposition}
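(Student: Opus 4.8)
The statement is an equivalence, so the plan is to prove the two implications separately. The ``if'' direction is a routine change-of-variables-plus-Markov argument; the ``only if'' direction is the substantive part, and I would handle it with a Rosenblatt-type (sequential conditional CDF) construction.

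\textbf{Sufficiency.} Assume \eqref{nlar} holds with the stated properties. Fixing $y_{t-1}$, the map $g(y_{t-1};\cdot)$ is a $C^1$ bijection of $\mathbb{R}^n$ with strictly positive Jacobian and $\varepsilon_t\sim N(0,Id)$ is independent of $y_{t-1}$, so the change-of-variables formula produces a transition density
\[
f(y_t\mid y_{t-1}) = \phi_n\big(g^{-1}(y_{t-1};y_t)\big)\,\big|\det \partial_{y_t} g^{-1}(y_{t-1};y_t)\big| > 0 ,
\]
with $\phi_n$ the $N(0,Id)$ density, which is well defined and strictly positive everywhere. For the Markov property I would note that $\underline{y_{t-1}}$ is a measurable function of $y_0,\varepsilon_1,\dots,\varepsilon_{t-1}$ and that, since the $\varepsilon_s$ are i.i.d. and $\varepsilon_t$ is independent of the past, $\varepsilon_t$ is independent of $\underline{y_{t-1}}$; hence the conditional law of $y_t=g(y_{t-1};\varepsilon_t)$ given $\underline{y_{t-1}}$ depends on $y_{t-1}$ alone, i.e. $f(y_t\mid\underline{y_{t-1}})=f(y_t\mid y_{t-1})$.

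\textbf{Necessity.} Here I would construct $g$ explicitly. Fixing $y_{t-1}$, factor the conditional density of $y_t=(y_{t,1},\dots,y_{t,n})$ as $\prod_{j=1}^{n} f_j\big(y_{t,j}\mid y_{t,1},\dots,y_{t,j-1},y_{t-1}\big)$. Because the transition density is strictly positive, each factor is a strictly positive univariate density and its CDF $F_j(\cdot\mid y_{t,1},\dots,y_{t,j-1},y_{t-1})$ is a strictly increasing bijection $\mathbb{R}\to(0,1)$. Set $u_{t,j}=F_j(y_{t,j}\mid y_{t,1},\dots,y_{t,j-1},y_{t-1})$ and $\varepsilon_{t,j}=\Phi^{-1}(u_{t,j})$, with $\Phi$ the standard normal CDF. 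The classical Rosenblatt argument shows that $(u_{t,1},\dots,u_{t,n})$ is uniform on $[0,1]^n$ conditionally on $y_{t-1}$, hence that $\varepsilon_t\sim N(0,Id)$ conditionally on $y_{t-1}$ and, in particular, independently of $y_{t-1}$. Inverting the construction recursively — solve for $y_{t,1}$ from $\varepsilon_{t,1}$, then $y_{t,2}$ from $(\varepsilon_{t,1},\varepsilon_{t,2})$, and so on — yields a map $y_t=g(y_{t-1};\varepsilon_t)$; its Jacobian with respect to $\varepsilon_t$ is lower triangular with diagonal entries $\phi(\varepsilon_{t,j})/f_j(y_{t,j}\mid\cdots)>0$, so the Jacobian determinant is strictly positive, and $g$ is continuously differentiable in $\varepsilon_t$ once the transition density has enough smoothness in $y_t$. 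It then remains to upgrade ``$\varepsilon_t$ independent of $y_{t-1}$'' to ``$(\varepsilon_t)$ i.i.d.'': $\varepsilon_t$ is a function of $(y_{t-1},y_t)$ whose conditional law given $y_{t-1}$ is $N(0,Id)$ and, by the order-one Markov property, coincides with its conditional law given $\underline{y_{t-1}}$; since $\varepsilon_1,\dots,\varepsilon_{t-1}$ are functions of $\underline{y_{t-1}}$, $\varepsilon_t$ is independent of them, and induction on $t$ delivers mutual independence with common $N(0,Id)$ law.

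\textbf{Expected main obstacle.} The delicate point is the regularity in the necessity direction: arranging for $g$ to be simultaneously one-to-one, $C^1$, and Jacobian-positive. The triangular Rosenblatt map yields one-to-one-ness and positivity of the determinant essentially for free, but continuous differentiability of $g$ in $\varepsilon_t$ forces the conditional marginals $f_j$ — hence the transition density in its $y_t$ argument — to be not merely continuous but $C^1$, so I would either impose that smoothness or read the proposition under it. A secondary point to spell out with more care than the usual textbook treatment is the passage from ``$\varepsilon_t\indep y_{t-1}$'' to ``$(\varepsilon_t)$ i.i.d. and independent of the entire past,'' which is precisely where the order-one Markov hypothesis (rather than mere stationarity) is used.
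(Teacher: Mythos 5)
Your proof is correct and, in its substantive (necessity) direction, follows exactly the same Rosenblatt sequential-conditional-CDF construction that the paper uses in Appendix A.1, including the same argument that each $\varepsilon_{i,t}=\Phi^{-1}\circ F_{i,t}(y_{i,t}\mid\cdot)$ is $N(0,1)$ independently of the conditioning set. You additionally write out the routine sufficiency direction and the regularity/i.i.d.\ bookkeeping that the paper leaves implicit, but the approach is the same.
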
 

\textbf{Proof:} See Appendix A.1. \\

Nonlinear dynamic features can be introduced in this representation, since $y_t$ is a nonlinear function of its $y_{t-1}$ for a given $\varepsilon_t$, and/or a nonlinear function of $\varepsilon_t$ for a given $y_{t-1}$, and/or by nonlinear cross-effects of $y_{t-1}$ and $\varepsilon_{t}$. Moreover, such a representation exists even if $y_t$ has marginal and/or conditional fat tails. \\

The one-dimensional case has been discussed in Gouri\'eroux and Jasiak (2005, Section 2.1). Let us denote $F(\cdot|y_{t-1})$ as the conditional cumulative distribution function (c.d.f.) associated with $f(\cdot|y_{t-1})$. Under the assumption of Proposition 1, this function is invertible. We denote $Q(\cdot|y_{t-1})=F^{-1}(\cdot|y_{t-1})$ as its inverse, that is the conditional quantile function. Then we can choose:
\begin{equation}\label{fn_g}
	\varepsilon_t = \Phi^{-1} \circ F(y_t|y_{t-1}) \iff y_t = Q[\Phi(\varepsilon_t)|y_{t-1}] \equiv g(y_{t-1};\varepsilon_t),
\end{equation}
where $\Phi$ denotes the cdf of the standard normal. This is the only solution, if we impose that the one-to-one function $g$ is also increasing in $\varepsilon$ \footnote{Another solution with $g$ decreasing is obtained by changing $\varepsilon_t$ into $-\varepsilon_t$.}. Proposition 1 shows that this representation corresponds to the inversion method used for drawing in a distribution [Gourieroux and Monfort (1997)]. This approach can be extended to the multidimensional framework. In particular, we have the following corollary:
\begin{corollary}
	Any Markov process of order 1 on $y=\mathbb{R}^n$ can be written as:
	\begin{equation*}
		y_t = g(y_{t-1},u_t), \ t\geq1,
	\end{equation*}
	where the $u_t$'s are independent and identically distributed with uniform distribution on $[0,1]^n$. The process $(u_t)$ defines uniform nonlinear innovations. 
\end{corollary}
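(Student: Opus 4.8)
The plan is to extend the univariate inversion construction \eqref{fn_g} to $n$ dimensions by a sequential, Rosenblatt-type conditioning carried out conditionally on $y_{t-1}$, and then to promote the resulting innovations to an i.i.d. sequence using the Markov property. First I would fix $t$ and factor the transition density by the chain rule,
\[
f(y_t\mid y_{t-1}) = f_1(y_t^1\mid y_{t-1})\,\prod_{j=2}^n f_j\bigl(y_t^j\mid y_t^1,\dots,y_t^{j-1},y_{t-1}\bigr),
\]
where $f_j(\cdot\mid y_t^1,\dots,y_t^{j-1},y_{t-1})$ is the conditional density of the $j$th coordinate of $y_t$ given the earlier coordinates and $y_{t-1}$; let $F_j(\cdot\mid\cdot)$ be the corresponding conditional c.d.f.\ and $Q_j=F_j^{-1}(\cdot\mid\cdot)$ its (generalized) inverse. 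I would then define the coordinates of $u_t$ recursively by $u_t^1=F_1(y_t^1\mid y_{t-1})$ and $u_t^j=F_j(y_t^j\mid y_t^1,\dots,y_t^{j-1},y_{t-1})$ for $j=2,\dots,n$.

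Second, I would check that, conditionally on $y_{t-1}$, the vector $u_t$ is uniform on $[0,1]^n$. This is the classical property of the Rosenblatt transform: by the univariate probability integral transform applied to $y_t^1\mid y_{t-1}$, the variable $u_t^1$ is $\mathcal U[0,1]$; conditionally on $(y_t^1,y_{t-1})$ the variable $u_t^2$ is $\mathcal U[0,1]$ by the same argument, hence independent of $u_t^1$; iterating the argument over coordinates shows $u_t^1,\dots,u_t^n$ are mutually independent $\mathcal U[0,1]$ given $y_{t-1}$, with a law that does not depend on $y_{t-1}$. Inverting the recursion coordinate by coordinate produces a measurable map $g$ with $y_t^1=Q_1(u_t^1\mid y_{t-1})$ and $y_t^j=Q_j(u_t^j\mid y_t^1,\dots,y_t^{j-1},y_{t-1})$, i.e.\ $y_t=g(y_{t-1},u_t)$; as in the one-dimensional case, continuity of the conditional distributions is what makes the $F_j$ honest c.d.f.s for which the inversion works, and under the strict positivity hypothesis of Proposition~1 each $F_j$ is continuous and strictly increasing so the $Q_j$ are genuine functions.

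Third, I would upgrade ``uniform given $y_{t-1}$'' to ``$(u_t)$ i.i.d.''\ by invoking the Markov property. Since $u_t=\psi(y_{t-1},y_t)$ for the deterministic map $\psi$ just built, and since the conditional law of $y_t$ given the whole past $\underline{y_{t-1}}$ coincides with its law given $y_{t-1}$, the conditional law of $u_t$ given $\underline{y_{t-1}}$ is $\mathcal U[0,1]^n$ and free of $\underline{y_{t-1}}$; hence $u_t$ is independent of $\underline{y_{t-1}}$, in particular of $(u_{t-1},u_{t-2},\dots)$, which are $\sigma(\underline{y_{t-1}})$-measurable. A routine induction then yields mutual independence of the $u_t$'s, and since each is $\mathcal U[0,1]^n$ they are i.i.d.

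I expect the main obstacle to be bookkeeping rather than a genuine difficulty: one must be careful that the sequential inverses $Q_j$ are well defined (which is exactly where the continuity/positivity hypotheses enter, just as in the scalar case), and that the passage from the conditional statements to joint i.i.d.-ness is carried out in the correct logical order. A shortcut is available when the transition density is strictly positive: one can bypass the Rosenblatt construction entirely by taking the Gaussian representation $y_t=g(y_{t-1};\varepsilon_t)$ of Proposition~1 and setting $u_t=\bigl(\Phi(\varepsilon_t^1),\dots,\Phi(\varepsilon_t^n)\bigr)$; applying $\Phi$ coordinatewise maps $N(0,Id)$ bijectively onto $\mathcal U[0,1]^n$, so $(u_t)$ is i.i.d.\ $\mathcal U[0,1]^n$ and independent of $y_{t-1}$, and $y_t=g\bigl(y_{t-1};\Phi^{-1}(u_t^1),\dots,\Phi^{-1}(u_t^n)\bigr)$ is the claimed representation.
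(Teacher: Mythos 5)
Your proposal is correct and follows essentially the same route as the paper: the paper derives this corollary from Proposition 1, whose proof in Appendix A.1 is exactly your sequential Rosenblatt-type conditioning (the Gaussian case just composes each conditional c.d.f.\ with $\Phi^{-1}$, and the uniform case omits that step, which is also your stated shortcut). No substantive difference.
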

Proposition 1 and Corollary 1 are in particular valid when the data themselves are i.i.d., that is when we look for independent sources $\varepsilon_t$ or $u_t$ with effects on $y_t$. This also says that the conditions of Gaussianity or of uniform distributions are simply normalization conditions. More generally, we could have introduced a nonlinear innovation with a distribution other than $N(0,1)$ or $U[0,1]$ for the strict white noise components.\\

\begin{corollary}
	We can decompose any vector $y$ with continuous positive density as:
	\begin{equation*}
		\begin{split}
			y & = g(\varepsilon), \ \text{with} \ \varepsilon 
			\sim N(0,Id), \\ 
			\text{or} \ \		y & = \tilde{g}(u), \ \text{with} \ u 
			\sim U[0,1]^n. \\ 
		\end{split}
	\end{equation*}
\end{corollary}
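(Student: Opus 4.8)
The plan is to reduce Corollary 3 to the i.i.d.\ specialization of Proposition 1 (equivalently Corollary 1), via the triangular Rosenblatt--Knothe transformation. I would treat the uniform case first. Write $y=(y^{(1)},\dots,y^{(n)})$ and factor its density into the chain of conditionals $f(y)=f_1(y^{(1)})\prod_{k=2}^{n}f_k\big(y^{(k)}\mid y^{(1)},\dots,y^{(k-1)}\big)$. Because $f$ is continuous and strictly positive on $\mathbb{R}^n$, every conditional marginal density is continuous and strictly positive, so each conditional c.d.f.\ $F_k(\cdot\mid y^{(1)},\dots,y^{(k-1)})$ is a continuously differentiable, strictly increasing bijection from $\mathbb{R}$ onto $(0,1)$, with inverse the conditional quantile function $Q_k(\cdot\mid\cdot)$ --- exactly the coordinatewise analogue of the one-dimensional construction \eqref{fn_g}.

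Second, I would define the triangular map $u^{(k)}=F_k\big(y^{(k)}\mid y^{(1)},\dots,y^{(k-1)}\big)$ for $k=1,\dots,n$. A direct change of variables (a Rosenblatt-type argument) shows $u=(u^{(1)},\dots,u^{(n)})\sim U[0,1]^n$: the Jacobian matrix of $y\mapsto u$ is lower triangular with diagonal entries $f_k\big(y^{(k)}\mid y^{(1)},\dots,y^{(k-1)}\big)$, so its determinant equals $f(y)$, and the induced density of $u$ is $f(y)/f(y)\equiv 1$ on $[0,1]^n$. This map is one-to-one and continuously differentiable with strictly positive Jacobian, hence invertible; its inverse, obtained recursively by $y^{(k)}=Q_k\big(u^{(k)}\mid y^{(1)},\dots,y^{(k-1)}\big)$, is the desired $\tilde g$, giving $y=\tilde g(u)$ with $u\sim U[0,1]^n$ and $\tilde g$ a $C^1$ bijection with positive Jacobian.

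Third, for the Gaussian representation I would compose coordinatewise with the Gaussian quantile transform: set $\varepsilon^{(k)}=\Phi^{-1}(u^{(k)})$. Since $\Phi^{-1}\colon(0,1)\to\mathbb{R}$ is a $C^1$ increasing bijection, $\varepsilon=(\varepsilon^{(1)},\dots,\varepsilon^{(n)})\sim N(0,Id)$, and $y=\tilde g\big(\Phi(\varepsilon^{(1)}),\dots,\Phi(\varepsilon^{(n)})\big)=:g(\varepsilon)$, where $g$ inherits the one-to-one, continuously differentiable, strictly-positive-Jacobian properties from $\tilde g$ and $\Phi$. Alternatively, one can simply note that Corollary 3 is Proposition 1 / Corollary 1 applied to the degenerate ``Markov process'' in which $g$ does not depend on $y_{t-1}$ --- the i.i.d.\ case already flagged in the text --- so strictly speaking no new argument is needed.

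The main obstacle --- really the only point requiring care --- is verifying that the triangular map is a genuine diffeomorphism with positive Jacobian rather than merely a measurable bijection: this is precisely where continuity and strict positivity of the density are used, simultaneously guaranteeing that each $F_k(\cdot\mid\cdot)$ is strictly monotone (hence invertible) and that the conditional densities entering the Jacobian never vanish. A secondary, purely cosmetic point is that the construction depends on the chosen ordering of the coordinates of $y$, so the representation is not unique; but the corollary asserts only existence, so this is immaterial, and in fact it foreshadows the non-uniqueness at the heart of the identification discussion in Section 3.
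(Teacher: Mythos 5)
Your proposal is correct and follows essentially the same route as the paper: the paper proves Proposition 1 via the Rosenblatt (1952) triangular construction (conditional c.d.f.'s composed with $\Phi^{-1}$), and this corollary is obtained exactly as you note in your final remark, by specializing to the i.i.d.\ case where $g$ does not depend on $y_{t-1}$. Your explicit Jacobian computation for the uniformity of $u$ is a minor variant of the paper's argument (which instead observes that each conditional distribution is uniform independently of the conditioning values), but the substance is identical.
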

This corresponds to nonlinear Independent Component Analysis (ICA). Corollary 2 shows that there is always a solution to the nonlinear ICA problem for continuous random vectors [see e.g. Hyvarinen and Pajunen (1999), Theorem 1]. 

\subsection{Effect of Time Unit}

The definition of the Gaussian nonlinear innovation in Proposition 1 (resp. of the uniform nonlinear innovation in Corollary 1) depends on the time unit. Let us discuss this by considering a time unit of 2. With this new time unit, the process $(y_{2\tau})$ is still a Markov process of order 1. Then we can write:
\begin{equation}
	y_{2\tau} = \tilde{g}(y_{2(\tau-1)},\tilde{\varepsilon}_{2\tau}), \ \text{$\tau$ varying,}
\end{equation}
where the $\tilde{\varepsilon}_{2\tau}$'s are i.i.d. $N(0,Id)$ variables, defining the Gaussian nonlinear innovations at step 2. Simultaneously, we can also write: 
\begin{equation}\label{36}
	y_{2\tau} = g^{(2)}(y_{2(\tau-1)},\varepsilon_{2(\tau-1)},\varepsilon_{2\tau}).
\end{equation}
\begin{enumerate}
	\item Let us consider a Gaussian AR(1) model: $y_t = \rho y_{t-1} + \varepsilon_t$, where the $\varepsilon_t$'s are i.i.d. $N(0,1)$. It is easily seen that $y_{2\tau} = \rho^2 y_{2(\tau-1)} + \varepsilon_{2\tau}+\rho \varepsilon_{2(\tau-1)}$. In this simple case, the function $\tilde{g}$ is still affine in $y_{2(\tau-1)}$ with an autoregressive coefficient $\rho^2$ and we have $\tilde{\varepsilon}_{2\tau} = \varepsilon_{2\tau} + \rho \varepsilon_{2(\tau-1)}$.
	\item  In the nonlinear dynamic framework, the nonlinear innovation $\tilde{\varepsilon}_{2\tau}$ is no longer a function of $\varepsilon_{2\tau}$ and $\varepsilon_{2(\tau-1)}$ only, but depends also on $y_{2(\tau-1)}$ in general. Therefore, the definition of nonlinear innovation depends in a complicated way on the selected time unit, with important consequences when considering the definition of shocks and the computation of IRFs. 
\end{enumerate}

\subsection{Nonlinear Impulse Response Function }

Our interest is to study the notion of shock and characterize its propagation mechanism in a nonlinear dynamic framework. It is important to discuss in this context the main assumptions in order to facilitate the introduction of shocks and the construction of their associated Impulse Response Function (IRF). For exposition, consider a bivariate framework with structural innovation $\varepsilon_t=(\varepsilon_{1,t},\varepsilon_{2,t}$). We consider ``shocks" of magnitude $\delta_1$ on the first component $\varepsilon_{1,t}$ (or respectively of $\delta_2$ on $\varepsilon_{2,t}$). To get reasonable interpretations, we assume that: 
\begin{enumerate}
	\item These shocks cannot change the past, which explains the need for the independence between $\underline{y_{t-1}}$ and $\varepsilon_t$.
	\item The shock on $\varepsilon_{1,t}$, say, has to be performed without an effect on $\varepsilon_{2,t}$. This explains the assumption of cross sectional independence between the components of $\varepsilon_t$. 
	\item It will be useful to fix the level of shocks $\delta_{1},\delta_{2}$ in a coherent way. This is done by imposing the conditions of identical distribution, where any level of shock corresponds to a given quantile of the common distribution. These common quantiles are what is used to define small as well as extreme shocks. 
\end{enumerate}

\textbf{Remark 1}: The literature has suggested an alternative definition of nonlinear innovation (in the one dimensional framework) as:
\begin{equation*}
	\varepsilon_t^* = \frac{y_t-\mathbb{E}(y_t|y_{t-1})}{\sqrt{\mathbb{V}(y_t|y_{t-1})}},
\end{equation*}
[see Blanchard and Quah (1989), Koop, Pesaran and Potter (1996)]. It is easily checked that $\varepsilon_t^*$ is not independent of $y_{t-1}$ in general and thus this definition is not appropriate for shocking $\varepsilon_t^*$ in the construction of IRFs. \\

The nonlinear IRF corresponding to a given nonlinear autoregressive representation is defined as a comparison between the perturbed $(y^\delta_{t+h})$ and baseline $(y_{t+h})$ future trajectories of the process. At horizon 0, a transitory shock of magnitude $\delta$ hits the perturbed path, which yields: 
\begin{equation*}
	y^{(\delta)}_{t} = g(y_{t-1},\varepsilon_{t}+\delta).
\end{equation*}
Then, in all subsequent horizons, the perturbed and baseline paths can be defined recursively as follows: 
\begin{equation}\label{perturbed}
	y^{(\delta)}_{t+h} = g(y^{(\delta)}_{t+h-1},\varepsilon_{t+h}),
\end{equation}
\begin{equation}\label{baseline}
	y_{t+h} = g(y_{t+h-1},\varepsilon_{t+h}),
\end{equation}
for all $h=1,2,...$. The corresponding IRF is given by:
\begin{equation}
	IRF_t(h,\delta,y_{t-1}) = y_{t+h}^{(\delta)} - y_{t+h}.
\end{equation}

This is a (nonlinear) function of the horizon $h$ (therefore, we have a term structure of the IRF), the magnitude of the shock $\delta$, the information set summarized by $y_{t-1}$ and also the sequence of future innovations $\varepsilon_t,...,\varepsilon_{t+h}$. Moreover, the IRF is stochastic conditional on $y_{t-1}$, due to the presence of future innovations. Therefore, we may partly summarize its distribution conditional on $y_{t-1}$ through the lens of a conditional expectation:
\begin{equation*}
EIRF(h,\delta,y_{t-1})=\mathbb{E}\left[IRF_t(h,\delta,y_{t-1})|y_{t-1}\right],
\end{equation*} $\forall \ h,\delta$, or a conditional covariance:
\begin{equation*}
CIRF(h,k,\delta,y_{t-1})=Cov\left[IRF_t(h,\delta,y_{t-1}),IRF_t(k,\delta,y_{t-1})|y_{t-1}\right],
\end{equation*} $\forall \ h,k,\delta$. \\

\textbf{Remark 2:} The IRF has a particular form in the special case of the linear Gaussian VAR framework\footnote{The Gaussian assumption is often implicit or explicit in the analysis of VAR(1) models [see e.g. Assumption 2 in Plagborg-Moller and Wolf (2021)].}:
\begin{equation}
	y_t = \Phi y_{t-1} + D \varepsilon_t.
\end{equation}
Then we get:
\begin{equation}
	IRF_t(h,\delta,y_{t-1}) = (Id + \Phi + ... + \Phi^h)D\delta,
	\end{equation}
which can be simplified to $(Id-\Phi)^{-1}(Id-\Phi^{k+1})D\delta$, if the autoregressive matrix $\Phi$ is such that $Id-\Phi$ is invertible. It is clear that the IRF in this context exhibits the following properties: 
\begin{enumerate}
	\item It is completely deterministic and not stochastic. 
	\item It is history invariant, that is, it does not depend on the current environment $y_{t-1}$. 
	\item It is a linear function with respect to the magnitude $\delta$. 
\end{enumerate}
Indeed, the Gaussian VAR(1) is a very special case of dynamic models and such properties are in general, not indicative of what arises in a nonlinear dynamic framework.

\subsection{Examples}

\textbf{Example 1: Conditionally Gaussian Model}\\

The model can be written as: 
\begin{equation}\label{38}
	y_t = m(y_{t-1}) + D(y_{t-1})\varepsilon_t,	
\end{equation}
where $\varepsilon_t\sim IIN(0,Id)$ and is assumed independent of $y_{t-1}$. This specification is used in Koop, Pesaran and Potter (1996) for instance. Except in the Gaussian linear VAR(1) framework: $m(y_{t-1})=A y_{t-1}, D(y_{t-1})=D$, independent of the past, the process is not conditionally Gaussian at horizons larger or equal to 2. At horizon 2, we get:
\begin{equation}
	y_{t+1} = m\left[m(y_{t-1})+D(y_{t-1})\varepsilon_t\right] + D(m(y_{t-1}+D(y_{t-1})\varepsilon_t)\varepsilon_{t+1},
\end{equation}
where the conditional distribution of $y_{t+1}$ given $y_{t-1}$ involves cross effects of $\varepsilon_t$, $\varepsilon_{t+1}$. Therefore, it is not Gaussian and it cannot be rewritten in the form:
\begin{equation}
	y_{t+1} = \tilde{m}(y_{t-1})+\tilde{D}(y_{t-1})\zeta_t,
\end{equation}
where $\zeta_t$ is Gaussian. \\

Equivalently, this specification is not invariant by a change of time unit. In the one dimensional case, the specification \eqref{38} includes the Double Autoregressive (DAR) model of order one [Weiss (1984), Borkovec and Kluppelberg (2001), Ling (2007)], introduced to account for conditional heteroscedasticity and given by:
\begin{equation}
	y_t = \gamma y_{t-1} + \sqrt{\alpha + \beta y_{t-1}^2} \ \varepsilon_t, \alpha >0,\beta\geq0,
\end{equation}
where $\varepsilon_t$ is $IIN(0,1)$. The DAR model has a strictly stationary solution if the Lyapunov coefficient is negative: 
\begin{equation}
	\mathbb{E}\log |\gamma + \sqrt{\beta} \ \varepsilon|<0.
\end{equation}
The conditional heteroscedasticity can create fat tails for the stationary distribution of process $(y_t)$. This process has second-order moments if morever:
\begin{equation*}
	\gamma^2 + \beta < 1.
\end{equation*}
When $	\gamma^2 + \beta > 1$ and $\mathbb{E}\log |\gamma + \sqrt{\beta} \ \varepsilon|Z<0$, the process $(y_t)$ is stationary with infinite marginal variance (but finite conditional variance). \\ 

Its multivariate extension can be written as:
\begin{equation*}
	y_t = \Phi y_{t-1} + (\text{diag}\ h_t)^{1/2}\varepsilon_t,
\end{equation*}
with $h_t=a + B(y^2_{1,t-1},...,y^2_{n,t-1})$ [see Zhu et al. (2017)]. \\

\textbf{Example 2: Qualitative Threshold AR(1) Model} \\

Such a model is defined by:
\begin{equation}
	y_t = \alpha \textbf{1}_{y_{t-1}>0} + \varepsilon_t,
\end{equation}
with $\varepsilon_t$ i.i.d. $N(0,\sigma^2)$. Then at horizon 2 we get: 
\begin{equation}
	\begin{split}
		y_{t+1} & = \alpha \textbf{1}_{y_t>0} + \varepsilon_{t+1}\\
		& = \alpha \textbf{1}_{\left\{\textbf{1}_{y_{t-1}>0} + \varepsilon_{t}>0\right\}} + \varepsilon_{t+1}\\
		& = \alpha \left[\textbf{1}_{y_{t-1}>0}\textbf{1}_{\alpha + \varepsilon_{t}>0}+\textbf{1}_{y_{t-1}<0}\textbf{1}_{\varepsilon_t>0} \right]\varepsilon_{t+1},\\
	\end{split}
\end{equation}

\textbf{Example 3: Time Discretized Diffusion Process}\\

Due to the lack of coherency of the definitions with respect to the time unit, it is sometimes proposed to write the structural definition of innovations and/or impulse response functions in the infinitesimal time unit even if the observations are available in discrete time. This is a usual practice in mathematical finance or in engineering. The discrete time process is $y_t=y(t),t=1,2,...$, where the underlying process $y(\tau)$ is defined in continuous time $\tau \in (0,\infty)$ by a multivariate diffusion equation:

\begin{equation}\label{diffusion}
	dy(\tau)=m[y(\tau)]d\tau + D[y(\tau)]dW(\tau),
\end{equation}
where $W$ is a multivariate Brownian motion with $\mathbb{V}[dW(\tau)]=Id. d\tau$. 
The diffusion equation \eqref{diffusion} is the analogue of the conditionally Gaussian model written in an infinitesimal time unit. For $n=1$, this class of time discretized diffusion contains the Gaussian AR(1) process, the autoregressive gamma (ARG) process, that is the time discretized Cox, Ingersoll, Ross process [Cox, Ingersoll and Ross (1985)], or the time discretized Jacobi process [Karlin and Taylor (1981), Gour\'ieroux, Jasiak and Sufana (2009) for its extension to the multivariate framework], with values in $(-\infty,+\infty),(0,+\infty),(0,1)$, respectively.

\section{The Identification Issue}

Except in irregular cases, the normalized nonlinear autoregressive representation is unique in the one-dimensional framework $n=1$ [see Appendix A.2. for an irregular case]. However, the uniqueness disappears in a multivariate setting. More precisely, let us consider the nonlinear autoregressive representation with uniform normalization and assume a true representation with true $g_0$, $u_t^0$. Then the identified set for $g$, $u_t$, is:
\begin{equation*}
	IS(g,u) \doteq \left\{g,u_t,\ \text{and} \ u_t \ \text{i.i.d} \ U[0,1]^n, \ \text{such that:} \ g(y_{t-1},u_t)\overset{d}{=}g_0(y_{t-1},u^0_t), \ \forall y_{t-1}\right\},
\end{equation*}
where $\overset{d}{=}$ means that the two variables have the same distribution. Since the functions $g_0$ and $g$ are invertible with respect to $u$, it follows that the identification issue arises from the possibility of having nonlinear transformations that preserve the uniform distribution on $[0,1]^n$.\footnote{The same reasoning can be applied with another normalization as the Gaussian one (see the proof in Section 3.2.)} This identification issue is linked to the identification issue in nonlinear Independent Component Analysis (ICA), where the literature considers the uniqueness of sources $u$ leading to a given output $y: \ y=g(u)$ [see Darmois (1953), Hyvarinen and Oja (2001), Roberts and Everson (2001) for the principles and practices of ICA.]. We are in the same framework with identical sources with varying information, that is, values of $y_{t-1}$. 

\subsection{Transformations Preserving the (Multivariate) Uniform Distribution}

\begin{proposition}
	Let us denote $u_t=(u_{1,t},...,u_{n,t})'$, where $u_{j,t}=\Phi(\varepsilon_{j,t})$, $j=1,...,n$, the uniform shocks associated with the Gaussian shocks. The uniform shocks are identified up to a continuously differentiable transformation $T$ from $[0,1]^n$ to $(0,1]^n$, which is invertible with a continuously differentiable inverse and such that the Jacobian satifies $\det\left(\frac{\partial T(u)}{\partial u'}\right)=+1, \ \forall u \in [0,1]^n$, or $\det\left(\frac{\partial T(u)}{\partial u'}\right)=-1, \ \forall u \in ]0,1[^n$.
\end{proposition}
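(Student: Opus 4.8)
The plan is to turn the distributional condition defining $IS(g,u)$ into a pointwise statement about one transformation of the unit cube, and then to recognise the condition $\det(\partial T/\partial u')=\pm1$ as exactly the requirement that this transformation preserve the uniform law. First I would fix an arbitrary value of $y_{t-1}$ and pick any competitor $(g,u_t)\in IS(g,u)$. Strict positivity of the transition density makes $g_0(y_{t-1},\cdot)$ and $g(y_{t-1},\cdot)$ one-to-one, continuously differentiable maps of the cube onto $\mathbb{R}^n$ with nowhere-vanishing Jacobian (this is part of the nonlinear autoregressive representation of Proposition~1 and Corollary~1), so the inverse function theorem makes their inverses continuously differentiable as well. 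Setting
\[
T\;:=\;g(y_{t-1},\cdot)^{-1}\circ g_0(y_{t-1},\cdot),
\]
$T$ is a continuously differentiable bijection of $[0,1]^n$ onto itself with continuously differentiable inverse. The defining requirement $g(y_{t-1},u_t)\overset{d}{=}g_0(y_{t-1},u_t^0)$ with i.i.d. uniform inputs says precisely that $g(y_{t-1},\cdot)$ and $g_0(y_{t-1},\cdot)$ push $U[0,1]^n$ forward to the same law on $\mathbb{R}^n$, which is equivalent to $T$ pushing $U[0,1]^n$ forward to itself; for the realised process one also has $y_t=g(y_{t-1},u_t)=g_0(y_{t-1},u_t^0)$ and hence $u_t=T(u_t^0)$, which is the sense in which the uniform shocks are ``identified up to $T$''. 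Since $u_{j,t}=\Phi(\varepsilon_{j,t})\in\,]0,1[$ almost surely, it suffices to argue on the open cube, the boundary of $[0,1]^n$ carrying no mass; a priori $T$ can still depend on $y_{t-1}$, but that does not affect its description as a map of the cube.

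Next I would read off the Jacobian condition. Since $u_t^0\sim U[0,1]^n$, the change-of-variables formula gives the density of $u_t=T(u_t^0)$ at $v$ as $\bigl|\det(\partial T^{-1}(v)/\partial v')\bigr|$; this is identically $1$ iff, by the inverse function theorem, $\bigl|\det(\partial T(u)/\partial u')\bigr|=1$ for every $u$. The scalar function $u\mapsto\det(\partial T(u)/\partial u')$ is continuous on the connected set $]0,1[^n$ and never vanishes (invertibility of $T$), hence has constant sign; together with the modulus-one property this forces $\det(\partial T(u)/\partial u')\equiv+1$ or $\equiv-1$ on the cube, which is the stated characterisation. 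Independence of $u_t$ from $y_{t-1}$ is then automatic, the push-forward of the uniform law under $T$ being $y_{t-1}$-free.

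For the reverse inclusion I would run the argument backwards: given any continuously differentiable diffeomorphism $T$ of the cube with $\det(\partial T/\partial u')\equiv\pm1$, set $u_t:=T(u_t^0)$ and $g(y_{t-1},\cdot):=g_0(y_{t-1},\cdot)\circ T^{-1}$, so that $g(y_{t-1},u_t)=g_0(y_{t-1},u_t^0)=y_t$; the unit Jacobian makes $u_t$ uniform and independent of $y_{t-1}$, and $g$ inherits the regularity of $g_0$, with the case $\det(\partial T/\partial u')\equiv-1$ producing an orientation-reversing (``decreasing'') representation, admissible up to the convention recorded in the footnote to Proposition~1. Thus $(g,u_t)\in IS(g,u)$ and the admissible transformations are exactly those described.

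The delicate step I expect is the first one: checking that $T$ really is a $C^1$-diffeomorphism of the cube, not merely a measurable bijection. This relies on the smoothness and non-degenerate-Jacobian assumptions packaged into the nonlinear autoregressive representation together with the inverse function theorem, and on the observation that the shocks lie in the open cube almost surely, so that the behaviour of $g_0$, $g$ and $T$ on the boundary of $[0,1]^n$ plays no role. Once $T$ is known to be such a diffeomorphism, the modulus-one computation and the sign-constancy argument are routine; the Gaussian version mentioned in the footnote follows by conjugating $T$ with the coordinatewise maps $\Phi$ and $\Phi^{-1}$, which carries measure-preserving diffeomorphisms of the cube to measure-preserving diffeomorphisms of $\mathbb{R}^n$ for the $N(0,Id)$ law.
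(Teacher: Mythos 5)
Your proposal follows the same route as the paper's Appendix A.3: compose the inverse of one representation with the other to obtain $T$, apply the change-of-variables formula with both densities uniform to force $\left|\det\left(\partial T(u)/\partial u'\right)\right|\equiv 1$, and then obtain a constant sign. Your explicit sign-constancy argument (continuity and non-vanishing of the determinant on the connected open cube) is cleaner than the paper's appeal to ``global invertibility'', and the added reverse inclusion and the conjugation by $\tilde\Phi$ for the Gaussian version are correct supplements.

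The one genuine gap concerns the dependence of $T$ on $y_{t-1}$. The proposition asserts that the two sets of uniform shocks are linked by a \emph{single} transformation $T$ of the cube, whereas your construction $T:=g(y_{t-1},\cdot)^{-1}\circ g_0(y_{t-1},\cdot)$ a priori yields a family $T_{y_{t-1}}$, and you explicitly decline to remove that dependence (``that does not affect its description as a map of the cube''). It does affect the statement: if $T$ varied with $y_{t-1}$, the identified set would be indexed by a family of measure-preserving diffeomorphisms rather than by a single one, which is a strictly larger object, and the resulting ``innovation'' $u_t=T_{y_{t-1}}(u_t^0)$ would be a nondegenerate function of the conditioning variable. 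The paper closes this step by arguing that $\tilde u_t=\tilde h^{-1}(y_{t-1};h(y_{t-1};u_t))$ cannot depend on $y_{t-1}$ because $\tilde u_t$ is required to be independent of $y_{t-1}$. Your substitute remark --- that independence is ``automatic'' because the push-forward of the uniform law under $T$ is $y_{t-1}$-free --- only delivers distributional independence of $u_t$ from $y_{t-1}$ for each fixed conditioning value; it says nothing about whether the map itself varies with $y_{t-1}$, so it does not discharge the step. You need either to prove (or, as the paper does, to invoke) the reduction to a $y_{t-1}$-free $T$, or to state the characterization conditionally on $y_{t-1}$, which is weaker than the proposition claims.
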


\textbf{Proof:} See Appendix A.3. \\

To summarize, for $n=1$, both the nonlinear autoregressive representation $g$ and the Gaussian nonlinear innovation $\varepsilon_t$ are identifiable. This is no longer the case for $n\geq2$, where there is a problem of nonparametric partial identification.  Propositions 1 and 2 are valid under a property of null recurrence in the Markov process [see Tweedie (1975) for the definition of null recurrence], where, for any $y_{t-1}$, the support of the transition is $\mathcal{Y}$. Therefore, it is possible for a ``return" to any region of $\mathcal{Y}$ from any value $y_{t-1}$. This recurrence property is compatible with both stationary and nonstationary features of process $(y_t)$ (see the discussion in Section 3). We also make no assumptions about the observability of $(y_t)$. Thus, the propositions are valid for instance in a nonlinear state space model with latent (Markov) factors. Moreover, the assumption $\mathcal{Y}=\mathbb{R}^n$ is not restrictive. Indeed, the Markov property in Proposition 1 can be equivalently written on $y=(y_t)$, or a transform $\tilde{y}=(\tilde{y}_t)$ with $\tilde{y}_t = g^*(y_t)$, where $g^*(y_t)$ is a diffeomorphism from $\mathbb{R}^n$ to another set $\mathcal{Y}^*=g^*(\mathbb{R}^n)$. After this transformation, we get:
\begin{equation*}
	\tilde{y}_t = g^*[g[g^{*-1}(\tilde{y}_{t-1});\varepsilon_t]] \equiv \tilde{g}(\tilde{y}_{t-1};\varepsilon_t),
\end{equation*}
with the same shock $\varepsilon_t$. \\

\subsection{The Multiplicity of Nonlinear Innovations }

Let us now illustrate the multiplicity of Gaussian (resp. uniform) nonlinear innovations. It is well known that the linear transformation $D\varepsilon_t$ of the standard normal is also standard normal if and only if $DD'=Id$, that is, if $D$ is an orthogonal matrix. However, there exists also nonlinear transformations of $\varepsilon_t$ leaving invariant the standard normal distribution. 

\subsubsection{Preliminary Lemmas}

We see in Proposition 2 the key roles of the Jacobian matrices and their determinants. To apply the Jacobian formula, the Jacobian matrices must be invertible and matrices with strictly positive determinants form a multiplicative group. This group can be decomposed by subgroups with interesting interpretations. In particular, we have the following subgroups: 
\begin{itemize}
	\item The group of scalar positive matrices $\{\lambda Id, \lambda > 0\}$.
	\item The special linear group $SL(n)$ of matrices with determinant 1. 
	\item The special orthogonal group $SO(n)$ of orthogonal matrices with determinant 1\footnote{Q is an orthogonal matrix if $QQ'=Id$. In particular, $\det (QQ') = \det Q \det Q'= (\det Q)^2=\det Id = 1$, that is $\det Q = \pm 1$.}.
\end{itemize}
Then any matrix with positive determinant can be decomposed by considering successive quotient groups. We have the polar decomposition [see Hall (2015), Section 2.5]:
\begin{lemma}
	Any $n$ $\times$ $n$ matrix $A$ with strictly positive determinant can be written as:
	\begin{equation}
		A = \lambda Q \Omega,
	\end{equation}
	where $\lambda>0$, $Q$ belongs to $SO(n)$, $\Omega$ is symmetric positive definite with $\det \Omega = 1$.
\end{lemma}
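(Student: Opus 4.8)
The plan is to start from the classical polar decomposition and then rescale. First I would recall that any real invertible matrix $A$ admits a polar decomposition $A = R S$, where $R$ is orthogonal and $S = (A'A)^{1/2}$ is symmetric positive definite; this is standard (e.g. Hall (2015), Section 2.5), obtained by taking $S$ to be the unique symmetric positive definite square root of the symmetric positive definite matrix $A'A$ and setting $R = A S^{-1}$, which one checks is orthogonal since $R'R = S^{-1} A' A S^{-1} = S^{-1} S^2 S^{-1} = Id$. Since $\det A > 0$ by hypothesis and $\det S = (\det(A'A))^{1/2} = |\det A| > 0$, we get $\det R = \det A / \det S = +1$, so $R \in SO(n)$.

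Next I would extract the scalar factor. Set $\lambda = (\det S)^{1/n} > 0$ and define $\Omega = \lambda^{-1} S$. Then $\Omega$ is symmetric positive definite (a positive scalar times a symmetric positive definite matrix), and $\det \Omega = \lambda^{-n} \det S = 1$ by the choice of $\lambda$. Writing $Q = R$, we obtain $A = R S = \lambda Q \Omega$ with $\lambda > 0$, $Q \in SO(n)$, and $\Omega$ symmetric positive definite with $\det \Omega = 1$, which is exactly the claimed form.

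There is essentially no hard part here — the only thing to be careful about is that the decomposition is stated as an existence claim, so I do not need to address uniqueness (though it does hold: $S$ is the unique symmetric positive definite square root of $A'A$, $\lambda$ is then forced as the $n$-th root of its determinant, and $Q$ and $\Omega$ follow). If one wanted uniqueness one would invoke the uniqueness of the positive definite square root. The only genuine input from outside is the existence and uniqueness of the symmetric positive definite square root of a symmetric positive definite matrix, which follows from the spectral theorem; everything else is bookkeeping with determinants. I would therefore present the proof in two short steps: (i) classical polar decomposition and the sign of $\det R$; (ii) the rescaling $\lambda = (\det S)^{1/n}$, $\Omega = \lambda^{-1}S$.
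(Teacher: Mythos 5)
Your proof is correct and matches the route the paper itself takes: the paper offers no written proof of this lemma, merely citing the polar decomposition in Hall (2015, Section 2.5), and your argument is exactly that standard decomposition $A = RS$ with $S=(A'A)^{1/2}$, the observation that $\det A>0$ forces $R\in SO(n)$, followed by the elementary rescaling $\lambda=(\det S)^{1/n}$, $\Omega=\lambda^{-1}S$. Nothing is missing.
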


 This decomposition can be also written by considering the matrix logarithm transforms. We denote $\text{Exp} \ A = \sum_{k=0}^\infty \frac{A^k}{k!}$. Then, we have the following lemma:

\begin{lemma}
	\begin{enumerate}
		\item 	The element Q of $SO(n)$ can be written as $Q=\text{Exp}\ B$, where $B$ is a skew symmetric matrix that satisfies $B'=-B$.
		\item The $\Omega$ symmetric positive definite matrix with $\det \Omega =1$ can be written as $\Omega= \text{Exp} \ V$, where $V$ is a symmetric traceless matrix (with $Tr \ V = 0$).
	\end{enumerate}
\end{lemma}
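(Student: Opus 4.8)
The plan is to handle the two parts independently; each reduces, via an orthogonal change of basis, to a block-diagonal canonical form on which the matrix exponential can be read off blockwise, using throughout the identity $\text{Exp}(PAP^{-1}) = P\,\text{Exp}(A)\,P^{-1}$ and the fact that conjugation by an orthogonal matrix preserves symmetry and skew symmetry of the exponent.

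For part (2), which is the easier one, I would begin with the spectral theorem for symmetric matrices: since $\Omega$ is symmetric positive definite, there is an orthogonal $P$ with $\Omega = P\,\mathrm{diag}(\lambda_1,\dots,\lambda_n)\,P'$ and all $\lambda_i>0$. Put $\mu_i=\log\lambda_i$ and $V = P\,\mathrm{diag}(\mu_1,\dots,\mu_n)\,P'$. Then $V$ is symmetric, and because $\text{Exp}$ commutes with conjugation and acts coordinatewise on a diagonal matrix, $\text{Exp}\,V = P\,\mathrm{diag}(e^{\mu_1},\dots,e^{\mu_n})\,P' = \Omega$. Finally $\Tr V = \sum_i \mu_i = \log\prod_i\lambda_i = \log\det\Omega = 0$, so the unit-determinant constraint on $\Omega$ is exactly the tracelessness of $V$ under the logarithm. (Uniqueness of $V$, not asserted in the statement, would further follow from injectivity of $\text{Exp}$ on the symmetric matrices.)

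For part (1), I would use the real normal form of an orthogonal matrix, a consequence of the complex spectral theorem: there is $P\in O(n)$ and a block-diagonal $R$, with blocks either $1\times 1$ equal to $\pm1$ or $2\times 2$ planar rotations $\left(\begin{smallmatrix}\cos\theta & -\sin\theta\\ \sin\theta & \cos\theta\end{smallmatrix}\right)$, such that $Q = P\,R\,P'$. Since $Q\in SO(n)$ we have $\det R=+1$, hence an even number of $-1$ blocks; pairing them and using $\mathrm{diag}(-1,-1)=\left(\begin{smallmatrix}\cos\pi & -\sin\pi\\ \sin\pi & \cos\pi\end{smallmatrix}\right)$, we may assume $R$ has only $2\times 2$ rotation blocks and $+1$ blocks. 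Each rotation by $\theta$ equals $\text{Exp}\left(\begin{smallmatrix}0 & -\theta\\ \theta & 0\end{smallmatrix}\right)$ and each $+1$ equals $\text{Exp}(0)$, so $R=\text{Exp}\,\Lambda$ with $\Lambda$ block-diagonal and skew symmetric. Then $Q = P\,\text{Exp}(\Lambda)\,P' = \text{Exp}(P\Lambda P')$, and $B:=P\Lambda P'$ satisfies $B' = P\Lambda' P' = -P\Lambda P' = -B$. Alternatively one may simply invoke surjectivity of the exponential map onto the compact connected group $SO(n)$ [Hall (2015)], but the blockwise argument is self-contained.

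I expect the only genuinely substantive point to be the treatment of the $-1$ blocks in part (1): a lone $-1$ on the diagonal is \emph{not} of the form $\text{Exp}(b)$ for a $1\times1$ skew-symmetric $b$ (the only such $b$ is $0$), so one must use $\det Q = +1$ to pair the $-1$'s and absorb each pair into a rotation by $\pi$. Everything else — the two spectral decompositions, the conjugation identity for $\text{Exp}$, and the preservation of (skew) symmetry under orthogonal conjugation — is routine.
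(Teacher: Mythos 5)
Your proof is correct. The paper itself offers no proof of this lemma: it is stated as a standard fact from Lie theory, with the surrounding discussion pointing to Hall (2015) (the same reference used for the polar decomposition in Lemma 1), where part (1) follows from surjectivity of the exponential map onto the compact connected group $SO(n)$ and part (2) from the diffeomorphism between symmetric matrices and positive definite matrices. Your blockwise argument is therefore a genuinely self-contained alternative to the paper's citation: part (2) via the spectral theorem and the identity $\Tr V=\log\det\Omega$ is exactly the standard argument, and in part (1) you correctly isolate the one non-routine point, namely that a lone $-1$ eigenvalue block is not $\text{Exp}(b)$ for any $1\times1$ skew-symmetric $b$, so the hypothesis $\det Q=+1$ is what allows the $-1$'s to be paired into rotations by $\pi$. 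This is precisely why the statement holds on $SO(n)$ but fails on $O(n)\setminus SO(n)$, and making it explicit is a genuine improvement in transparency over simply invoking the literature. The only (very minor) caveat is that the real normal form you invoke for orthogonal matrices is itself a nontrivial input, but it is standard and no less elementary than the spectral theorem you use in part (2).
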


Note that the ``dimension" of the group generated by the different components are $1$ for $\lambda$, $\frac{n(n-1)}{2}$ for $Q$ (or $B$), $\frac{n(n+1)}{2}-1$ for $\Omega$ (or $V$). These dimensions sum up to $n^2$, that is, the dimension for the group of matrices $A$ with positive determinants. These decompositions could be applied to the transformations of nonlinear Gaussian innovations $\varepsilon$ as well as of nonlinear uniform innovations $u$. 

\subsubsection{Transformations in the Identified Set}

The definition of the identified set depends on the selected normalization of the nonlinear innovations either Gaussian, or uniform, say. In this section, we denote $u_t$, $v_t$, some potential uniform innovations and $\varepsilon_t$, $\eta_t$, potential Gaussian innovations, respectively. To make a link between these normalizations, we write:
\begin{equation}
	u_t= \begin{bmatrix}
		\Phi(\varepsilon_{1,t})\\
		\vdots \\
			\Phi(\varepsilon_{n,t})\\
	\end{bmatrix} \equiv \tilde{\Phi}(\varepsilon_t) \iff \varepsilon_t = \begin{bmatrix}
	\Phi^{-1}(u_{1,t})\\
	\vdots \\
	\Phi{-1}(u_{n,t})\\ 
\end{bmatrix}\equiv \tilde{\Phi}^{-1}(u_t),
\end{equation}
and similarily:
\begin{equation}\label{unif_renorm}
	v_t = \tilde{\Phi}(\eta_t) \iff \eta_t = \tilde{\Phi}^{-1}(v_t).
\end{equation}
To partly parameterize the identified set, it is easier to work with the Gaussian innovations and their ``polar" coordinates. Indeed, the support restrictions are much weaker than with the uniform distribution. More precisely, the vector $\varepsilon_t$ can be reparameterized as $||\varepsilon_t||^2 = \rho_t^2$, and $\frac{\varepsilon_t}{||\varepsilon_t||} = \zeta_t$. The first component $\rho^2_t$ is positive and the second component is an element of the $n-1$ hypersphere $S^{n-1}$. It is well-known that the standard Gaussian distribution for $\varepsilon_t$ becomes in polar coordinates $(\rho^2,\zeta_t)$ a continuous distribution with respect to the joint Lebesgue measure on $\mathbb{R}^+ \times$ the uniform probability measure on the sphere, with a joint density function of the type $\ell(\rho)$. Therefore, in polar coordinates any local transformation that keeps $\rho$ invariant and is orthogonal in $\zeta$ given $\rho$ will lead to the same distribution. This can be rewritten in terms of $\varepsilon_t$, $\eta_t$. 

\begin{proposition}
	Let $Q(\rho^2)$ be a continuous function of $\rho^2$ with values in the set of special orthogonal matrices [resp. $B(\rho^2)$ in the set of skew-symmetric matrices], then the nonlinear transformations $\eta_t = Q(||\varepsilon_t||^2) \varepsilon_t = \textup{Exp} B(||\varepsilon||^2)\varepsilon_t$, preserve the standard multivariate normal distribution.
\end{proposition}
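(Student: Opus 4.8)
The plan is to verify directly that the pushforward of the standard Gaussian measure under the map $\varepsilon \mapsto \eta = Q(\|\varepsilon\|^2)\varepsilon$ is again standard Gaussian, by exploiting the rotational structure highlighted in the paragraph preceding the statement: the $N(0,Id)$ law in polar coordinates $(\rho,\zeta)$ with $\rho = \|\varepsilon\|$, $\zeta = \varepsilon/\|\varepsilon\|$ factors as a density $\ell(\rho)$ on $\mathbb{R}^+$ times the uniform probability measure $\sigma$ on the sphere $S^{n-1}$, and these two pieces are independent. The key observation is that the proposed transformation acts on polar coordinates as $\rho_t \mapsto \rho_t$ (since $Q(\|\varepsilon\|^2)$ is orthogonal, it is norm-preserving, so $\|\eta\| = \|\varepsilon\|$) and $\zeta_t \mapsto Q(\rho_t^2)\zeta_t$, i.e. for each fixed value of $\rho$ it applies a fixed rotation to the angular part.

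The steps I would carry out, in order: (i) Note $\|\eta\|^2 = \varepsilon' Q(\|\varepsilon\|^2)' Q(\|\varepsilon\|^2)\varepsilon = \|\varepsilon\|^2$, so the radial coordinate is untouched and the map is a bijection of $\mathbb{R}^n$ (its inverse being $\eta \mapsto Q(\|\eta\|^2)'\eta$), continuously differentiable wherever $Q(\cdot)$ is, away from the origin — a Lebesgue-null set. (ii) Condition on $\rho = \|\varepsilon\|$: under the Gaussian law the conditional distribution of $\zeta$ given $\rho$ is $\sigma$, the uniform law on $S^{n-1}$, which is invariant under any fixed orthogonal transformation; hence the conditional law of $Q(\rho^2)\zeta$ given $\rho$ is also $\sigma$. (iii) Since the conditional law of the transformed angular part given $\rho$ is $\sigma$ regardless of $\rho$, and the radial law is unchanged, the joint law of $(\|\eta\|, \eta/\|\eta\|)$ equals $\ell(\rho)\,d\rho \otimes \sigma$, which is exactly the polar representation of $N(0,Id)$. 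Re-expressing in Cartesian coordinates gives $\eta \sim N(0,Id)$. (iv) Finally, invoke Lemma~2, part 1, which states that every element of $SO(n)$ is of the form $\text{Exp}\,B$ with $B$ skew-symmetric, to identify the two parametrizations $Q(\|\varepsilon\|^2)$ and $\text{Exp}\,B(\|\varepsilon\|^2)$; a measurable/continuous selection of the skew-symmetric logarithm makes the correspondence $Q(\rho^2) = \text{Exp}\,B(\rho^2)$ precise.

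Alternatively — and perhaps more cleanly for a written proof — I would avoid polar coordinates and argue via characteristic functions or via the change-of-variables formula directly: the Jacobian of $\varepsilon \mapsto Q(\|\varepsilon\|^2)\varepsilon$ can be computed as $Q(\|\varepsilon\|^2) + 2\varepsilon\,(\varepsilon'Q'(\|\varepsilon\|^2))$ where $Q'$ denotes the entrywise derivative in $\rho^2$; one shows its determinant has absolute value $1$ by noting that $Q(\|\varepsilon\|^2)$ is orthogonal and the rank-one correction $2\varepsilon\varepsilon'Q'(\|\varepsilon\|^2)$, when factored through $Q(\|\varepsilon\|^2)^{-1}$, produces a matrix $Id + 2\varepsilon (\varepsilon' Q^{-1}Q')$ whose determinant is $1 + 2\varepsilon'Q^{-1}Q'\varepsilon = 1 + (\text{d}/\text{d}\rho^2)(\varepsilon' Q^{-1} Q \varepsilon)\cdot 0 = 1$ — using that $\varepsilon' Q(\rho^2)^{-1}Q'(\rho^2)\varepsilon$ vanishes because differentiating the identity $Q^{-1}Q = Id$ against the fixed vector $\varepsilon$ and using skew-symmetry of $Q^{-1}Q'$ kills the quadratic form. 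Then $\phi(\eta) = \phi(\varepsilon)$ because $\|\eta\| = \|\varepsilon\|$ and $\phi$ is radial, and $|\det(\partial\varepsilon/\partial\eta)| = 1$, so the density is preserved.

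The main obstacle I anticipate is the regularity bookkeeping rather than the core idea: the map is only smooth away from the origin (where $\rho^2 \mapsto Q(\rho^2)$ need not extend smoothly, and $\varepsilon \mapsto \|\varepsilon\|^2$ has a kink in derivatives only if one is careless), so one must note that $\{0\}$ is Gaussian-null and that the argument on $\mathbb{R}^n \setminus \{0\}$ suffices; and in the characteristic-function/Jacobian route the cleanest justification that the rank-one term contributes nothing is to observe $\varepsilon' (Q(\rho^2)'Q'(\rho^2))\varepsilon = \tfrac12 \tfrac{d}{d\rho^2}\big(\varepsilon'Q(\rho^2)'Q(\rho^2)\varepsilon\big)\big|_{\text{$\varepsilon$ fixed}}$... which actually equals $\tfrac12\tfrac{d}{d\rho^2}\|\varepsilon\|^2 = 0$ only formally; the rigorous statement is that $Q'Q^{-1}$ (equivalently $Q^{-1}Q'$) is skew-symmetric since $Q$ is orthogonal, hence $\varepsilon'(Q^{-1}Q')\varepsilon = 0$ identically. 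Getting that antisymmetry argument stated correctly is the one place to be careful; everything else is routine.
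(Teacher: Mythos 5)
Your proposal is correct and follows essentially the same route as the paper: the paper's justification (in the paragraph preceding the proposition) is precisely your polar-coordinate argument --- the standard Gaussian factors into a radial density times the uniform measure on $S^{n-1}$, the map fixes $\rho$ and applies an orthogonal transformation to $\zeta$ given $\rho$, and the uniform spherical measure is rotation-invariant. Your alternative Jacobian computation is the $n$-dimensional version of what the paper works out explicitly for $n=2$ in Appendix B.3 (where the Jacobian of $(\rho,\theta)\mapsto(\rho,\theta+a(\rho))$ is shown to equal $1$), and your observation that skew-symmetry of $Q^{-1}\dot Q$ kills the rank-one correction is the right way to make that version rigorous.
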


These transformations are invertible with $\varepsilon_t = Q'(||\varepsilon_t||^2)\eta_t=\text{Exp} \left[-B(||\varepsilon||^2)\right] \eta_t$ and have their analogue in the uniform normalization using \eqref{unif_renorm}. 
\begin{corollary}
	The transformations $v_t=\tilde{\Phi}\left[Q(||\tilde{\Phi}^{-1}(u_t)||^2)\tilde{\Phi}^{-1}(u_t)\right]$, with $Q(\cdot)$ a special orthogonal function, preserve the uniform distribution on $[0,1]^n$. 
\end{corollary}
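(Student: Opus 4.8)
The plan is to derive this corollary directly from Proposition 4 by transport of structure through the coordinatewise map $\tilde{\Phi}$. First I would recall that, by definition, $v_t = \tilde{\Phi}(\eta_t)$ and $u_t = \tilde{\Phi}(\varepsilon_t)$, so that $\varepsilon_t = \tilde{\Phi}^{-1}(u_t)$. Proposition 4 tells us that if $\varepsilon_t \sim N(0,Id)$ and $Q(\cdot)$ is a continuous special-orthogonal-valued function, then $\eta_t := Q(\|\varepsilon_t\|^2)\,\varepsilon_t$ is again distributed $N(0,Id)$. Substituting $\varepsilon_t = \tilde{\Phi}^{-1}(u_t)$ into this expression yields exactly the map appearing in the statement,
\begin{equation*}
	\eta_t = Q\bigl(\|\tilde{\Phi}^{-1}(u_t)\|^2\bigr)\,\tilde{\Phi}^{-1}(u_t),
\end{equation*}
and then $v_t = \tilde{\Phi}(\eta_t)$ is the composite transformation in the corollary.

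The key step is then the observation that $\tilde{\Phi}$ pushes the uniform law on $[0,1]^n$ to $N(0,Id)$ and, conversely, $\tilde{\Phi}^{-1}$ pushes $N(0,Id)$ to $U[0,1]^n$; this is immediate since $\tilde{\Phi}$ acts componentwise and $\Phi$ is the standard normal c.d.f., so each coordinate of $\tilde{\Phi}(\varepsilon_t)$ is an independent $U[0,1]$ variable by the probability integral transform, and the components are independent because those of $\varepsilon_t$ are. Hence: start with $u_t \sim U[0,1]^n$; apply $\tilde{\Phi}^{-1}$ to obtain $\varepsilon_t \sim N(0,Id)$; apply $x \mapsto Q(\|x\|^2)x$ to obtain, by Proposition 4, $\eta_t \sim N(0,Id)$; apply $\tilde{\Phi}$ to obtain $v_t = \tilde{\Phi}(\eta_t) \sim U[0,1]^n$. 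Composing the three maps gives that the single transformation $u_t \mapsto v_t$ displayed in the corollary preserves $U[0,1]^n$, which is the claim.

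I would also note for completeness that the transformation is invertible, with inverse $u_t = \tilde{\Phi}\bigl[Q'(\|\tilde{\Phi}^{-1}(v_t)\|^2)\,\tilde{\Phi}^{-1}(v_t)\bigr]$, since $\tilde{\Phi}$ is a diffeomorphism from $]0,1[^n$ onto $\mathbb{R}^n$ and, as remarked after Proposition 4, the Gaussian-level map inverts via $\varepsilon_t = Q'(\|\eta_t\|^2)\eta_t$ (using $\|\eta_t\| = \|\varepsilon_t\|$ because $Q$ is orthogonal). There is essentially no obstacle here: the only mild point to be careful about is the boundary, namely that $\tilde{\Phi}^{-1}$ is defined on $]0,1[^n$ (a set of full Lebesgue measure), so the equality of distributions is the appropriate statement rather than a pointwise bijection on the closed cube — exactly the same measure-zero caveat already present in Proposition 2. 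This requires no real argument beyond pointing out that the probability-integral-transform identities hold almost surely.
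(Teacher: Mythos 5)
Your proof is correct and follows exactly the route the paper intends: the corollary is stated immediately after Proposition 3 with the remark that the Gaussian-preserving maps ``have their analogue in the uniform normalization using \eqref{unif_renorm}'', which is precisely your conjugation-by-$\tilde{\Phi}$ argument via the probability integral transform. The only nit is that the relevant result is Proposition 3 in the paper's numbering (not Proposition 4), and your added remarks on invertibility and the measure-zero boundary issue are sound but not needed.
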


By construction, the transformations in Proposition 3 and Corollary 3 are nonlinear and their determinant is equal to 1. \\

The transformations above are easily understood in dimension 2, where it is easier to work in polar coordinates: $(\rho, \theta)$, instead of initial coordinates $(\varepsilon_{1,t},\varepsilon_{2,t})$. Then, the density of the standard normal distribution becomes $f(\rho,\theta)=\frac{1}{2\pi}\rho\exp(-\rho^2/2)$, on the domain $(0,\infty)\times(0,2\pi)$ (up to modulo $2\pi$ for $\theta$). Let us now consider the transformation $(\rho,\theta) \rightarrow (\rho^*,\theta^*) = (\rho,\theta+a(\rho))$, where $a(\cdot)$ is a continuous differentiable function of $\rho$. Since the Jacobian is $\left|\det\frac{\partial(\rho^*,\theta^*)}{\partial(\rho,\theta)'}\right|=\left|\det\begin{bmatrix}
	1 & 0 \\ \frac{\partial a(\rho)}{\partial \rho} & 1 \\ 
\end{bmatrix}\right|=1$, we see that the density of $(\rho^*,\theta^*)$ is also of the form $\frac{1}{2\pi}\rho^2 \exp(-\rho^2/2)$. For each fixed distance to the origin, the transformation is a rotation. When the rotation is with an angle $a$ independent of $\rho$, we get the standard linear orthogonal transform. But in general, we can make the rotation depending on the distance to the origin. In general we have [see Appendix B.3 for details]:
\begin{equation}
	\begin{cases}
		u^*_{1,t} = \Phi \left[\sqrt{\Phi^{-1}(u_{1,t})^2+\Phi^{-1}(u_{2,t})^2} \cos\left(\tan^{-1}(\Phi^{-1}(u_{2,t})/\Phi^{-1}(u_{1,t}))+a\left(\sqrt{\Phi^{-1}(u_{1,t})^2+\Phi^{-1}(u_{2,t})^2}\right)\right)\right] ,\\
		u^*_{2,t} = \Phi \left[\sqrt{\Phi^{-1}(u_{1,t})^2+\Phi^{-1}(u_{2,t})^2} \sin\left(\tan^{-1}(\Phi^{-1}(u_{2,t})/\Phi^{-1}(u_{1,t}))+a\left(\sqrt{\Phi^{-1}(u_{1,t})^2+\Phi^{-1}(u_{2,t})^2}\right)\right)\right] . \\ 
	\end{cases}
\end{equation}

 To illustrate these transformations and their effect on the uniform normalization, we consider the bidimensional case and the following transformations $\theta+a(\rho)$ on angle $\theta$:
\begin{enumerate}
	\item $a(\rho)=1$,
	\item $a(\rho) = 0.2\rho$,
\end{enumerate}
For each case, we represent in $\begin{bmatrix}
	u_1 \\
	u_2 \\
\end{bmatrix} \in [0,1]^2$, the curves transformed of the segments $(u_1=\frac{k}{75}, 0\leq u_2 <1), k=0,1,...,75$ (resp. $0 \leq u_1 \leq 1, u_2 = \frac{k}{75}), k=0,1,...,75$. \\

\begin{figure}
	\centering
	\includegraphics[width=1\linewidth]{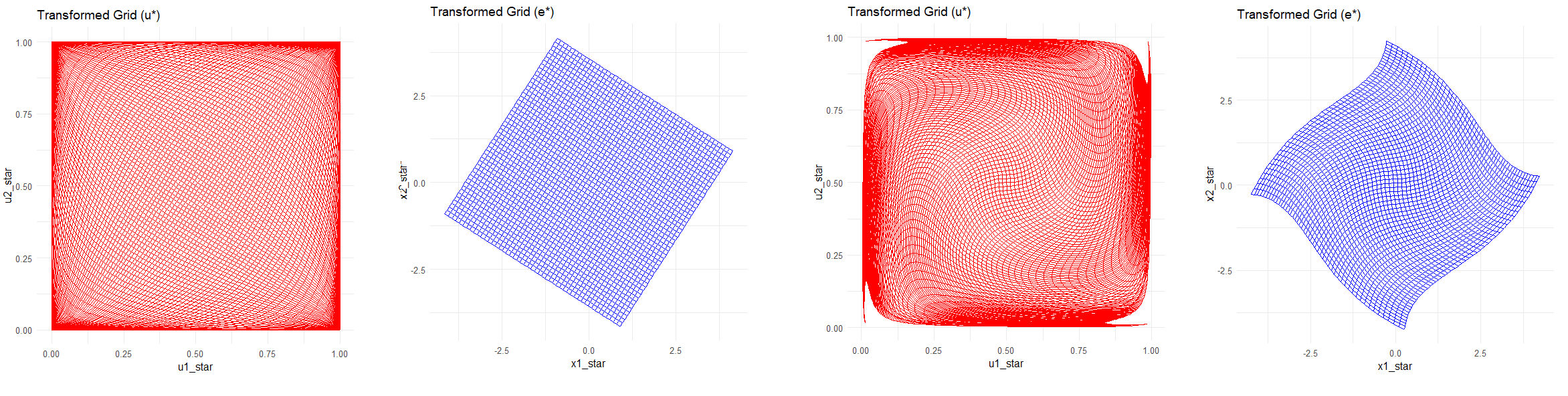}
	\caption{Piecewise Diffeomorphisms in $[0,1]^2$ in red and Diffeomorphisms in $\mathbb{R}^2$ in blue. In each set of plots, the first two correspond to $a(\rho)=1$, and the last two $a(\rho) = 0.2\rho$.}
	\label{fig:diffexamples}
\end{figure}

They correspond to homotopic deformations of the square [Smale (1959)]. These transformations can be equivalently represented for the Gaussian normalization. They lead to the corresponding  (truncated) diffeomorphisms in $\mathbb{R}^2$ which are plotted together in Figure 1 above.  \\


The set of transformations above shows the high degree of underidentification of the set of nonlinear innovations. Do alternative transformations exist, other than local orthogonal transforms that preserve the standard Gaussian? A result in this direction can be given in the bivariate case $n=2$, under the additional condition of zero preserving, known as the Schwarz Lemma, where the identified set is a set of functions. Without loss of generality, we can restrict the functions on $[0,1]^2$ to the harmonic functions that admit a series expansion as:
\begin{equation}
\begin{split}
		b_1(u)& = \sum_{h=0}^\infty \sum_{k=0}^\infty (b_{1,h,k}u_1^h u_2^k),\\
			b_2(u) &= \sum_{h=0}^\infty \sum_{k=0}^\infty (b_{2,h,k}u_1^h u_2^k),\\
\end{split}
\end{equation}
and analyze the identification issues by means of the coefficients $(b_{1,h,k}, b_{2,h,k}, h,k=0,1,...)$. We get the following result [see Gourieroux and Jasiak (2022)]:
\begin{proposition}
	In the bivariate case $n=2$, the set of identified harmonic functions is such that the manifold $\left\{b_{1,h,k}, b_{2,h,k},h+k\geq m\right\}$ is of dimension $2m$. 
\end{proposition}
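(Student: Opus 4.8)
The plan is to pass to complex notation, convert the defining conditions of the identified set into a single graded functional equation on the Taylor coefficients, and then solve that equation homogeneous degree by homogeneous degree, reading off the number $2m$ from the ranks of the resulting linear systems.

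First I would write $z=u_1+iu_2$ (equivalently $z=\varepsilon_{1}+i\varepsilon_{2}$ after the renormalization \eqref{unif_renorm}) and encode a candidate transformation by the single complex function $B(z)=b_1(u)+ib_2(u)$. Since $b_1,b_2$ are harmonic, $B$ decomposes as $B(z)=F(z)+\overline{G(z)}$ with $F,G$ holomorphic on the relevant domain; the zero-preserving (``Schwarz lemma'') normalization fixes $F(0)=G(0)=0$ and, via the rigidity of the derivative at the origin, pins the residual linear freedom down to the group already described in complex form by Lemma 1. In these coordinates the Jacobian of a harmonic map is $J_B=|F'|^2-|G'|^2$, so that membership in the identified set in the sense of Proposition 2 and Corollary 3 becomes the pointwise identity $|F'(z)|^2-|G'(z)|^2=\kappa(z)$, where $\kappa$ is the density-ratio factor attached to the chosen normalization: $\kappa\equiv 1$ for the uniform normalization and $\kappa(z)=\exp\bigl(\tfrac12(|F(z)+\overline{G(z)}|^2-|z|^2)\bigr)$ for the Gaussian one. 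This single scalar equation for the pair $(F,G)$ is the object I would analyze.

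Next I would expand $F(z)=\sum_{n\ge 1}\alpha_n z^n$ and $G(z)=\sum_{n\ge 1}\gamma_n z^n$, noting that the degree-$d$ block of the real coefficients $\{b_{1,h,k},b_{2,h,k}:h+k=d\}$ is an invertible linear image of the complex pair $(\alpha_d,\gamma_d)$. Substituting into the identity and collecting terms of equal homogeneous degree yields, for each $d$, a linear equation in the degree-$d$ unknowns whose right-hand side is a polynomial in the already-determined coefficients of strictly smaller degree. Running this recursion from the lowest degree $m$ at which $B$ is allowed to deviate from the reference transformation, I would show: (i) the degree-$(m-1)$ equation, solved for the degree-$m$ unknowns, leaves a space of real dimension exactly $2m$ — this is where harmonicity is essential, as it identifies the admissible new data up to degree $m$, after removing the constant mode and the fixed linear part, with a $2m$-dimensional space of harmonic homogeneous components; and (ii) at every degree $>m$ the corresponding linear map is surjective, so all higher coefficients are uniquely determined by the degree-$m$ data. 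Assembling the solved recursion produces a smooth parametrization of $\{b_{1,h,k},b_{2,h,k}:h+k\ge m\}$ by $\mathbb{R}^{2m}$; checking convergence of the series on a neighborhood — so that one obtains a genuine element of the identified set rather than a formal power series — and applying the implicit function theorem to the finite jets, then passing to the limit, gives that the set is an embedded $2m$-dimensional manifold.

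The step I expect to be the main obstacle is controlling the graded recursion: one must prove that the nonlinear source terms produced at the higher degrees never make the degree-$d$ equation inconsistent and, more importantly, never reintroduce freedom — a formal ``no resonance'' statement to the effect that the homogeneous part of the constraint has constant rank in every degree $>m$, so that all genuinely free data lives in degrees $\le m$. Two further points that need care are (a) justifying the ``without loss of generality'' reduction to harmonic representatives, i.e., that every element of the identified set is equivalent under the admissible reparametrizations to one with harmonic components, which is what makes the decomposition $B=F+\overline{G}$ and the Schwarz-type normalization available; and (b) the support/domain bookkeeping ($[0,1]^2$ versus $\mathbb{R}^n$ versus a neighborhood of a point), which matters both for the harmonic decomposition and for the convergence of the reconstructed series. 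The linear-algebra dictionary between $\{b_{j,h,k}\}_{h+k=d}$ and $(\alpha_d,\gamma_d)$ and the explicit form of the constraint in these coordinates are routine, and I would relegate them to the computations.
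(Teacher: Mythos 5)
Your proposal diverges from the paper's argument at the decisive step, and the divergence is not merely stylistic: the constraint you propose to solve degree by degree is a strictly stronger condition than the one the paper actually uses, and it does not have a $2m$-dimensional solution space. The paper's proof (Appendix A.4) first shows, via the eigenvalue/non-explosiveness argument on $\det\frac{\partial b(v)}{\partial v'}=1$ together with $b([0,1]^2)\subset[0,1]^2$, that the Jacobian matrix is pointwise a rotation matrix; this yields the Cauchy--Riemann-type identities $\frac{\partial b_1}{\partial v_1}=\frac{\partial b_2}{\partial v_2}$ and $\frac{\partial b_1}{\partial v_2}=-\frac{\partial b_2}{\partial v_1}$, hence harmonicity, and the count of $2m$ is then obtained from the \emph{linear} relations (A.8) these identities impose on the series coefficients (equivalently: the degree-$\le m$ jets of zero-preserving maps $b_1+ib_2$ satisfying Cauchy--Riemann form a $2m$-real-dimensional space). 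You instead keep the full harmonic-map decomposition $B=F+\overline{G}$ and impose the pointwise Jacobian identity $|F'|^2-|G'|^2=1$, proposing to show that the graded recursion leaves exactly $2m$ free real parameters at degree $m$ and none above.

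That recursion does not behave as you claim. The identity $|F'(z)|^2-|G'(z)|^2=1$ on an open set is rigid: applying $\partial_{\bar z}$ gives $F'\,\overline{F''}=G'\,\overline{G''}$, and applying $\partial_z\partial_{\bar z}$ gives $|F''|=|G''|$; combining the two, at any point where $F''\neq 0$ one would need $|F'|=|G'|$, contradicting $|F'|^2-|G'|^2=1>0$. Hence $F''\equiv G''\equiv 0$ and $F,G$ are affine. Concretely, if you perturb the identity at lowest degree $m$, the bidegree-$(m-1,0)$ block of your equation already forces the degree-$m$ coefficient of $F$ to vanish, and the bidegree-$(m-1,m-1)$ block then kills that of $G$; so step (i) of your plan yields dimension $0$, not $2m$, and there is no $\mathbb{R}^{2m}$-parametrized family to continue with. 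The missing idea relative to the paper is that the $2m$ count is \emph{not} extracted from the nonlinear unit-Jacobian identity at all, but from the linear Cauchy--Riemann coefficient relations that the paper derives from the pointwise rotation structure of the Jacobian; your scalar constraint $|F'|^2-|G'|^2=\kappa$ strictly refines those relations and collapses the set you are trying to parametrize. To recover the paper's statement you would need to work with the linearized (Cauchy--Riemann) constraints only, i.e., show $G'\equiv 0$ and count holomorphic zero-preserving jets, rather than impose the exact density-ratio identity term by term.
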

\textbf{Proof:} See Appendix A.4. \\

Therefore, the degree of underidentification is rather large and the identified set contains transformations other than the nonlinear diffeomorphisms given as examples above.

\section{Identifying Restrictions and Identifiable Functions}

When the dimension $n$ is larger than 2, the nonlinear AR(1) process, that is function $g$ in \eqref{nlar}, is not identifiable. This is a problem of nonparametric identification with the identified set described in Proposition 2. As a consequence, the nonlinear innovations and the impulse response functions are not identifiable either. There are different ways to solve this identification issue and make shocks of interest more identifiable. [1] We can introduce parametric or nonparametric identifying restrictions making function $g$ identifiable. For instance, zero or sign restrictions can be considered to identify or partially identify a specific type of shock, such as a monetary policy shock [Uhlig (2005)], a fiscal policy shock [Ramey (2016), Section 4], a technology shock, or an oil shock [Hamilton (2003), Kilian and Murphy (2012)]. [2] We can use narrative records to identify shocks, as suggested in Romer and Romer (1997) [see also Antolin-Diaz and Rubio-Ramirez (2018)]. [3] Even if some partial identification is still present, we can look for transforms of the function $g$ (resp. the IRF) that are identifiable, even if they themselves are not. For instance, in the standard linear VAR framework, the horizonal Forecast Error Variance Decomposition (FEVD) is identifiable [see Gorodinchenko and Lee (2019) for the derivation of the FEVD by local projection and Gourieroux and Lee (2025) for extensions of the FEVD in the nonlinear dynamic framework]. We first briefly review the example of the Gaussian VAR(1) model, and then models with i.i.d. non-Gaussian error terms when the observations are preliminary conditionally demeaned and reduced. Then we explain how learning algorithms, simulations and pseudo IRFs can be used in an underidentified framework.

\subsection{Identifiable Function of the IRF}

Let us consider a SVAR(1) model and focus on the maximum possible IRF under a normalization on the magnitude of shock $\delta$. We consider the IRF on a linear combination of the variable $a'y$, say, that becomes $IRF(h,\delta,a)=a'A^hD\delta$, and on the solution of the constrained optimization problem:
\begin{equation}
	\max_\delta IRF(h,\delta,a), \ \text{s.t.} \ \delta'\delta=1.
\end{equation}
The Lagrangean $IRF(h,\delta,a)-\lambda \delta'\delta$ is optimized for:
\begin{equation*}
	\delta = \frac{1}{2\lambda} D'\Phi^{'h}a,
\end{equation*}
where $\lambda$ denotes the Langrange multiplier and the normalization condition implies:
\begin{equation*}
	2\lambda = \sqrt{a'\Phi^hDD'\Phi^{'h}a}.
\end{equation*}
We deduce the value of the objective function at its optimum. 
\begin{proposition}
	The maximum of $IRF(h,\delta,a), \ \text{s.t.} \ \delta'\delta=1$, is equal to $\sqrt{a'\Phi^hDD' \Phi^{'h}a}$, and is identifiable for any $h,a$. 
\end{proposition}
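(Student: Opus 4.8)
The plan is to show two things: first, that the maximum value equals $\sqrt{a'\Phi^h DD'\Phi^{'h}a}$, and second, that this quantity is invariant under the observational equivalence class of SVAR(1) representations, hence identifiable. For the first part, I would simply verify the Lagrangean computation already sketched in the text: the objective $IRF(h,\delta,a)=a'\Phi^h D\delta$ is linear in $\delta$, so maximizing it over the unit sphere $\delta'\delta=1$ is the classic problem of maximizing a linear functional over a sphere. By Cauchy--Schwarz, $a'\Phi^h D\delta = (D'\Phi^{'h}a)'\delta \leq \|D'\Phi^{'h}a\|\,\|\delta\| = \|D'\Phi^{'h}a\|$, with equality attained at $\delta^\star = D'\Phi^{'h}a / \|D'\Phi^{'h}a\|$. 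Since $\|D'\Phi^{'h}a\|^2 = a'\Phi^h D D'\Phi^{'h}a$, the maximum is $\sqrt{a'\Phi^h DD'\Phi^{'h}a}$, confirming the first-order conditions stated (with $2\lambda$ equal to this norm). This part is routine.

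For the identification claim, the key observation is that in the Gaussian SVAR(1) framework the observationally equivalent structural representations are exactly those obtained by replacing $D$ with $DR$ where $R$ is orthogonal ($RR'=Id$), since the reduced-form is $y_t=\Phi y_{t-1}+ D\varepsilon_t$ with $\varepsilon_t \sim N(0,Id)$, and $DR\varepsilon_t$ has the same $N(0, DD')$ distribution as $D\varepsilon_t$; the autoregressive matrix $\Phi$ and the reduced-form innovation covariance $\Sigma = DD'$ are pinned down by the distribution of $(y_t)$. I would therefore argue that the relevant maximand only depends on the data through $\Phi$ and $\Sigma$: indeed $a'\Phi^h D D'\Phi^{'h}a = a'\Phi^h \Sigma \Phi^{'h}a$, which is manifestly unchanged if $D$ is post-multiplied by any orthogonal $R$, because $(DR)(DR)' = D R R' D' = DD' = \Sigma$. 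Hence the optimum value $\sqrt{a'\Phi^h \Sigma \Phi^{'h}a}$ is a function of identified objects only, for every horizon $h$ and every loading vector $a$, which is the assertion.

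The step I expect to require the most care is making precise what ``identifiable'' means here and which primitives are taken as known: one needs that $\Phi$ and $\Sigma=DD'$ are identified from the distribution of the observed process (standard for a stable Gaussian VAR(1), via the Yule--Walker relations $\Gamma(1)=\Phi\Gamma(0)$ and $\Gamma(0)-\Phi\Gamma(0)\Phi' = \Sigma$), while only $D$ itself — and hence the structural shock ordering/rotation — is underidentified. Once that is granted, the argument is the short invariance computation above. I would also note, as a remark, that this parallels the general nonlinear message of the paper: even though $g$ (here $D$) lies in a nontrivial identified set generated by orthogonal transformations, certain scalar functionals of the IRF — here a worst-case/extremal magnitude over unit shocks — are constant on that set and therefore recoverable.
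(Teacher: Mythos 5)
Your proposal is correct and follows essentially the same route as the paper: the paper obtains the maximizer via the Lagrangean of the linear objective over the unit sphere (your Cauchy--Schwarz argument is the same routine computation), and the identifiability claim rests, as you say, on the fact that the optimal value depends on $D$ only through $\Sigma=DD'$, which together with $\Phi$ is pinned down by the reduced form. Your explicit treatment of the orthogonal-rotation equivalence class and of why $\Phi$ and $\Sigma$ are identified is a useful elaboration of what the paper leaves implicit.
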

This approach can easily be extended to a specific type of shocks with sign restrictions on the IRF [see e.g. Plagborg-Moller and Wolf (2021), Example 3]. \\

An alternative is to introduce identifying restrictions linking the autoregressive parameter $A$ and the mixing matrix $D$. This is usually done on a ``structural form" of the model:
\begin{equation}
	\Phi_0y_t = \Phi_1y_{t-1} + \text{diag} \sigma \ \varepsilon_t,
\end{equation}
or equivalently: 
\begin{equation}
	y_t = \Phi^{-1}_0\Phi_1y_{t-1} + \Phi_0^{-1}\text{diag} \sigma \ \varepsilon_t,
\end{equation}
which is a VAR model with $A=\Phi^{-1}_0\Phi_1$, $D=\Phi_0^{-1}\text{diag}\sigma$. These can be exclusion (i.e. sparsity) restrictions with zero components in $\Phi_0$,$\Phi_1$, sign restrictions on IRF leading to partial identification only [Uhlig (2005), Antolin-Diaz and Rubio-Ramirez (2018)], or long run restrictions when technology shocks are defined as the only shocks that affect labour productivity in the long-run [Blanchard and Quah (1989)]. \\

\subsection{Conditionally Demeaned and Reduced Non Gaussian Models with Independence Restrictions}

They are written as:
\begin{equation}\label{cd_ar}
y_t = m(y_{t-1}) + D(y_{t-1})D^*w_t,
\end{equation}

where the components of $w_t$ are assumed to be independent. The specification \eqref{cd_ar} does not assume that the errors $w_t$ are either uniformly, or normally distributed. However, we can easily recover the nonlinear autoregressive specification with uniform innovations. More precisely, model \eqref{cd_ar} can be rewritten as:
\begin{equation}\label{uni_rep}
	y_t = m(y_{t-1}) + D(y_{t-1})D^*\begin{bmatrix}
		F_1^{-1}(u_{1,t})\\
		\vdots \\
			F_1^{-1}(u_{n,t})\\
	\end{bmatrix},
\end{equation} 
where $u_t$ follows $U[0,1]^n$ and $F_i$ denotes the c.d.f. of $w_{it}$, for $i=1,...,n$. The representation \eqref{uni_rep} shows the nonparametric restrictions introduced on the general autoregressive model \eqref{nlar}. The function $g(y_{t-1},u_t)$ is assumed linear affine in the specific transformations $F_i^{-1}$ of $u_{it}$, with coefficient functions of $y_{t-1}$. These restrictions allow for nonlinearities both in $y_{t-1}$ and $u_t$, but also on cross-effects. Nevertheless, these restrictions imply a significant reduction of the dimensions in terms of functions. In model \eqref{nlar}, function $g$ is from $\mathbb{R}^{2n}$ to $\mathbb{R}^n$ where $m(\cdot)$, $D(\cdot)$ depend on an argument of dimension $n$, and the $F_i^{-1}$'s depend on a one-dimensional argument. \\


Model \eqref{cd_ar} includes the strong linear VAR(1) model where $m(y_{t-1})=Ay_{t-1}$ and $D(y_{t-1})=Id$. Let us assume that $m(y_{t-1})$ and $D(y_{t-1})$ are identified. Then we know asymptotically $ D^*w_t$. The problem of identification becomes a problem of multivariate deconvolution, also known as Independent Component Analysis (ICA). Can we identify $D^*$ and the distributions of $w_{1,t},...,w_{n,t}$ from the joint distribution of $D^*w_t$? The answer is given in the following proposition:
\begin{proposition}
	If there is at most one distribution of $u_{i,t}$ which is Gaussian, then we can identify $D^*$ and the sources $u_{1,t},...,u_{n,t}$ up to a scale effect on $u_{i,t}$ and permutation of the indices. We can also identify their distributions. 
 \end{proposition}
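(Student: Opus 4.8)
The plan is to reduce the claim to the classical identifiability theorem for linear Independent Component Analysis, whose engine is the Skitovich--Darmois characterization of the Gaussian law. Since $m(y_{t-1})$ and $D(y_{t-1})$ are assumed identified, the joint law of $Z_t := D(y_{t-1})^{-1}[y_t-m(y_{t-1})] = D^* w_t$ is asymptotically known, so the problem is the static one: from the distribution of $Z=D^*w$, with $w=(w_1,\dots,w_n)'$ having mutually independent components, recover the columns of $D^*$ and the marginal laws of the $w_i$. Two indeterminacies are built in and cannot be removed: rescaling $w_i\mapsto c_iw_i$ while dividing the $i$-th column of $D^*$ by $c_i$ leaves $Z$ unchanged, and relabelling the sources permutes the columns; the content of the proposition is that nothing else is free.

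First I would normalize, using that the model is ``conditionally demeaned and reduced'': take each $w_i$ centered with unit variance, absorbing the scales into $D^*$. Then $\mathrm{Var}(Z)=D^*D^{*\prime}$ is identified, so any two admissible mixing matrices $D^*,\tilde D^*$ satisfy $(\tilde D^*)^{-1}D^*=:M$ with $MM'=Id$, and the two source vectors are related by $\tilde w=Mw$ with $M$ orthogonal. It therefore suffices to prove: if $M$ is orthogonal, $w$ has independent components of which at most one is Gaussian, and $Mw$ again has independent components, then $M$ is a signed permutation matrix. One should record at the outset that the hypothesis is symmetric: if two or more independent non-Gaussian sources entered a single coordinate of $\tilde w=Mw$, Cram\'er's decomposition theorem would make that coordinate non-Gaussian, so ``at most one Gaussian among the $w_i$'' and ``at most one among the $\tilde w_i$'' are equivalent.

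For the key lemma I would invoke Skitovich--Darmois: if $\sum_k a_kw_k$ and $\sum_k b_kw_k$ are independent, the $w_k$ being independent, then $w_k$ is Gaussian for every $k$ with $a_kb_k\neq0$. Applying this to the pair $(Mw)_i,(Mw)_j$ for each $i\neq j$ shows that $w_k$ is Gaussian whenever column $k$ of $M$ has two nonzero entries. Since at most one source, say $w_{k_0}$, is Gaussian, every column of $M$ except possibly the $k_0$-th has at most one nonzero entry; invertibility of $M$ forces exactly one, and orthogonality ($M'M=Id$) forces its modulus to be $1$, so that column equals $\pm e_{\sigma(k)}$, with the indices $\sigma(k)$, $k\neq k_0$, necessarily distinct. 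The remaining column $k_0$ is then pinned down by orthogonality to the unused signed basis vector. Hence $M$ is a signed permutation, which is exactly the asserted identification of $D^*$ up to column scaling and permutation; reading off the marginals of $w=D^{*-1}Z$ with $D^*$ so determined then identifies the laws of $w_1,\dots,w_n$, again up to scale and relabelling.

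The step I expect to be the real obstacle is the last part of the lemma, namely passing from ``at most one column of $M$ can carry two nonzero entries'' to ``$M$ is a signed permutation'': the Skitovich--Darmois argument is silent about the column attached to the lone Gaussian source, so the variance normalization (equivalently, orthogonality of $M$ together with invertibility) must do the work of pinning that column down. The remaining care is in stating Skitovich--Darmois and Cram\'er correctly and in checking that the unit-variance normalization genuinely loses nothing beyond the scale indeterminacy that is already part of the claim.
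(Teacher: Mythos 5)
Your argument is correct and coincides with the standard linear ICA identifiability proof --- whitening to reduce to an orthogonal mixing matrix, then Skitovich--Darmois applied to pairs of output coordinates to force all but the (at most one) Gaussian column to be signed unit vectors --- which is exactly the route taken in the references the paper cites in lieu of a proof (Comon 1994; Eriksson and Koivunen 2004). The one caveat, which you yourself flag, is that the unit-variance normalization presupposes finite second moments of the sources; this is implicit in the paper's ``conditionally demeaned and reduced'' setting, and Eriksson--Koivunen remove it via characteristic-function arguments, so nothing essential is lost.
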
 

\textbf{Proof:} See Comon (1994), Erikson and Koivunen (2004), Gouri\'eroux, Monfort and Renne (2017), Velasco (2022). \\

Thus, the linearity in $w_t$ in \eqref{cd_ar} and the independence assumptions make the model structural and facilitate the identification. In general the identified matrix $D^*$ is not triangular as assumed in the recursive definition of shocks by Sims (1980). Thus, the primitive shocks are not associated with an equation, but may enter in several equations, or in a combination of such equations. Proposition 6 shows also that, in such a model \eqref{cd_ar}, the degree of partial identification will depend on the number of Gaussian sources [See Guay (2021), Gouri\'eroux and Jasiak (2025)]. \\

\subsection{Identifiable Parameters}

As seen in Section 2, the distribution of the Markov process can be equivalently defined by means of its transition $\ell(y_{t}|y_{t-1})$, or by means of a nonlinear autoregressive representation. Usually the transition density is identifiable from a sequence of observations $y_t$, $t=1,...,T$, for instance if the process is stationary ergodic. Therefore, any function of this transition density is also identifiable. However, even under additional restrictions, such as a parametric form $\ell(y_t|y_{t-1};\theta)$, the nonlinear autoregressive model (that is $\theta$) is not always identifiable. Let us now discuss in detail the practice that consists of estimating nonparametrically the function $g$ (resp. the parameter $\theta$) by some learning algorithms, and the use of these estimates in the generative model \eqref{nlar} for simulations, predictions, or computations of the IRF where the identification issue still exists. 

\subsubsection{Learning Algorithms}

Learning algorithms to approximate the function $g$ (resp. the parameter $\theta$) are largely available. We focus on algorithms that require a starting value and then provide a well defined sequence of approximations $\hat{g}_{p,T}$ (resp. $\hat{\theta}_{p,T}$), where $p$ denotes the iteration step. They include gradient algorithms [Barzilai and Borwein (1988)], Average Stochastic Gradient Descent (ASGD) algorithms [Polyak and Juditsky (1992)], some Alternating Optimization algorithms and various algorithms used for ICA [Almeida (2003), Dinh, Krueger and Bengio (2015)]. Since we are in a framework, where $g$ (resp $\theta$) is not identifiable, we can just expect that some of these algorithms will provide for large $T$ and stopping rule $p_T$ an approximation close to the identified set. The approximation $\hat{g}_T$, say, depends in general on the selected starting value. It is out of the scope of this paper to derive theoretically the properties of these algorithms. We make the following assumptions:

\begin{enumerate}
	\item The learning algorithm provides for each starting value and each iteration step a unique value $\hat{g}_{p,T}$. In particular, some tuning parameters for accelerating the algorithm are supposed to be defined unambiguously. 
	\item $\hat{g}_{p_T,T}$ converges to the identified set when $T$, $p_T$ tend to infinity. 
\end{enumerate}
Can we use such an estimate in practice and for which purpose? 

\subsubsection{Simulations}

Let us first consider the simulation issues. Indeed, simulated trajectories are usually based on the nonlinear autoregressive specification in \eqref{nlar}. If $g_0$ and $g_1$ are two functions in the identified set, then we can simulate the series $y_{0t}^s$ and $y_{1t}^s$ from a same set of simulated errors $\varepsilon_t^s$, for $t=1,...,T$, with the same starting value $y_0$ by applying recursively:
\begin{equation}
	\begin{split}
	y_{0t}^s & = g_0(y_{0,t-1}^s,\varepsilon_t^s), \\
		y_{1t}^s & = g_1(y_{1,t-1}^s,\varepsilon_t^s). 
	\end{split}
\end{equation}
The two simulated trajectories will significantly differ since they are based on different transformations $g_0$ and $g_1$. However, since $g_0,g_1 \ \in \ IS$, their historical distributions will be the same for large $T$, and their simulated analogues based on a consistent $\hat{g}_T$, i.e.:
\begin{equation}
	\hat{y}_t^s = \hat{g}_T(\hat{y}_{t-1}^s,\varepsilon_t^s), \ t=1,...,T,
\end{equation}
will have the same asymptotic properties. To summarize, we can estimate consistently by simulations based on $\hat{g}_{p_T,T}$  any (functional) summary of the true distribution of the process, such as for instance the predictive distribution of $y_{t+h}$ given $y_t=y$. But the speed of convergence and asymptotic distribution of these estimators are not known in general.

\subsubsection{Pseudo IRF}

There exists a debate in Econometrics and Time Series on how to define shocks. Have they to be defined on an innovation $\varepsilon_t$, or directly on the series of interest $y_t$? [see e.g. Gallant et al. (1993) for this debate]. We have retained in Section 2 the definition based on innovations due to its more reasonable interpretation. Nevertheless, let us now change $y_t$ into $y_t + \Delta$, say, and let $\varepsilon_t$ be unchanged. The system \eqref{baseline} - \eqref{perturbed} becomes:
\begin{equation}
	y_{t+h} = g(y_{t+h-1},\varepsilon_{t+h}), \ h =0,1,...,
\end{equation}
\begin{equation}
	y^\Delta_{t+h} = g(y^\Delta_{t+h-1},\varepsilon_{t+h}), \ h =0,1,...,
\end{equation}
where $y_t^\Delta=y_t + \Delta$ and $y^\Delta_{t-1} = y_{t-1}$. This leads to the definition of the so-called Pseudo IRF (PIRF):
\begin{equation}
	PIRF_t(h,\Delta,y_t) = y_{t+h}^\Delta - y_{t+h}.
\end{equation}
Then we get the following result: 
\begin{proposition} The following results are true: 
	\begin{enumerate}
		\item $IRF_t(h,\delta,y_{t-1})=PIRF_t(h,g(y_{t-1},\varepsilon_t+\delta)-g(y_{t-1},\varepsilon_0),g(y_{t-1},\varepsilon_0))$.
		\item The function $PIRF_t$ is identifiable.
	\end{enumerate}
\end{proposition}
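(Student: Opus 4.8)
The plan is to prove both statements essentially by unwinding the definitions of $IRF_t$ and $PIRF_t$ and matching up the recursions. For part (1), I would start from the baseline and perturbed recursions defining $IRF_t(h,\delta,y_{t-1})$, namely $y^{(\delta)}_t = g(y_{t-1},\varepsilon_t+\delta)$ with $y_{t+h}^{(\delta)} = g(y_{t+h-1}^{(\delta)},\varepsilon_{t+h})$ and $y_{t+h} = g(y_{t+h-1},\varepsilon_{t+h})$. The key observation is that the IRF recursion for $h \ge 1$ is driven purely by the future innovations $\varepsilon_{t+1},\varepsilon_{t+2},\dots$ together with a ``kick'' at horizon $0$ that replaces $g(y_{t-1},\varepsilon_t)$ by $g(y_{t-1},\varepsilon_t+\delta)$. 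So the perturbed path $(y_{t+h}^{(\delta)})_{h\ge 0}$ is exactly the path one obtains by running the $g$-recursion forward from the single perturbed starting point $g(y_{t-1},\varepsilon_t+\delta)$, and the baseline path is the one obtained by running it from $g(y_{t-1},\varepsilon_t)$. Writing $\varepsilon_0$ for the realized value of $\varepsilon_t$ (the notation the proposition uses), the baseline value at horizon $0$ is $y_t = g(y_{t-1},\varepsilon_0)$ and the perturbed value is $y_t^{(\delta)} = g(y_{t-1},\varepsilon_0+\delta)$.

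Now compare this with the $PIRF$ construction. By definition $PIRF_t(h,\Delta,y_t)$ perturbs the current level $y_t$ to $y_t + \Delta$, leaves $y_{t-1}$ and all $\varepsilon$'s unchanged, and then runs the same recursion $y_{t+h}^\Delta = g(y_{t+h-1}^\Delta,\varepsilon_{t+h})$ for $h\ge 1$ against the baseline $y_{t+h} = g(y_{t+h-1},\varepsilon_{t+h})$. Thus $PIRF_t(h,\Delta,y_t)$ is the difference at horizon $h$ between the $g$-recursion started from level $y_t+\Delta$ and the one started from level $y_t$. Choosing $y_t = g(y_{t-1},\varepsilon_0)$ as the base level and $\Delta = g(y_{t-1},\varepsilon_0+\delta) - g(y_{t-1},\varepsilon_0)$ as the perturbation, the started-from levels become exactly $g(y_{t-1},\varepsilon_0+\delta)$ and $g(y_{t-1},\varepsilon_0)$ — precisely the two starting points identified in the previous paragraph. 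Since both objects are then defined by iterating the same deterministic map $g(\cdot,\varepsilon_{t+h})$ from the same pair of initial conditions, an induction on $h$ (base case $h=0$ is immediate from the two starting-level identities, inductive step is a single application of $g$ to each path) shows the two trajectory-differences coincide for every $h$, which is the claimed identity $IRF_t(h,\delta,y_{t-1}) = PIRF_t\bigl(h,\,g(y_{t-1},\varepsilon_t+\delta)-g(y_{t-1},\varepsilon_0),\,g(y_{t-1},\varepsilon_0)\bigr)$.

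For part (2), the point is that $PIRF_t(h,\Delta,y_t)$ depends only on the transition dynamics through $g$ evaluated along a path, but — crucially — the $PIRF$ involves no shock to an innovation and no choice of which innovation coordinate to hit, so the ambiguity described in Proposition 2 (transformations $T$ of the innovations with unit-modulus Jacobian) does not enter. Concretely, I would argue that if $g_0$ and $g_1$ both lie in $IS(g,u)$, then $g_0(y,\cdot)$ and $g_1(y,\cdot)$ have the same conditional distribution given $y$, so the transition kernel $\ell(y_{t+h}\mid y_{t+h-1})$ is the same under both; since $PIRF_t(h,\Delta,y_t) = y_{t+h}^\Delta - y_{t+h}$ is a functional of the joint law of $(y_t+\Delta, y_{t+1}^\Delta,\dots)$ and $(y_t,y_{t+1},\dots)$ coupled through the common innovation sequence, and that joint law is determined by the transition kernel (which is identified, e.g. under stationary ergodicity as discussed in Section 4.3) together with the common $\varepsilon$'s, the $PIRF$ is the same across the identified set and hence identifiable. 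The slightly delicate point to state carefully is what ``identifiable'' means here: I would phrase it as ``$PIRF_t(h,\Delta,y_t)$ is the same function of $(h,\Delta,y_t)$ for every representation $g$ in the identified set,'' because $PIRF$ is built directly from the transition density and makes no reference to the non-identified labeling of the innovations.

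The main obstacle I anticipate is purely one of bookkeeping and careful statement rather than of mathematical depth: one has to be precise that the ``same $\varepsilon_t$'' convention in the $PIRF$ recursion means the future innovations are held fixed at their realized values (so both the $IRF$ and $PIRF$ are, conditional on $y_{t-1}$ and the realized $\varepsilon_t,\varepsilon_{t+1},\dots$, deterministic), and that the substitution $\Delta = g(y_{t-1},\varepsilon_t+\delta) - g(y_{t-1},\varepsilon_0)$ in part (1) is a valid reparametrization because $g(y_{t-1},\cdot)$ is a bijection — so this $\Delta$ ranges over the appropriate set as $\delta$ varies. For part (2) the care needed is in not over-claiming: $PIRF$ is identifiable as a functional of the transition kernel, but its estimation (rate, asymptotic distribution) is a separate matter, as the paper itself flags in the discussion of learning algorithms and simulations.
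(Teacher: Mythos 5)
Your proposal is correct and takes essentially the approach the paper intends: the paper states this proposition without a formal proof, part (1) being exactly your definitional unwinding (both recursions share the starting pair $g(y_{t-1},\varepsilon_0+\delta)$, $g(y_{t-1},\varepsilon_0)$ and the same future innovations, so induction on $h$ gives the identity), and part (2) resting on the fact that the PIRF never shocks an innovation, the paper only adding afterward that its conditional expectation is a functional of the identifiable transition density. The one clause in your part (2) worth tightening is ``that joint law is determined by the transition kernel'': it is not in general, but within the paper's identified set any alternative representation satisfies $\tilde g(y,\tilde\varepsilon_s)=g(y,T^{-1}(\tilde\varepsilon_s))=g(y,\varepsilon_s)$ identically in $y$ for the realized innovations (the relabeling $T$ does not depend on $y_{t-1}$, by the proof of Proposition 2), which is the precise reason the PIRF is pathwise representation-invariant while the IRF is not, since $T(\varepsilon_t)+\delta\neq T(\varepsilon_t+\delta)$ in general.
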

Therefore, the identification issue on the IRF is just due to the different effects at horizon 1 when applying a change on $y_t$ instead of a change on $\varepsilon_t$. The PIRF is in general stochastic and its expectation conditional on $y_t$ is equal to:
\begin{equation}
	\mathbb{E}\left[PIRF_t(h,\Delta,y_t)|y_t)\right]= \mathbb{E}(y_{t+h}|y_t+\Delta) - \mathbb{E}(y_{t+h}|y_t).
\end{equation}
This is a function of the transition density $\ell(y_t|y_{t-1})$ that can easily be consistently approximated by simulations as in Section 4.2.2. \\

\textbf{Remark 4:} The difference between the IRF and the Pseudo IRF is easily seen in the example of a structural Gaussian VAR model:
\begin{equation*}
	y_t = \Phi y_{t-1} + D \varepsilon_t,
\end{equation*}
where $\varepsilon_t \sim IIN(0,Id)$. We have:
\begin{equation*}
	\begin{split}
		IRF_t(h,\delta,y_{t-1}) & = (Id+\Phi + ... + \Phi^h) D\delta,\\
		PIRF_t(h,\Delta,y_t)& = (Id + \Phi + ... + \Phi^h) \Delta.
	\end{split}
\end{equation*}
The autoregressive coefficient $\Phi$ is identifiable, but not the rotation matrix $D$. This example is rather misleading since it gives the impression of similar term structures between the IRF and PIRF, independent of their environments. Proposition 6 shows that the link between them is much more complex in a nonlinear dynamic framework.

\section{Nonlinear Dynamic Factor Models}

The identification issues could also be solved when the multivariate observed time series is a transformation of independent dynamic factors (or sources in the operation research literature). This is the identification issue in dynamic nonlinear ICA, also called Blind Source Separation (BSS) to emphasize the difference with the i.i.d. static case [Jutten and Herault (1991), Hyvarinen and Morioka (2017)]. We consider below the bivariate case for expository purpose with linear and nonlinear transformations of the sources, respectively. 

\subsection{Linear Measurement Equation}
Let us assume that the bivariate observable process $(y_t)$ is such that:
\begin{equation}\label{linear_factor}
	y_t = A x_t,
\end{equation}
where the sources $(x_{1,t})$ and $(x_{2,t})$ are independent Markov processes:
\begin{equation}
	\begin{cases}
		x_{1,t} = g_1(x_{1,t-1};\varepsilon_{1,t}), \\
				x_{2,t} = g_2(x_{2,t-1};\varepsilon_{2,t}), \\
	\end{cases}
\end{equation}
$(\varepsilon_{1,t})$, $(\varepsilon_{2,t})$ are independent Gaussian noises, and matrix $A$ is invertible. The nonparametric identification of $A$, $g_1$, $g_2$, is equivalent to the nonparametric identification of matrix $A$. Note that we get the same type of dynamic model if $(x_{1,t})$ (or $(x_{2,t})$) is multiplied by a (signed) scale factor, and also if we permute the one-dimensional processes $(x_{1,t})$ and $(x_{2,t})$. This leads to the following definition: \\

\begin{definition}
	The representation $(A,g_1,g_2)$ with linear transformation $A$ is essentially unique if and only if it is identified up to a signed scale effect and permutation of indexes. 
\end{definition}

The lack of identification due to scale effects can be solved by introducing an appropriate identification restriction as: ``the diagonal elements of matrix $A$ are equal to 1". Then, we have the following result:

\begin{proposition}
	Let us denote $\tilde{\gamma}_j(h), j=1,2$, the autocovariance functions of processes $(x_{1,t})$ and $(x_{2,t})$. Then, if ($\tilde{\gamma}_1(h)$, $h$ varying) and ($\tilde{\gamma}_2(h)$, $h$ varying) are linearly independent, the representation is essentially unique. 
\end{proposition}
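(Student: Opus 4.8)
The plan is to reduce the statement to a second‑order blind source separation argument: compare two admissible representations through the autocovariance structure of $(y_t)$ and show that the matrix relating the two sets of sources must be a permutation matrix times an invertible diagonal matrix. Concretely, suppose $y_t = A x_t = \tilde A \tilde x_t$ are two representations of the required form, where $(\tilde x_{1,t})$ and $(\tilde x_{2,t})$ are independent Markov processes and $\tilde A$ is invertible. Put $M := \tilde A^{-1} A$, so $\tilde x_t = M x_t$ with $M$ invertible; it then suffices to prove $M = P\Lambda$ for a permutation matrix $P$ and an invertible diagonal matrix $\Lambda$, since this is exactly the ``signed scale effect and permutation of indexes'' of Definition 2 (and imposing in addition that the diagonal entries of $A$ and of $\tilde A$ equal $1$ fixes the remaining scales).

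The first step is to pass to autocovariances. Assuming the sources are second‑order stationary, so that the functions $\tilde\gamma_j$ are defined, write $\Gamma_y(h) = \mathrm{Cov}(y_{t+h},y_t)$ and analogously $\Gamma_x(h)$, $\Gamma_{\tilde x}(h)$. Independence of the source processes forces $\Gamma_x(h) = D(h) := \mathrm{diag}\bigl(\tilde\gamma_1(h),\tilde\gamma_2(h)\bigr)$ and $\Gamma_{\tilde x}(h) = \tilde D(h)$, both diagonal for every $h$. From $\Gamma_y(h) = A\,D(h)\,A' = \tilde A\,\tilde D(h)\,\tilde A'$ we obtain
\[
M\,D(h)\,M' = \tilde D(h), \qquad h = 0,1,2,\dots
\]
At $h=0$ the source variances are strictly positive, so $D(0)$ and $\tilde D(0)$ are positive definite diagonal matrices; setting $N := \tilde D(0)^{-1/2} M D(0)^{1/2}$ and using the $h=0$ identity gives $N N' = \tilde D(0)^{-1/2} M D(0) M' \tilde D(0)^{-1/2} = Id$, so $N$ is orthogonal. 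Substituting $M = \tilde D(0)^{1/2} N D(0)^{-1/2}$ into the general‑$h$ identity yields
\[
N\,\Lambda(h)\,N' = \tilde\Lambda(h), \qquad \Lambda(h) := \mathrm{diag}\!\left(\frac{\tilde\gamma_1(h)}{\tilde\gamma_1(0)},\,\frac{\tilde\gamma_2(h)}{\tilde\gamma_2(0)}\right),
\]
with $\tilde\Lambda(h)$ the analogous diagonal matrix for $(\tilde x_{i,t})$. Thus the orthogonal matrix $N$ conjugates the one‑parameter family of diagonal matrices $\{\Lambda(h)\}$ (the source autocorrelations) to diagonal matrices.

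Now the linear‑independence hypothesis enters. Since $\tilde\gamma_i(0)>0$, linear dependence of $\bigl(\tilde\gamma_1(h),\, h \text{ varying}\bigr)$ and $\bigl(\tilde\gamma_2(h),\, h \text{ varying}\bigr)$ would mean $\tilde\gamma_1 \equiv c\,\tilde\gamma_2$ with $c = \tilde\gamma_1(0)/\tilde\gamma_2(0) > 0$, i.e. identical autocorrelation functions; hence linear independence is equivalent to the existence of a lag $h^{*}$ with $\tilde\gamma_1(h^{*})/\tilde\gamma_1(0) \ne \tilde\gamma_2(h^{*})/\tilde\gamma_2(0)$, that is, $\Lambda(h^{*})$ has two distinct diagonal entries. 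A $2\times 2$ orthogonal matrix $N$ for which $N\Lambda(h^{*})N'$ is diagonal must send each coordinate axis — the eigenspaces of $\Lambda(h^{*})$, one‑dimensional because the eigenvalues are distinct — onto a coordinate axis, hence $N$ is a signed permutation matrix. Consequently $M = \tilde D(0)^{1/2} N D(0)^{-1/2}$ is a permutation matrix times a positive definite diagonal matrix, i.e. a generalized permutation matrix, which gives essential uniqueness.

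I expect the whitening computation and the $2\times2$ orthogonal‑conjugation step to be routine linear algebra. The point needing care is the equivalence between the stated hypothesis and the property actually used — ``there is a lag at which the source autocorrelations differ'' — which is precisely the remark that, because $\tilde\gamma_i(0)>0$, linear dependence of the two autocovariance sequences coincides with their being proportional, hence with having identical autocorrelation functions. A secondary technical item is to make explicit the regularity under which the $\Gamma_y(h)$ are identifiable from the data (second‑order stationarity, and ergodicity of $(y_t)$), so that the argument above shows the observed distribution pins down $A$ up to exactly the stated indeterminacies.
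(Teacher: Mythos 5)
Your proof is correct, but it takes a genuinely different route from the paper's. The paper normalizes the diagonal of $A$ to one, writes the observable auto- and cross-covariances of $(y_t)$ explicitly as $\gamma_{11}(h)=\tilde{\gamma}_1(h)+a_{12}^2\tilde{\gamma}_2(h)$, $\gamma_{12}(h)=a_{21}\tilde{\gamma}_1(h)+a_{12}\tilde{\gamma}_2(h)$, $\gamma_{22}(h)=a_{21}^2\tilde{\gamma}_1(h)+\tilde{\gamma}_2(h)$, eliminates the source ACFs to get the linear relation $\gamma_{12}(h)=\frac{1}{1+a_{12}a_{21}}\left[a_{21}\gamma_{11}(h)+a_{12}\gamma_{22}(h)\right]$, identifies $a_{12}/(1+a_{12}a_{21})$ and $a_{21}/(1+a_{12}a_{21})$ from linear independence, and finally solves a degree-two equation whose two roots are exactly the label swap of the sources. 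You instead run the classical second-order blind-source-separation argument: whiten at lag $0$, observe that the whitened mixing matrix $N$ is orthogonal and conjugates the diagonal autocorrelation matrix $\Lambda(h^{*})$, which has distinct entries at some lag $h^{*}$, to a diagonal matrix, and conclude $N$ is a signed permutation --- essentially the Tong et al.~(1991)/Belouchrani et al.~(1997) uniqueness proof that the paper cites but does not reproduce. Your version is cleaner on two counts: it works directly from the stated hypothesis on the \emph{source} autocovariances via the equivalence ``linearly dependent $\iff$ proportional $\iff$ identical autocorrelation functions'' (which correctly exploits $\tilde{\gamma}_j(0)>0$), whereas the paper's elimination actually invokes linear independence of the \emph{observable} ACFs $\gamma_{11},\gamma_{22}$ and quietly sets aside the cases $a_{12}a_{21}=\pm 1$; and it extends verbatim to dimension $n$. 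What the paper's computation buys in exchange is an explicit parametric description of the two admissible representations under the unit-diagonal normalization. Both arguments only pin down $A$ up to generalized permutation; like the paper, you should add the closing line that $x_t=A^{-1}y_t$ then recovers the sources and hence the one-dimensional dynamics $g_1,g_2$.
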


\textbf{Proof:} See Appendix A.5. \\ 

We have seen in Proposition 6 that there is an identification issue when $(x_{1,t})$ and $(x_{2,t})$ are independent Gaussian white noises. This lack of identification disappears, if the underlying independent factors (sources) have different second-order dynamics. This result can be extended to any dimension where the observations are combinations of independent AR(1) processes with different autoregressive coefficients in [Gouri\'eroux and Jasiak (2023b)]. Autocovariance based methods to separate sources in the linear framework have been proposed in Tong et al. (1991) and Belouchani et al. (1997). The condition in Proposition 8 is sufficient, but not necessary. Weaker identification conditions can sometimes be obtained by considering zero autocovariance restrictions or nonlinear transformations of $x_{1,t}$ (resp. $x_{2,t}$) [Gouri\'eroux and Jasiak (2022b)]. The idea is the following. Let us assume that the dynamics of the latent factors are parameterized, that is:
\begin{equation*}
\begin{cases}
		x_{1,t}  = g_1(x_{1,t-1},\varepsilon_{1,t};\theta_1),\\
	x_{2,t}  = g_2(x_{2,t-1},\varepsilon_{2,t};\theta_2),\\
\end{cases}
\end{equation*}
where $\theta_1$, $\theta_2$ are parameters. This system can be inverted as:
\begin{equation*}
\begin{cases}
		\varepsilon_{1,t}  = h_1(x_{1,t},x_{1,t-1};\theta_1),\\
	\varepsilon_{2,t}  = h_2(x_{2,t},x_{2,t-1};\theta_2).\\ 
\end{cases} \iff \varepsilon_{t} = h(x_t,x_{t-1};\theta).
\end{equation*}
Let us denote $C=A^{-1}$ as the demixing matrix. Then, the model can be rewritten as:
\begin{equation*}
	\varepsilon_t = h^*(Cy_t,Cy_{t-1};\theta^*),
\end{equation*}
with $\theta=(\text{vec} C,\theta^*)$. Since the $\varepsilon_t's$ are serially independent, it has been suggested in the literature to try to identify the parameters $C$ and $\theta$ by means of a set of nonlinear autocovariance restrictions of the form:
\begin{equation}
	Cov\left[a(\varepsilon_t),\tilde{a}(\varepsilon_{t-k})\right] = Cov\left[a \circ h^*\left[Cy_{t},Cy_{t-1};\theta^*\right],\tilde{a} \circ h^*\left[Cy_{t-k},Cy_{t-k-1};\theta^*\right]\right]=0,
\end{equation}
with a selected set of pairs of nonlinear transformations $a,\tilde{a}$ and lags. Such an identification approach is for instance introduced for the identification of mixed causal-noncausal models when using the Generalized Covariance (GCov) estimation approach. 

\subsection{Nonlinear Measurement Equation} 

The model in the previous subsection can be extended to account for nonlinear transformations of the sources, where \eqref{linear_factor} is replaced by:
\begin{equation}
	y_t = A(x_t).
\end{equation}
Then, the definition of essential identification has to be extended to account for the nonlinearity of the transformation $A$. Typically, $A$, $g_1$ and $g_2$ can be modified for the latent variables $(x_{1,t})$ and $(x_{2,t})$ to be marginally uniform on $[0,1]$, for $(y_t)$ to be marginally uniform on $[0,1]^2$, and up to permutation of indices. \\

\begin{definition}
	The representation $(A,g_1,g_2)$ with nonlinear transformation $A$ is essentially unique if and only if it is unique up to a transformation on $(x_t)$ and $(y_t)$ to make them marginally uniform and up to permutation  of indices. 
\end{definition}
By fixing the margins, the analysis of identification will be through pairwise copulas at different lags instead of covariances at different lags as in Proposition 8. We have:   

\begin{proposition}
	Let us denote $\ell_{jh}(x_{j,t+h}|x_{j,t})$, $j=1,2,$ as the conditional transitions at horizon $h$ of the two sources. Let us assume that the functions $\ell_{jh}(u|v)$ are differentiable  with respect to $u$ and that the sequences $\left[\frac{\partial \log \ell_{jh} }{\partial u}(u|v), \ \text{v varying}\right]$, $j=1,2$, are linearly independent. Then $A,g_1,g_2$ are essentially unique. 
\end{proposition}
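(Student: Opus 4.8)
The plan is to run the argument of Proposition 8 with conditional log‑transitions in place of second‑order autocovariances, i.e.\ to treat the past value of the sources as the ``auxiliary variable'' that breaks the static ICA ambiguity. By Corollary 1 we may normalize so that each source $(x_{j,t})$ is marginally $U[0,1]$; being independent, $x_t$ then has the uniform law on $[0,1]^2$, and after also normalizing the observation coordinates to have uniform marginals (the setting of Definition 3) $A$ becomes a $C^1$ diffeomorphism of $[0,1]^2$ onto itself. Let $(\tilde A,\tilde g_1,\tilde g_2)$ be any second representation meeting the same normalized hypotheses and set $B\doteq\tilde A^{-1}\circ A$, so $\tilde x_t=B(x_t)$ and $\tilde x_t$ is again uniform on $[0,1]^2$; hence $B$ pushes Lebesgue measure on $[0,1]^2$ to itself, so $\bigl|\det(\partial B/\partial x')\bigr|\equiv1$ and, by continuity on the connected domain, $\det(\partial B/\partial x')=\kappa$ is a constant in $\{+1,-1\}$. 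Applying the change of variables $B$ (coordinate by coordinate) to the joint density $\ell_{1h}(x_{1,t+h}|x_{1,t})\,\ell_{2h}(x_{2,t+h}|x_{2,t})$ of $(x_t,x_{t+h})$ and matching it with the competing product $\tilde\ell_{1h}\tilde\ell_{2h}$, the unit Jacobian cancels, and writing $x=(x_1,x_2)$, $v=(v_1,v_2)$ we obtain
\begin{equation}\label{eq:transmatch}
\log\ell_{1h}(x_1|v_1)+\log\ell_{2h}(x_2|v_2)=\log\tilde\ell_{1h}\bigl(B_1(x)\,|\,B_1(v)\bigr)+\log\tilde\ell_{2h}\bigl(B_2(x)\,|\,B_2(v)\bigr)
\end{equation}
for all $x,v\in[0,1]^2$ and all $h\ge1$.

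Next I differentiate \eqref{eq:transmatch}. With $s_{jh}(u|w)\doteq\partial_u\log\ell_{jh}(u|w)$, and $\tilde s_{jh}$ the analogue for $\tilde\ell_{jh}$, differentiating in $x_1$ and in $x_2$ produces a linear system in $\bigl(\tilde s_{1h}(B_1(x)|B_1(v)),\tilde s_{2h}(B_2(x)|B_2(v))\bigr)$ whose coefficient matrix is $(\partial B/\partial x')^{\top}$; inverting by the cofactor formula (legitimate since $\det(\partial B/\partial x')=\kappa\ne0$) gives
\begin{equation}\label{eq:scoreid}
\kappa\,\tilde s_{1h}\bigl(B_1(x)\,|\,B_1(v)\bigr)=s_{1h}(x_1|v_1)\,\partial_{x_2}B_2(x)-s_{2h}(x_2|v_2)\,\partial_{x_1}B_2(x),
\end{equation}
together with the companion identity for $\tilde s_{2h}$, which instead features the partials of $B_1$. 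For fixed $x$ and $h$ the left side of \eqref{eq:scoreid} depends on $v$ only through the \emph{scalar} $B_1(v)$, whereas the right side equals $c_1(x)\,s_{1h}(x_1|v_1)+c_2(x)\,s_{2h}(x_2|v_2)$ with coefficients $c_i(x)$ not depending on $v$. Letting $v$ run over a level set of $B_1$, along which the left side is constant, and differentiating once more, then combining with the companion identity and with $\det(\partial B/\partial x')=\kappa$, one extracts that the products $\partial_{x_1}B_1\cdot\partial_{x_1}B_2$ and $\partial_{x_2}B_1\cdot\partial_{x_2}B_2$ vanish identically on $[0,1]^2$ — the alternative being a nontrivial linear relation between the families $\{s_{1h}(\,\cdot\,|v_1):v_1,h\}$ and $\{s_{2h}(\,\cdot\,|v_2):v_2,h\}$, which is excluded by the linear independence hypothesis. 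Checking the four resulting sign cases against $\partial_{x_1}B_1\,\partial_{x_2}B_2-\partial_{x_2}B_1\,\partial_{x_1}B_2=\kappa\ne0$ discards the two that force the determinant to vanish, leaving: at every point of $[0,1]^2$ the Jacobian $\partial B/\partial x'$ is either diagonal (with $\partial_{x_2}B_1=\partial_{x_1}B_2=0$) or antidiagonal (with $\partial_{x_1}B_1=\partial_{x_2}B_2=0$).

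To conclude, the set where $\partial B/\partial x'$ is diagonal and the set where it is antidiagonal are disjoint (a nonsingular matrix cannot be both), closed, and cover the connected set $[0,1]^2$; hence one of them is all of $[0,1]^2$. In the first case $\partial_{x_2}B_1\equiv\partial_{x_1}B_2\equiv0$, so $B(x_1,x_2)=(B_1(x_1),B_2(x_2))$; in the second, $B(x_1,x_2)=(B_1(x_2),B_2(x_1))$, a permutation of indices composed with coordinatewise maps. Either way each $B_j$ is a $C^1$ (hence monotone) bijection of $[0,1]$ carrying Lebesgue measure to itself, so $B_j=\mathrm{id}$ or $B_j(u)=1-u$ — precisely the marginal standardizations admitted in Definition 3. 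Thus $\tilde A=A\circ B^{-1}$ differs from $A$ only by a permutation of indices and coordinatewise monotone margin‑preserving transformations, and $\tilde g_1,\tilde g_2$ are then pinned down from $\tilde A$ and the law of $(y_t)$ by the one‑dimensional uniqueness of the normalized nonlinear autoregressive representation recalled at the start of Section 3; hence $(A,g_1,g_2)$ is essentially unique.

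The step I expect to be hardest is the passage, inside \eqref{eq:scoreid}, from the observation that its left side factors through the one‑dimensional quantity $B_1(v)$ to the vanishing of the off‑diagonal products of $\partial B/\partial x'$. Making this rigorous requires differentiating along level curves of $B_1$ and using the exact meaning of linear independence of the two score families — transversality of their spans taken over both the conditioning value $w$ and the lag $h$ — and one must be careful that the competing transitions $\tilde\ell_{jh}$ are \emph{not} assumed to satisfy the hypothesis, so every non‑degeneracy must be imported from the true $\ell_{jh}$; some extra smoothness (e.g.\ $C^2$ transitions whose scores are nowhere identically degenerate) is likely needed to justify the chain‑rule manipulations and the level‑set integration.
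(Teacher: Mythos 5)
Your route is genuinely different from the paper's. The paper works with the demixing map $C=A^{-1}$ and the \emph{observable} transition $\ell^*_h(y_{t+h}|y_t)$: it first identifies $\log\left|\det\left(\partial C/\partial y'\right)\right|$ by letting $h\to\infty$ (ergodicity makes the conditional transitions converge to the uniform marginal density, equal to $1$), then differentiates the remaining identifiable quantity $\Lambda_h(y;z)$ with respect to $y$ and uses the linear independence of the two score families to conclude that $\partial C(y)/\partial y'\,D$ is identifiable for some constant invertible matrix $D$; integrating gives $C$ up to a linear affine transformation, and the residual linear ambiguity is removed by the second-order argument already used for Proposition 8. You instead compare two candidate representations directly through the transfer map $B=\tilde A^{-1}\circ A$, obtain $\det(\partial B/\partial x')\equiv\pm1$ from measure preservation (thereby dispensing with the $h\to\infty$ ergodic step), and aim to show that $\partial B/\partial x'$ is everywhere diagonal or everywhere antidiagonal, which pins $B$ down to a permutation composed with coordinatewise maps equal to $u\mapsto u$ or $u\mapsto 1-u$. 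Your endgame (connectedness plus the classification of Lebesgue-preserving monotone bijections of $[0,1]$) is clean and, if anything, more self-contained than the paper's reduction to the linear case.

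The gap is exactly where you placed it. Passing from ``the left side of the score identity depends on $v$ only through the scalar $B_1(v)$'' to the vanishing of the off-diagonal products of $\partial B/\partial x'$ is not automatic: a combination $a\,s_{1h}(x_1|v_1)+b\,s_{2h}(x_2|v_2)$ with $a,b\neq0$ can be constant along a level curve of $B_1$ for a given $h$ (in the Gaussian AR(1) illustration the scores are affine in the conditioning value, so the level curves can be straight lines that match a single such combination); only the joint variation over $h$ and $v$, i.e.\ the linear independence hypothesis, excludes this, and your argument also needs differentiability of the scores in the conditioning argument, which is not among the stated hypotheses. This implication is precisely the step that the paper's basis/dimension argument is designed to carry (and is itself stated rather loosely there). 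Until it is written out, your text is an outline of a valid alternative strategy rather than a complete proof.
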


\textbf{Proof:} See Appendix A.6. \\

Therefore, there is no identification issue if the sources are independent with different dynamics, even in the Gaussian framework. As an illustration, let us consider Gaussian AR(1) sources: $x_{j,t}=\rho_jx_{j,t-1}+\sigma_j\varepsilon_{j,t}$. The conditional transition at horizon $h$ is Gaussian $N(\rho^hx_{j,t},\eta^2_{j,h})$, where $\eta^2_{j,h}=\sigma_j^2\frac{1-\rho^{2h}_j}{1-\rho^2_j}$. Then we have:
\begin{equation*}
	\frac{\partial\log \ell_{jh}}{\partial u} (u|v) = - \frac{u}{\eta^2_{j,h}} + \frac{\rho^h_jv}{\eta^2_{j,h}}, j=1,2,
\end{equation*}
and the condition of Proposition 8 is satisfied if and only if $\rho_1 \neq \rho_2$. 

\subsection{Impulse Response Functions}

Under the condition of Proposition 8, $\varepsilon_t = \begin{bmatrix}
	\varepsilon_{1,t} \\
	\varepsilon_{2,t}
\end{bmatrix}$ is a Gaussian nonlinear innovation defined in a unique way. Therefore, the associated IRF for $y_t$ are also identifiable. They are easily deduced from the IRF performed separately on $x_{1,t}$ and $x_{2,t}$. More precisely we define the shocked and unshocked trajectories of the sources with respective shocks $\delta_1$ and $\delta_2$. These are $x_{1,t+h}$, $x_{2,t+h}$, $x^{(\delta_1)}_{1,t+h}$ and  $x^{(\delta_2)}_{2,t+h}$. Then the shocked and unshocked trajectories of $y_t$ are: 
\begin{equation*}
	\begin{split}
		y_{t+h}^{(\delta)} = A\begin{bmatrix}
			x^{(\delta_1)}_{1,t+h}\\x^{(\delta_2)}_{2,t+h}
		\end{bmatrix} \ \ \text{and} \ \ 	y_{t+h} = A\begin{bmatrix}
		x_{1,t+h}\\x_{2,t+h}
	\end{bmatrix}.
	\end{split}
\end{equation*}
Due to the nonlinear transformation $A$, the $IRF(h,\delta,y_{t-1})$ is not a function of the IRF's of the two sources only. In other words, in a nonlinear dynamic framework, the notion of IRF is not coherent with respect to nonlinear aggregation. This approach is using implicitly the assumption of equal dimensions of the processes $(y_t)$ and $(x_t)$. If $(x_t)$ were with a dimension strictly larger than the dimension of $(y_t)$, the information in $\underline{y_t}$ and $\underline{x_t}$ would not be the same. 

\section{Augmented Nonlinear Autoregressive Models}

It is known that the notion of causal inference, of Granger causality, and likely also the notion of Markov process, nonlinear innovation and IRF can depend on the universe, that is on the selected set of variables. For instance, narrative shocks are often found to be predictable, suggesting the possibility of endogeneity [Leeper (1997), Romer and Romer (1997); see also Ganies et al. (2019) for an IV adjustment for endogeneity]. In particular, starting from a given universe, they could be modified if the universe is either decreased, or increased. This leads for instance to the introduction of the Factor Augmented Vector Autoregressive (FAVAR) models in the linear dynamic VAR literature [Bernanke, Boivin and Eliasz (2002)]\footnote{These augmented vector autoregressive models concern the true dynamic of the variable. They have to be distinguished from the lag augmented local projections, that are instrumental models introduced for robust inference on standard errors of the IRF [Montiel-Olea and Plagborg-Moller (2021), (2022)]}. 

\subsection{The Markov Assumption}

In fact there is a trade-off between the dimension of the system and the order of the Markov process\footnote{Assumed equal to 1 in our paper, but the discussion is the same for any fixed order $p$ [see e.g. Florens et al. (1993)].}. For instance, it is well known that a one-dimensional AR(2) process can be written as a bidimensional VAR(1) process. It is also known that the first component of a bivariate VAR(1) process satisfies an ARMA model (i.e. an AR($\infty$) model), when its dynamic is marginalized. This highlights the advantage of multivariate modelling to avoid the increased autoregressive order. \\

To understand why the Markov assumption depends on the universe, let us consider a (multivariate) process $(y_t)$, and the process augmented by another (multivariate) process ($\zeta_t$). We denote $\underline{y}_t$ (resp. $\underline{\zeta}_t$) the set of current and past values of the process $(y_t)$(resp. $(\zeta_t)$). There are three notions of Markov processes with conditions written in terms of transition densities. \\

(i) The augmented process $(y_t,\zeta_t)$ is Markov with respect to the augmented universe: \\
	\begin{equation}\label{markov_1} 
		\ell(y_t,\zeta_t| \ \underline{y}_{t-1},\underline{\zeta}_{t-1}) = \ell(y_t,\zeta_t| \ y_{t-1},\zeta_{t-1}). 
	\end{equation}

(ii) The initial process $(y_t)$ is Markov with respect to its own universe: \\
\begin{equation}\label{markov_2} 
	\ell(y_t| \ \underline{y}_{t-1}) = \ell(y_t| \ y_{t-1}).
\end{equation}

(iii) The initial process $(y_t)$ is Markov with respect to the augmented universe: \\
\begin{equation}\label{markov_3} 
	\ell(y_t| \ \underline{y}_{t-1},\underline{\zeta}_{t-1})=\ell(y_t| \ y_{t-1}). 
\end{equation}

All these conditions can be rewritten in terms of conditional independence, denoted $\boldsymbol{\cdot} \indep \boldsymbol{\cdot} \ |  \ \boldsymbol{\cdot}$ [Florens et al. (1993)] as: 

(i) $(y_t,\zeta_t)$ is Markov in the augmented universe, iff $(y_t,\zeta_t) \indep (\underline{y}_{t-2},\underline{\zeta}_{t-2})  | \  y_{t-1},\zeta_{t-1}$. \\
(ii) $(y_t)$ is Markov in its universe, iff $y_t \indep \underline{y}_{t-2} | \  y_{t-1}$. \\
(iii) $(y_t)$ is Markov in its augmented universe, iff $y_t \indep (\underline{y}_{t-2},\underline{\zeta}_{t-1})  |  \ y_{t-1}$. \\

It is easily seen that, under \eqref{markov_3}, the condition \eqref{markov_2} means that process $(\zeta_t)$ does not (Granger) cause process $(y_t)$, that is a condition of dynamic exogeneity of process $(y_t)$. \\

In the previous sections, to define the nonlinear innovations, to define the IRF corresponding to the shocks on their innovation and to discuss the identification issues, we have assumed that the nonlinear VAR(1) model is well-specified, that is the Markov assumption is satisfied. This predictive assumption has now to be discussed. In particular, we have to consider the following question - in which case do we have simultaneously the joint and marginal Markov properties \eqref{markov_1},\eqref{markov_2} satisfied?\\ 

We have the following result:

\begin{proposition}
	Let us assume that the augmented process $(y_t,\zeta_t)$ is Markov of order 1 and that the process $(\zeta_t)$ does not cause $(y_t)$. Then, $(y_t)$ is Markov with respect to its own universe. 
\end{proposition}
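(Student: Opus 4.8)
The plan is to work directly from the conditional-independence reformulations given just before the statement. We are given two hypotheses: (a) the augmented process $(y_t,\zeta_t)$ is Markov of order 1 in the augmented universe, i.e. $(y_t,\zeta_t) \indep (\underline{y}_{t-2},\underline{\zeta}_{t-2}) \mid (y_{t-1},\zeta_{t-1})$, equivalently $\ell(y_t,\zeta_t \mid \underline{y}_{t-1},\underline{\zeta}_{t-1}) = \ell(y_t,\zeta_t \mid y_{t-1},\zeta_{t-1})$; and (b) $(\zeta_t)$ does not Granger-cause $(y_t)$, i.e. $\ell(y_t \mid \underline{y}_{t-1},\underline{\zeta}_{t-1}) = \ell(y_t \mid \underline{y}_{t-1})$. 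The goal is to deduce $\ell(y_t \mid \underline{y}_{t-1}) = \ell(y_t \mid y_{t-1})$, i.e. $y_t \indep \underline{y}_{t-2} \mid y_{t-1}$.

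First I would obtain the one-step transition of $(y_t)$ alone in the augmented universe by marginalizing (a) over $\zeta_t$: integrating $\ell(y_t,\zeta_t \mid y_{t-1},\zeta_{t-1})$ with respect to $\zeta_t$ gives
\begin{equation*}
\ell(y_t \mid \underline{y}_{t-1},\underline{\zeta}_{t-1}) = \ell(y_t \mid y_{t-1},\zeta_{t-1}),
\end{equation*}
so in the augmented universe $y_t$ depends on the past only through $(y_{t-1},\zeta_{t-1})$. Next, combining this with the non-causality hypothesis (b), whose left-hand side equals $\ell(y_t \mid \underline{y}_{t-1})$ — the object we care about — we get
\begin{equation*}
\ell(y_t \mid \underline{y}_{t-1}) = \ell(y_t \mid y_{t-1},\zeta_{t-1}).
\end{equation*}
The right-hand side is a function of $(y_{t-1},\zeta_{t-1})$, while the left-hand side is a function of $\underline{y}_{t-1}$ that does not involve $\zeta_{t-1}$ at all. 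Since $\zeta_{t-1}$ varies freely (it is not determined by $\underline{y}_{t-1}$ in general) while the left side is constant in it, the right side cannot actually depend on $\zeta_{t-1}$; hence $\ell(y_t \mid y_{t-1},\zeta_{t-1})$ equals $\ell(y_t \mid y_{t-1})$, and therefore $\ell(y_t \mid \underline{y}_{t-1}) = \ell(y_t \mid y_{t-1})$, which is exactly the marginal Markov property \eqref{markov_2}.

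The main obstacle is making the ``freely varying $\zeta_{t-1}$'' step rigorous rather than heuristic: one must argue that the equality $\ell(y_t \mid \underline{y}_{t-1}) = \ell(y_t \mid y_{t-1},\zeta_{t-1})$ holds for (almost) all values of $\zeta_{t-1}$ in the support of the conditional law of $\zeta_{t-1}$ given $\underline{y}_{t-1}$, and then conclude that the conditional density $\ell(y_t\mid y_{t-1},\zeta_{t-1})$ is $\zeta_{t-1}$-almost-everywhere equal to its $\zeta_{t-1}$-average, namely $\int \ell(y_t \mid y_{t-1},\zeta_{t-1})\,\ell(\zeta_{t-1}\mid \underline{y}_{t-1})\,d\zeta_{t-1} = \ell(y_t \mid \underline{y}_{t-1})$. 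The cleanest way is the conditional-independence calculus directly: from the marginalized form, $y_t \indep (\underline{y}_{t-2},\underline{\zeta}_{t-1}) \mid (y_{t-1},\zeta_{t-1})$; from (b), $y_t \indep \underline{\zeta}_{t-1}\mid \underline{y}_{t-1}$; a standard contraction/weak-union manipulation of these two statements yields $y_t \indep (\underline{y}_{t-2},\zeta_{t-1})\mid y_{t-1}$, and dropping $\zeta_{t-1}$ by the decomposition property gives $y_t \indep \underline{y}_{t-2}\mid y_{t-1}$, which is \eqref{markov_2}. Either route works; the integral-identification route is the more elementary one to write out and is what I would present, flagging only that all equalities are understood up to null sets with respect to the relevant conditional laws.
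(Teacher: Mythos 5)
Your proof follows essentially the same route as the paper's: marginalize the augmented Markov property over $\zeta_t$ to obtain $\ell(y_t\mid\underline{y}_{t-1},\underline{\zeta}_{t-1})=\ell(y_t\mid y_{t-1},\zeta_{t-1})$, invoke non-causality to reduce this to $\ell(y_t\mid y_{t-1})$, and then integrate out $\underline{\zeta}_{t-1}$ conditionally on $\underline{y}_{t-1}$. The only difference is that the paper uses non-causality directly in the form $\ell(y_t\mid y_{t-1},\zeta_{t-1})=\ell(y_t\mid y_{t-1})$ and thus skips the ``freely varying $\zeta_{t-1}$'' collapse, which you correctly identify as the step requiring justification (and which does rely on the paper's standing positivity/support assumptions, since it can fail when $\zeta_{t-1}$ is degenerate given $\underline{y}_{t-1}$).
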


\begin{proof}
	Indeed, we have:
	\begin{equation*}
		\begin{split}
			\ell(y_t| \ \underline{y}_{t-1},\underline{\zeta}_{t-1}) & = \ell(y_t | \ y_{t-1},\zeta_{t-1}) = \ell(y_t|y_{t-1})\\ 
		\end{split}
	\end{equation*}
Then, reintegrating with respect to $\underline{\zeta}_{t-1}$ both sides, we get:
\begin{equation*}
	\ell(y_t| \ \underline{y}_{t-1}) = \ell(y_t | \ y_{t-1}).
\end{equation*}
that is condition \eqref{markov_2}. 
\end{proof}

This noncausality condition is sufficient, but not necessary [see Florens et al. (1993) for more detailed analysis]. This type of condition will also appear when the IRFs are obtained by ``state dependent" local projection [Goncalves et al. (2022), (2023)]. 

\subsubsection{Testing the Markov Assumption}

The discussion in Section 5.3.1 shows that the Markov assumption is at the core of the definitions of IRFs and their identification. Therefore, in practice, this assumption would have to be tested (which is rarely the case, even in linear dynamic VARs). In the identified case, it is equivalent to test that the nonlinear innovations are independent white noises. This can be done by applying tests of independence and of strong white noise to the series of nonlinear residuals. These tests have to check for the absence of any type of cross-sectional and serial nonlinear correlation (not only linear ones). When there is an identification issue, tests of conditional independence can be used based on the interpretation of the Markov assumption in terms of conditional independence. This condition can be written in terms of appropriate moments. For instance, the Markov condition [ii] is equivalent to the covariance conditions:
\begin{equation}
	\begin{split}
	&\mathbb{E}[(a(y_t)b(y_{t-2})c(y_{t-1})] = \mathbb{E}\{\mathbb{E}[a(y_t)]\mathbb{E}[b(y_{t-2})]c(y_{t-1})\} \\ 
	\iff&\mathbb{E}[Cov(a(y_t),b(y_{t-2}))c(y_{t-1})]=0,
	\end{split}
\end{equation}
for any integrable functions $a,b,c$. Such conditions can be checked by using pormanteau test statistics based on the sample covariance counterparts [Gourieroux and Jasiak (2023a),(2025), Velasco (2023)].

\section{Concluding Remarks}

This paper has extended the notions of nonlinear autoregressive representation, nonlinear innovations and impulse response functions to the multivariate nonlinear dynamic framework.  We analyzed the identification issue for these innovations and IRFs. In this respect, the linear Gaussian VAR models and the unconstrained Markov model appear as special cases in which the identification issues are especially significant. The main identification issues can disappear when nonlinearities and/or independent latent sources with different dynamics exist. Such analysis can be applied to any nonlinear dynamic system. Can we use them as tools for economic policy? Following Bernanke (1986), p. 52-55, ``Structures of shocks are primitive exogenous forces (i.e. the innovations), that are independent of each other and economically meaningful". Therefore, in practice, the estimation of the model, of the innovation, of the IRFs have to be followed by a structural step to interpret economically some of these innovations, not necessarily all of them. All the analyses in this paper have been developed for time series with multivariate observations indexed by time $t$. It would be interesting to extend this analysis to series that are doubly indexed: [1] By time and maturity when considering the term structure of interest rates or of volatilities, [2] indexed by the localization for spatial processes, or [3] indexed by a pair of individuals as in networks. These extensions are left for further research.  

\newpage
	
\section{References}

\setstretch{1}

Almeida, L. (2003). MISEP—Linear and Nonlinear ICA Based on Mutual Information. The Journal of Machine Learning Research, 4, 1297-1318.\\

Antolín-Díaz, J., and  J., Rubio-Ramírez. (2018). Narrative Sign Restrictions for SVARs. \textit{American Economic Review}, 108, 2802-2829.\\

Axler, S., Bourdon, P. and W., Ramey (2001): ``Harmonic Function Theory", 2nd Ed., Graduate
Texts in Mathematics, Springer.\\

Barzilai, J., and M., Borwein (1988). Two-Point Step Size Gradient Methods. IMA Journal of Mathematical Analysis, 8, 141-148. \\

Belouchani, A., Meraim, K., Cardoso, S., and E., Moulines (1997): A Blind Source Separation Technique Based on Second-Order Statistics", IEEE Trans on Signal Processing, 45, 434-444. \\

Bernanke, B. (1986). Alternative Explanations of the Money-Income Correlation. \textit{Carnegie Rochester Conference Series on Public Policy}. 25, 49-99. \\

Bernanke, B., Boivin, J., and P., Eliasz. (2002). Measuring the Effects of Monetary Policy: A Factor Augmented Vector Autoregressive (FAVAR) Approach. \textit{Quarterly Journal of Economics}, 120, 387-422. \\

Blanchard, O., and D., Quah. (1989). The Dynamic Effects of Aggregate Demand and Supply Disturbances. \textit{American Economic Review}, 73, 655-673. \\ 

Borkovec, M., and C., Kluppelberg. (2001). The Tail of the Stationary Distribution of an Autoregressive Process with ARCH(1) Errors. \textit{Annals of Applied Probability},  11, 1220-1241. \\

Christiano, L. (2012). Christopher A. Sims and Vector Autoregressions. \textit{The Scandinavian Journal of Economics}, 114, 1082-1104. \\ 

Comon, P. (1994). Independent Component Analysis, a New Concept?. \textit{Signal Processing}, 36, 287-314. \\ 

Cox, J., Ingersoll, J., and S., Ross. (2005). A Theory of the Term Structure of Interest Rates. \textit{Theory of Valuation},129-164. \\ 

Darmois, G. (1953). Analyse Generale de Liaisons Stochastiques. Rev Inst Internat Stat, 21, 2-8. \\

Dinh, L., Krueger, D., and Y., Bengio (2015): ``NICE: Nonlinear Independent Components Estimation", ICLR. \\

Eriksson, J., and V., Koivunen. (2004). Identifiability, Separability, and Uniqueness of Linear ICA models. \textit{IEEE Signal Processing Letters}, 11, 601-604. \\

Florens, J.P., Mouchart, M. and J.M., Rolin. (1993). Noncausality and Marginalization of Markov Processes. \textit{Econometric Theory}, 9, 239-260. \\

Gallant, A., Rossi, P., and  G., Tauchen. (1993). Nonlinear Dynamic Structures. \textit{Econometrica}, 61, 871-907.\\

Ganics, G., Inoue, A., and B., Rossi. (2021). Confidence Intervals for Bias and Size Distortion in IV and Local Projections-IV Models. \textit{Journal of Business \& Economic Statistics}, 39(1), 307-324. \\ 

Goncalves, S., Herrera, A., Kilian, L., and E., Pesavento. (2021). Impulse Response Analysis for Structural Dynamic Models with Nonlinear Regressors. \textit{Journal of Econometrics}, 225, 107-130. \\ 

Goncalves, S., Herrera, A., Kilian, L., and E., Pesavento. (2022). When Do State Dependent Local Projections Work?, DP McGill University. \\

Gorodnichenko, Y., and B., Lee. (2020). Forecast Error Variance Decompositions with Local Projections. \textit{Journal of Business and Economic Statistics}, 38(4), 921-933.\\

Gouri\'eroux, C.,  and J., Jasiak. (2005). Nonlinear Innovations and Impulse Responses with Application to VaR Sensitivity. \textit{Annales d'Economie et de Statistique}, 78, 1-31. \\

Gouri\'eroux, C.,  and J., Jasiak. (2022). Nonlinear Forecasts and Impulse Responses for Causal-Noncausal (S)VAR Models. \textit{arXiv preprint arXiv:2205.09922}. \\ 

Gouri\'eroux, C.,  and J., Jasiak. (2023a). Generalized Covariance Estimator. \textit{Journal of Business and Economic Statistics}, 41, 1315-1327. \\ 

Gouri\'eroux, C., and J., Jasiak. (2023b). Dynamic Deconvolution of Independent Autoregressive Sources. \textit{Journal of Time Series Analysis}, 44, 151-180. \\ 

Gouri\'eroux, C., and J. Jasiak. (2025). Generalized Covariance Based Inference for Models Partially Identified from Independence Restrictions, \textit{Journal of Time Series Analysis}, 46, 300-324. \\

Gouriéroux, C., Jasiak, J., and R., Sufana. (2009). The Wishart Autoregressive Process of Multivariate Stochastic Volatility. Journal of Econometrics, 150(2), 167-181.\\

Gouri\'eroux, C., and Q., Lee. (2025). Forecast Relative Error Decompositions with Application to Cyber Risk. arXiv:2406.17708.\\

Gouri\'eroux, C., and A., Monfort. (1997). Simulation Based Econometric Methods, Oxford University Press. \\

Gouri\'eroux, C., Monfort, A.,  and J.P., Renne. (2017). Statistical Inference for Independent Component Analysis: Application to Structural VAR Models. \textit{Journal of Econometrics}, 196, 111-126. \\ 

Guay, A. (2021). Identification of Structural Vector Autoregressions Through Higher Unconditional Moments. \textit{Journal of Econometrics}, 225, 27-46.\\

Hall, B. (2015). Lie Groups, Lie Algebras and Representations: An Elementary Introduction", Graduate Texts in Mathematics, 222, (2nd Edition), Springer. \\

Hamilton, J. (2003). What is an Oil Shock?. \textit{Journal of Econometrics}, 113, 363-398. \\

Hyvarinen, A., and H., Morioka. (2017). Nonlinear ICA of Temporally Dependent Stationary Sources. iin Proceedings of the 20th International Conference on Artificial Intelligence and Statistics, 54, 460-469. \\ 

Hyvarinen, A., and P., Pajunen. (1999). Nonlinear Independent Component Analysis: Existence and Uniqueness Results. \textit{Neural Networks}, 12, 429-439.\\ 

Hyvärinen, A., and E., Oja (2001). Independent Component Analysis: Algorithms and Applications. Neural Networks, 13(4-5), 411-430.\\

Hyvarinen, A., Sasaki, H., and R., Turner (2019). Nonlinear ICA Using Auxiliary Variables and Generalized Contrastive Learning. in Proceedings of the 22nd International Conference on Artificial Intelligence and Statistics, 89, 859 -868. \\ 

Isakin, M.,  and P., Ngo. (2020). Variance Decomposition Analysis for Nonlinear Economic Models. \textit{Oxford Bulletin of Economics and Statistics}, 82, 1362-1374. \\

Jutten, C., and J., Herault. (1991). Blind Separation of Sources Part 1: An Adaptive Algorithm Based on Neuromimetic Architecture. Signal Processing. 94. 1-10. \\

Karlin, S., and H., Taylor. (1981). A Second Course in Stochastic Processes. \textit{Elsevier}. \\ 

Karlsen, M. and D., Tjostheim. (2001). Nonparametric Estimation on Null Recurrent Time Series. \textit{Annals of Statistics}, 29, 372-416. \\

Kilian, L., and D., Murphy. (2012). Why Agnostic Sign Restrictions are Not Enough: Understanding the Dynamics of Oil Market VAR Models. \textit{Journal of the European Economic Association}, 10, 1166-1188.\\

Koop, G., Pesaran, H.,  and S., Potter. (1996). Impulse Response Analysis in Nonlinear Multivariate Models. \textit{Journal of Econometrics}, 74, 119-147. \\

Leeper, E. (1997). Narrative and VAR Approaches to Monetary Policy: Common Identification Problems. \textit{Journal of Monetary Economics}, 40, 641-657. \\

Ling, S. (2007). A Double AR(p) Model: Structure and Estimation. \textit{Statistica Sinica}, 17, 161-175. \\

Montiel Olea, J., and M., Plagborg‐Møller. (2021). Local Projection Inference is Simpler and More Robust Than You Think. \textit{Econometrica}, 89, 1789-1823. \\ 

Montiel Olea, J., and M., Plagborg‐Møller. (2022). Corrigendum: Local Projection Inference is Simpler and More Robust Than You Think. Online Manuscript. \\ 

Nakamura, E., and J., Steinsson. (2018). Identification in Macroeconomics. \textit{Journal of Economic Perspectives}, 32, 59-86.\\

Plagborg‐Møller, M., and C., Wolf. (2021). Local Projections and VARs Estimate the Same Impulse Responses. \textit{Econometrica}, 89, 955-980. \\

Polyak, B., and A., Juditsky (1992). Acceleration of Stochastic Approximation by Averaging. SIAM Journal on Control and Optimization, 30(4), 838-855.\\

Ramey, V. (2016). Macroeconomic Shocks and their Propagation. \textit{Handbook of Macroeconomics}, 2, 71-162. \\ 

Roberts, S., and R., Everson (2001). Independent Component Analysis: Principles and Practice. Cambridge University Press. \\

Romer, C.,  and  D., Romer. (1994). Monetary Policy Matters. \textit{Journal of Monetary Economics}, 34, 75-88.\\ 

Romer, C.,  and  D., Romer. (1997). Identification and the Narrative Approach: A Reply to Leeper. \textit{Journal of Monetary Economics}, 40, 659-665.\\ 

Rosenblatt, M. (1952). Remarks on a Multivariate Transformation. The Annals of Mathematical Statistics, 23, 470-472.\\

Rubio-Ramirez, J. , Waggoner, D., and T., Zha. (2010). Structural Vector Autoregressions: Theory of Identification and Algorithms for Inference. \textit{The Review of Economic Studies}, 77, 665-696.\\

Sims, C. (1980). Macroeconomics and Reality. \textit{Econometrica}, 48, 1-48. \\ 

Smale, S. (1959). Diffeomorphisms of the 2-Sphere. \textit{Proceedings of the American Mathematical Society}, 10, 621-626. \\

Tong, L., Liu, R., Soon, V., and Y., Huang. (1991). Indeterminacy and Identifiability of Blind Identification. IEEE Transactions on Circuits and Systems, 38(5), 499-509.\\

Tweedie, R. (1975). Sufficient Conditions for Ergodicity and Recurrence of Markov Chains on a General State Space. \textit{Stochastic Processes and their Applications}, 3, 385-403. \\ 

Uhlig, H. (2005). What are the Effects of Monetary Policy on Output? Results from an Agnostic Identification Procedure. \textit{Journal of Monetary Economics}, 52, 381-419.\\

Velasco, C. (2023). Identification and Estimation of Structural VARMA Models Using Higher Order Dynamics. \textit{Journal of Business \& Economic Statistics}, 41, 815-832.\\

Weiss, A. (1984). ARMA Models with ARCH Errors. \textit{Journal of Time Series Analysis}, 5, 129-143. \\

Zhu, H., Zhang, X., Liang, X., and Y., Li. (2017). On a Vector Double Autoregressive Model. \textit{Statistics \& Probability Letters}, 129, 86-95.\\

\newpage
	
\appendix

\begin{appendices}
	\setstretch{1}

\section{Identification of Sources}
\setstretch{1}

\subsection{Proof of Proposition 1}
The proof extends the inversion method given in \eqref{fn_g} to the multivariate case [see e.g. Rosenblatt (1952), Gouri\'eroux and Monfort (1997), Hyvarinen and Pajunen (1999)]. Let us denote $\underline{y_{i-1,t}}=(y_{1,t},...,y_{i-1,t})$ and define: 
\begin{equation*}
	F_{i,t}(y|\underline{y_{i-1,t}},\underline{y_{t-1}})=\mathbb{P}[y_{i,t}<y|\underline{y_{i-1,t}},\underline{y_{t-1}}], i=1,...,n.
\end{equation*}
Then we can define: 
\begin{equation*}
	\begin{split}
		\varepsilon_{i,t} & = \Phi^{-1} Q_{i,t}[y_{i,t}|\underline{y_{i-1,t}},\underline{y_{t-1}}] \\
		& = \Phi^{-1}F^{-1}_{i,t}[y_{i,t}|\underline{y_{i-1,t}},\underline{y_{t-1}}], i = 1,...,n, 
	\end{split}
\end{equation*}
where  $F_{i,t}$, $Q_{i,t}$ denote the appropriate conditional c.d.f. and quantile functions for $y_{i,t}$ and $\Phi$ is the c.d.f of the standard normal distribution. The conditional distribution of $\varepsilon_{i,t}$ given $\underline{y_{i-1,t}},\underline{y_{t-1}}$ is N(0,1), independent of the conditioning set. Moreover, $\varepsilon_{1,t},...,\varepsilon_{i-1,t}$ are functions of $\underline{y_{i-1,t}},\underline{y_{t-1}}$. Therefore $\varepsilon_{i,t}$ is also independent of $\underline{\varepsilon_{i-1,t}}=(\varepsilon_{1,t},...,\varepsilon_{i-1,t})$ and $\underline{y_{t-1}}$. We deduce that, conditional on $\underline{y_{t-1}}$, the vector $\varepsilon_t$ is multivariate Gaussian $N(0,Id)$. QED.

\subsection{Irregular Transformations Keeping Invariant N(0,1)}

The multiplicity of nonlinear innovations is even larger if we do not impose the differentiability and/or monotonicity of the transformation. To illustrate this point, let us consider the one-dimensional framework, assume $\varepsilon\sim N(0,1)$ and define:
\begin{equation*}
	y = \begin{cases}
		\varepsilon, \ \text{if} \ |\varepsilon|<c, \\ 
		- \varepsilon, \ \text{if} \ |\varepsilon|>c,
	\end{cases}
\end{equation*}
where $c$ is a given positive threshold. Then, by the symmetry of the normal distribution, we see that $y$ also follows a standard normal $N(0,1)$.

\subsection{Proof of Proposition 2}

The nonlinear autoregressive process: $y_t=g(y_{t-1};\varepsilon_t)$ can be equivalently written as $y_t=h(y_{t-1};u_t)$, where the components of $u_t$ are i.i.d. uniform on [0,1] and $h$ is another continuously differentiable function w.r.t. $u_t$, with continuously differentiable inverse. Let us now consider another nonlinear autoregressive representation:
\begin{equation*}
	y_t = h(y_{t-1};u_t) = \tilde{h}(y_{t-1};\tilde{u}_t).
\end{equation*}
Then we have: $\tilde{u}_t=\tilde{h}^{-1}(y_{t-1};h(y_{t-1};u_t))=T(u_t)$, say. Indeed $\tilde{u}_t$ cannot depend on $y_{t-1}$ by the independence between $\tilde{u}_t$ and $y_{t-1}$. Then necessarily $\tilde{u}_t$ is a continuously differentiable function of $u_t$ with continuously differentiable inverse. Let us now derive the distribution of $\tilde{u}_t$. By the Jacobian formula, this distribution is continuous with density $\tilde{f}$ on $[0,1]^n$ such that: 
\begin{equation}
	f(u) = \left|\det\frac{\partial T(u)}{\partial u'}\right|\tilde{f}[T(u)], \ \text{on} \ [0,1]^n,
\end{equation}
where $f$ is the density of $u$. Since both $f$ and $\tilde{f}$ are uniform densities on $[0,1]^n$, we deduce:
\begin{equation*}
	\left|\det\frac{\partial T(u)}{\partial u'}\right| = 1, \ \forall u \ \text{in} \ [0,1]^n.
\end{equation*}

Moreover, by the global invertibility of the transformation $T$, this condition implies either $\det\frac{\partial T(u)}{\partial u'}=1, \ \forall u \ \text{in} \ [0,1]^n$, or $\det\frac{\partial T(u)}{\partial u'}=-1, \ \forall u \ \text{in} \ [0,1]^n$. The result follows. \qed

\subsection{Proof of Proposition 4}

\textbf{a) Existence of nonlinear causal autoregressive representation}\\

For ease of exposition, let us consider a bivariate  Markov process. By analogy to the recursive  causal approach for defining the shocks, we start from the first component.

i) Let $F_1[y_1 | Y_{T-1}]$ denote the conditional c.d.f. of $Y_{1,T}$ given $Y_{T-1}$ and define:

\begin{equation}
v_{1,T} = F_1[Y_{1,T}|Y_{T-1}], \; \forall T.
\end{equation}

 Then, $v_{1,T}$ follows a uniform distribution $U_{[0,1]}$ for any $Y_{T-1}$. In particular, $v_{1,T}$ is independent of $Y_{T-1}$.\\
 
ii) Let $F_2[y_2 | Y_{1,T}, Y_{T-1}]$ denote the conditional c.d.f. of $Y_{2,T}$ given
$ Y_{1,T}, Y_{T-1}$, and define: 
\begin{equation}
v_{2,T} = F_2[Y_{2,T} | Y_{1,T}, Y_{T-1}], \; \forall T.
\end{equation}
It follows that $ v_{2,T}$ follows a uniform distribution on [0,1], for any $Y_{1,T}, Y_{T-1}$,
or equivalently for any $v_{1,T}, Y_{T-1}$. Therefore, $ v_{2,T}$ is independent of $v_{1,T}, Y_{T-1}$.\\

iii) By inverting equations (a.7)-(a.8), we obtain a nonlinear autoregressive representation: $Y_T = a( Y_{T-1}, v_T)$, where the $v_T$'s are i.i.d. such that $(v_{1,T}), (v_{2,T})$ are independent. \\

Alternatively, one can use the ordering: $Y_{2,T}$ followed by $Y_{1,T}$ given $Y_{2,T}$. More generally, for any invertible nonlinear transformation $Y_T^* = c(Y_T)$, the above approach can be applied first to $Y_{1,T}^*$ and next to  $Y_{2,T}^*$ conditional on  $Y_{1,T}^*$. Therefore any Markov process can be written as a nonlinear causal autoregressive 
process and the above discussion shows that this autoregressive representation is not unique.\\

 \textbf{b) Identification of the nonlinear causal autoregressive representation}\\

It is equivalent to consider the identification of function $a$ or the identification of nonlinear innovations.
Let us now describe in detail all the nonlinear causal innovation identification issues. First, we can assume that $v_{1,T}, v_{2,T}$
are i.i.d. and independent of one another with uniform distributions on [0,1]. We need to find out if there exists another pair of variables $w_{1,T}, w_{2,T}$, which are independent  and uniformly distributed such that:

$$a(Y_{T-1}, w_T) = a(Y_{T-1}, v_T), \;\; \forall\;Y_{T-1},$$

 or, equivalently, a pair of variables $w_T$ that satisfy a (nonlinear) one-to-one relationship with $v_T$. Let $w=b(v)$ denote this relationship. We have the following Lemma:

\begin{lemma}
	Let us assume that $b$ is continuous, twice differentiable and that the Jacobian matrix $\partial b(v)/\partial v'$ has distinct eigenvalues. Then, the components of $b$ are harmonic functions, that is:
	
	$$\frac{\partial^2 b_j(v)}{\partial v_1^2} + \frac{\partial^2 b_j(v)}{\partial v_2^2} = 0, \;\; j=1,2.$$
\end{lemma}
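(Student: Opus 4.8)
The plan is to exploit the two constraints on the map $w = b(v)$: first, that $b$ is a bijection of $[0,1]^2$ preserving the uniform distribution, so by Proposition 2 its Jacobian has determinant $\pm 1$ everywhere; and second, that $b$ should preserve the \emph{independence} structure of the coordinates, i.e. $w_1 = b_1(v)$ and $w_2 = b_2(v)$ must again be independent uniforms when $v_1, v_2$ are independent uniforms. These two facts together are very rigid. I would first translate the volume-preservation condition into the differential identity
\begin{equation*}
	\det\frac{\partial b(v)}{\partial v'} = \frac{\partial b_1}{\partial v_1}\frac{\partial b_2}{\partial v_2} - \frac{\partial b_1}{\partial v_2}\frac{\partial b_2}{\partial v_1} = \pm 1,
\end{equation*}
and then differentiate it (this is where twice-differentiability is used) to extract relations among the second partials.

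Next I would bring in the independence/marginal-uniformity constraint. Because each $w_j$ must itself be marginally uniform after integrating out the other direction, and because the joint law must factor, one gets that the conditional distribution of $w_2$ given $w_1$ is uniform and does not involve $w_1$; written through $b$ this forces extra structure on how $b_2$ depends on $v$ relative to $b_1$. The cleanest route is to observe that a measure-preserving bijection that also sends the independent/uniform structure to an independent/uniform structure is, up to the determinant sign, a (possibly reparametrised) rotation in the Gaussian coordinates — this is exactly the picture set up before Proposition 3, where such maps are $\varepsilon \mapsto Q(\|\varepsilon\|^2)\varepsilon$. So I expect $b$ (in the Gaussian normalization) to look locally like an orthogonal transformation, i.e. a complex-analytic or anti-analytic map in the plane $v = (v_1,v_2) \cong v_1 + i v_2$. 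A holomorphic (or anti-holomorphic) map has harmonic real and imaginary parts, which is exactly the claimed conclusion $\partial^2 b_j/\partial v_1^2 + \partial^2 b_j/\partial v_2^2 = 0$.

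Concretely, the key steps in order: (1) record $\det(\partial b/\partial v') = \pm 1$ from Proposition 2; (2) use the independence constraint to show the off-diagonal coupling forces the Jacobian, at each point, to lie in $\mathbb{R}_{>0}\cdot SO(2)$ locally — i.e. the Jacobian is a scalar multiple of a rotation (conformality), combined with determinant $\pm 1$; (3) conclude $b$ satisfies the Cauchy–Riemann equations (or the anti-holomorphic version), namely $\partial b_1/\partial v_1 = \partial b_2/\partial v_2$ and $\partial b_1/\partial v_2 = -\partial b_2/\partial v_1$ (with signs flipped in the orientation-reversing case); (4) differentiate the Cauchy–Riemann equations once more and cross-cancel using equality of mixed partials to obtain Laplace's equation for each $b_j$. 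The hypothesis that $\partial b(v)/\partial v'$ has distinct eigenvalues is what lets step (2) go through cleanly: it rules out the degenerate (e.g. shear/Jordan-block) possibilities that are volume-preserving but not conformal, and pins the Jacobian down to the diagonalizable rotation-type form rather than a general $SL(2)$ element.

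The main obstacle is step (2): passing from "volume-preserving plus independence-preserving" to "conformal" (Jacobian $\in \mathbb{R}_{>0}\cdot O(2)$). Volume preservation alone only gives $\det = \pm 1$, which is all of $SL^{\pm}(2)$, far larger than the conformal group; the independence constraint and the distinct-eigenvalue hypothesis must be combined delicately to cut this down. I would handle it by writing out the marginal density of $w_1$ and of $w_2$ as integrals of the constant joint density over the preimages of coordinate slices, imposing that these equal $1$ identically, differentiating in the slice parameter, and showing the resulting PDE system forces $\partial b_1/\partial v_2$ and $\partial b_2/\partial v_1$ to be tied to $\partial b_1/\partial v_1$, $\partial b_2/\partial v_2$ exactly in the Cauchy–Riemann pattern; the distinct-eigenvalue condition selects the rotation branch over the reflection-across-a-line branch that would otherwise also solve the algebraic constraints. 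Everything after that (steps 3–4) is routine differentiation.
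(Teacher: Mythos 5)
Your skeleton matches the paper's at the endpoints: $\lvert\det \partial b(v)/\partial v'\rvert = 1$ from the Jacobian formula, then (once the Jacobian is shown to be a rotation) the Cauchy--Riemann system $\partial b_1/\partial v_1 = \partial b_2/\partial v_2$, $\partial b_1/\partial v_2 = -\partial b_2/\partial v_1$, and cross-differentiation of mixed partials to reach Laplace's equation. You also correctly locate the crux at your step (2), passing from determinant-$\pm1$ Jacobians to rotation-valued ones. But your proposed mechanism for that step does not work. You want to extract additional PDEs from the requirement that $w_1$ and $w_2$ be independent with uniform marginals; however, for a vector valued in $[0,1]^2$, ``independent uniform components'' is the same statement as ``jointly uniform on $[0,1]^2$,'' which is exactly what the condition $\lvert\det \partial b(v)/\partial v'\rvert = 1$ already encodes. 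Integrating the constant joint density over preimages of coordinate slices and demanding that the marginals equal $1$ is automatically satisfied and yields no new constraint, so the independence requirement is informationally empty at this stage. Independence preservation genuinely does not imply conformality: the torus shear $w_1 = v_1 + v_2 \bmod 1$, $w_2 = v_2$ preserves both volume and the independent-uniform structure yet is not conformal; it is excluded in the lemma only by continuity on the square and by the distinct-eigenvalue hypothesis, not by independence.

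The paper closes this gap by a different, dynamical argument: the eigenvalues $\lambda_1(v), \lambda_2(v)$ of the Jacobian are continuous functions of $v$ (this is where the distinct-eigenvalue hypothesis is used) with $\lvert\lambda_1(v)\lambda_2(v)\rvert = 1$, so either they are real, in which case one of them has modulus at least $1$ everywhere and $b$ acquires an expanding direction incompatible with $b([0,1]^2)\subset[0,1]^2$, or they are complex conjugates on the unit circle. Only the second case survives, and the paper then identifies the Jacobian with a rotation matrix $\bigl(\begin{smallmatrix}\cos\theta(v) & -\sin\theta(v)\\ \sin\theta(v) & \cos\theta(v)\end{smallmatrix}\bigr)$, after which your steps (3)--(4) proceed verbatim. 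To repair your argument you need to replace step (2) with a boundedness/non-explosiveness argument of this type --- something that genuinely uses the fact that $b$ maps the compact square into itself --- rather than appealing to the independence of the components.
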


 {\bf Proof:} i) We can apply the Jacobian formula to get the density of $w$ given the density of $v$. Since both joint densities are uniform, it follows that $| \det  \frac{\partial b(v)}{\partial v'}| = 1$,
$\forall v \in [0,1]^2$. \\

ii) Let us consider the eigenvalues $\lambda_1 (v), \lambda_2(v)$ of the Jacobian matrix $ \frac{\partial b(v)}{\partial v'}$. The eigenvalues are continuous functions of this matrix, and therefore continuous functions of $v$
(whenever these eigenvalues are different). Then, two cases can be distinguished:\\
\begin{itemize}
	\item Case 1: The eigenvalues are real.
	\item	Case 2: The eigenvalues are complex conjugates.
\end{itemize}

 iii) In case 1, we have $\lambda_2 (v) = 1/\lambda_1(v)$ ( or $-1/\lambda_1(v)$), where $ \lambda_1(v)$ is less or equal to 1 in absolute value for any $v$, and then $\lambda_2(v)$ is larger than or equal to 1 in absolute value for any $v$. Since $b(v) \in [0,1]^2$ for any $v \in [0,1]^2$, it follows that $\lambda_2(v)$ cannot be explosive.\\

iv) Therefore case 2 of complex conjugate roots is the only relevant one. Let us consider the case $\det  \frac{\partial b(v)}{\partial v'}=1$, $\forall v \in [0,1]^2$ (the analysis of $\det  \frac{\partial b(v)}{\partial v'} = -1$ is similar).
Then, the Jacobian matrix is a rotation matrix:\\

$$\frac{\partial b(v)}{\partial v'}  = \left(\begin{array}{cc} \frac{\partial b_1(v)}{\partial v_1} & \frac{\partial a_1(v)}{\partial v_2} \\ \frac{\partial b_2(v)}{\partial v_1} & \frac{\partial b_2(v)}{\partial v_2}  \end{array}\right)
\equiv \left(\begin{array}{cc} \cos \theta(v) & - \sin \theta (v) \\ \sin \theta(v) & \cos \theta(v)  \end{array}\right).
$$

Thus the standard identification issue known in the linear SVAR model, that is up to a given rotation (i.e. orthogonal) matrix, is replaced by the analogue in which the rotation matrix is local and depends on $v$. We deduce that:
\begin{eqnarray}\label{a4}
\frac{\partial b_1(v)}{\partial v_1} & = & \frac{\partial b_2(v)}{\partial v_2}, \nonumber \\
\frac{\partial b_1(v)}{\partial v_2} & = & - \frac{\partial b_2(v)}{\partial v_1}.
\end{eqnarray}
Let us differentiate the first equation with respect to $v_1$ and the second one with respect to $v_2$. We get:
\begin{equation}
\frac{\partial^2 b_1(v)}{\partial v_1^2} = \frac{\partial^2 b_2(v)}{\partial v_1 \partial v_2}\;\; \mbox{and} \;\;
\frac{\partial^2 b_1(v)}{\partial v_2^2} = - \frac{\partial^2 b_2(v)}{\partial v_1 \partial v_2},
\end{equation}

 and by adding these equalities:

\begin{equation}\label{a6}
\frac{\partial^2 b_1(v)}{\partial v_1^2} + \frac{\partial^2 b_1(v)}{\partial v_2^2} = 0.
\end{equation}

 Therefore $b_1$ is a harmonic function that satisfies the Laplace equation \eqref{a6}. Similarly, $b_2$ is also a harmonic function. \qed \\

Harmonic functions are regular functions: they are infinitely differentiable and have series representations that can be differentiated term by term [Axler et al. (2001)]:

\begin{eqnarray}\label{a9}
b_1(v) & = & \sum_{h=0}^{\infty} \sum_{k=0}^{\infty} (b_{1hk} v_1^k v_2^h), \nonumber \\
b_2(v) & = & \sum_{h=0}^{\infty} \sum_{k=0}^{\infty} (b_{2hk} v_1^k v_2^h).
\end{eqnarray}

 Moreover, these series representations are unique. Then, we can apply the conditions \eqref{a4} to these expansions to derive the constraints on the series coefficients and the link between functions $b_1$ and $b_2$.\\

Let us define:
$$\frac{\partial b_1(v)}{\partial v_1} =  \frac{\partial b_2(v)}{\partial v_2} \equiv \sum_{h=0}^{\infty}
\sum_{k=0}^{\infty} (c_{hk} v_1^h v_2^k).$$
 Then, by integration, we get:
\begin{eqnarray*}
b_1(v) & \equiv & \sum_{h=0}^{\infty} \sum_{k=0}^{\infty} [ c_{hk} \frac{v_1^{h+1}}{h+1} v_2^k] + \sum_{k=0}^{\infty}
d_{1k} v_2^k, \\
b_2 (v) & \equiv & \sum_{h=0}^{\infty} \sum_{k=0}^{\infty} [ c_{hk} v_1^h \frac{v_2^{k+1}}{k+1}] + \sum_{h=0}^{\infty}
d_{2h} v_1^h,
\end{eqnarray*}
 where the second sums on the right hand sides are the integration "constants". Equivalently, we have:
\begin{eqnarray*}
b_1(v) & \equiv & \sum_{k=0}^{\infty}
d_{1k} v_2^k +  \sum_{h=1}^{\infty} \sum_{k=0}^{\infty} [ c_{h-1,k} \frac{v_1^{h}}{h} v_2^k], \\
b_2 (v) & \equiv & \sum_{h=0}^{\infty}
d_{2h} v_1^h  + \sum_{h=0}^{\infty} \sum_{k=1}^{\infty} [ c_{h,k-1} v_1^h \frac{v_2^{k}}{k}].
\end{eqnarray*}
Let us now write the second equality in \eqref{a4}, i.e.
$$\frac{\partial b_1 (v)}{\partial v_2}  = - \frac{\partial b_2 (v)}{\partial v_1}.$$
This yields:
\begin{eqnarray}\label{A8}
\frac{k+1}{h} c_{h-1, k+1} & = & - \frac{h+1}{k}  c_{h+1, k-1}, \;\; h \geq 1, k  \geq 1,  \\
\frac{1}{h} c_{h-1, 1} & = & - (h+1) d_{2,h+1}  \;\; h \geq 1, \nonumber \\
\frac{1}{h} c_{1, k-1} & = & - (k+1) d_{1,k+1}  \;\; k \geq 1, \nonumber \\
d_{11} & = & - d_{21}. \nonumber
\end{eqnarray}
The set of restrictions \eqref{A8} provides information on the dimension of underidentification. As the dimension concerns functional spaces, we describe it from the series expansions \eqref{a9} and the number of independent
parameters $b_{1, h, k}, b_{2, h, k}$ with $h+k \leq m$. This number is equal to $(m+1)(m+2)/2$. Let us now prove Proposition 8. 

\begin{lemma}
	The space of parameters $(b_{1, h, k}, b_{2, h, k},\; h+k \leq m)$ is of dimension $2m$.
\end{lemma}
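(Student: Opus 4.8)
The plan is to count dimensions by organizing the harmonic-function coefficients according to the total degree $d = h+k$ and using the Cauchy--Riemann-type relations in \eqref{a4} to determine how many free parameters remain at each degree. First I would fix a degree $d = h+k$ and look at the collection of monomials $v_1^h v_2^k$ with $h+k = d$ appearing in $b_1$ and in $b_2$: there are $d+1$ such monomials for each of $b_1, b_2$, so $2(d+1)$ raw coefficients $b_{1,h,k}, b_{2,h,k}$ at degree $d$. The relations \eqref{a4}, namely $\partial b_1/\partial v_1 = \partial b_2/\partial v_2$ and $\partial b_1/\partial v_2 = -\partial b_2/\partial v_1$, each lower the degree by one, so they impose linear constraints among the degree-$d$ coefficients of $b_1$ and $b_2$ in terms of... actually purely among themselves, since differentiating a degree-$d$ monomial gives degree-$(d-1)$ monomials and matching these on both sides couples $b_{1,\cdot,\cdot}$ and $b_{2,\cdot,\cdot}$ of the same degree $d$. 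I would count: the first relation gives, for each degree-$(d-1)$ monomial $v_1^h v_2^k$ (there are $d$ of them), one equation; likewise the second relation gives $d$ equations; so $2d$ equations at degree $d$. One must then check these $2d$ equations are independent, giving $2(d+1) - 2d = 2$ free parameters at each degree $d \geq 1$.

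Next I would pass from the per-degree count to the cumulative count. Summing over $d = 1, \ldots, m$ gives $\sum_{d=1}^m 2 = 2m$ free parameters among $(b_{1,h,k}, b_{2,h,k}, \; 1 \le h+k \le m)$; the degree-zero coefficients $b_{1,0,0}, b_{2,0,0}$ are the constants of $b$, which here are pinned by the zero-preserving (Schwarz Lemma) normalization mentioned in the text, so they contribute nothing. This matches the claimed dimension $2m$ for the space $(b_{1,h,k}, b_{2,h,k}, \; h+k \le m)$. Concretely, the explicit recursions \eqref{A8} already exhibit the structure: the coefficients $c_{h,k}$ (the coefficients of $\partial b_1/\partial v_1$) parametrize most of the freedom, but the relation $\frac{k+1}{h} c_{h-1,k+1} = -\frac{h+1}{k} c_{h+1,k-1}$ ties them together across a fixed degree, and the boundary relations fix the integration constants $d_{1k}, d_{2h}$ in terms of the $c$'s; one checks that at each degree exactly two of these remain free (for instance, one may take $c_{d-1,0}$ and $c_{0,d-1}$, or equivalently the real and imaginary parts of the leading coefficient of the associated analytic function, as the two free parameters).

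A cleaner route, which I would mention as the conceptual underpinning, is to recognize \eqref{a4} as the statement that $f(z) = b_1(v_1, v_2) + i\, b_2(v_1, v_2)$ with $z = v_1 + i v_2$ is holomorphic (the area-preservation $|\det \partial b/\partial v'| = 1$ together with the complex-conjugate-eigenvalue case forces the Cauchy--Riemann equations, as derived in the proof of Lemma~4). A holomorphic function on a neighborhood of the origin is determined by its Taylor coefficients $a_0, a_1, a_2, \ldots$, and the partial sum through degree $m$ involves $a_0, \ldots, a_m$, i.e. $m+1$ complex = $2(m+1)$ real parameters; removing $a_0$ by the zero-preserving normalization leaves $2m$ real parameters, which is exactly the asserted dimension. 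The area-preserving constraint $|f'(z)| \equiv 1$ further restricts these (it forces $f'$ to be a unimodular constant when $f$ is entire, but locally it still leaves an infinite-dimensional family, consistent with the piecewise examples in Figure~\ref{fig:diffexamples}), yet it does not reduce the \emph{dimension count at each fixed degree} in the formal power-series sense used here, so the bookkeeping above is unaffected.

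The main obstacle is verifying that the $2d$ linear constraints coming from \eqref{a4} at degree $d$ are genuinely independent, so that the count $2(d+1) - 2d = 2$ is exact rather than merely an upper bound on the number of constraints (hence a lower bound on the dimension). This is where the holomorphic reinterpretation pays off: independence is immediate once one knows that the only relation among the degree-$d$ coefficients is the single complex equation packaging the two real Cauchy--Riemann conditions, and that Taylor coefficients of distinct degrees are unconstrained relative to one another. I would therefore carry out the explicit degree-by-degree count from \eqref{A8} as the rigorous argument, and invoke the complex-analytic picture to confirm no hidden dependencies have been missed.
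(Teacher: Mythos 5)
Your degree-by-degree count from the Cauchy--Riemann relations \eqref{a4} is correct and is essentially the paper's own argument: the paper reparametrizes via the coefficients $c_{h,k}, d_{1,h}, d_{2,h}$ and invokes the recursions \eqref{A8}, which is the same bookkeeping of two free parameters per total degree (with the constants removed by the zero-preserving normalization). Your holomorphic reformulation $f = b_1 + i b_2$ merely supplies the independence check of the $2d$ constraints at each degree that the paper leaves implicit, so your write-up is, if anything, more complete.
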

 {\bf Proof:} Let us consider an alternative parametrization with parameters $c_{h,k}, d_{1,h}, d_{2,h}$. The parameters $b_{1, h, k}, b_{2, h, k}$ with $h+k=j$ are linear functions of parameters $c_{h,k}, h+k = j+1, d_{1,j+1}d_{2,j+1}$. Then the result result follows from restrictions \eqref{A8}. \qed\\

Other identification issues can arise if transformation $b$ is not assumed twice continuously differentiable. Let us consider the first component $v_1$ that follows the uniform distribution on $[0,1]$ and introduce two intervals $[0,c]$ and $[1-c,1]$ with $c<0.5$. Then, the variable $w_1$ defined by:
$$ w_1 = \left\{ \begin{array}{l} v_1, \; \mbox{if} \; v_1 \in (c, 1-c), \\
2 v_1 - 1, \; \mbox{if} \; v_1 \in (0,c) \cup (1-c,1),
\end{array} \right.
$$
also follows the uniform distribution and, similarly to $v_1$, variable $w_1$ is independent of $v_2 = w_2$. Note that this transformation is not monotonous. Therefore, the size $\delta$ of a shock to $v_1$ is difficult to interpret in terms of a magnitude of a shock to $w_1$. We conclude that, contrary to the linear framework, in a nonlinear dynamic framework the assumption of independence between the components of $v_t$ is insufficient to identify the structural innovations to be shocked.

\subsection{Proof of Proposition 8}

Let us consider the linear transformation $A$ in $y_t=Ax_t$. Since the components of $x_t$ are defined up to a scale factor, we can assume $A=\begin{bmatrix}
	1 \ a_{12} \\ 
	a_{21} \ 1 \\ 
\end{bmatrix}$, with $a_{12}a_{21}\neq1$ to ensure that $A$ is invertible. Then, we can compute the cross ACF of the process $y_t$: $\Gamma(h) = \begin{bmatrix}
	\gamma_{11}(h) \ \gamma_{12}(h) \\
	\gamma_{21}(h) \ \gamma_{22}(h) \\
\end{bmatrix}$, in terms of the marginal ACFs $\tilde{\gamma}_1(h),\tilde{\gamma}_2(h)$ of $(x_{1,t})$ and $(x_{2,t})$, respectively. We get:
\begin{equation*}
	\begin{split}
		\gamma_{11}(h) & = \tilde{\gamma}_1(h) + a_{12}^2 \tilde{\gamma}_2(h), \\
		\gamma_{12}(h) & = a_{21}\tilde{\gamma}_1(h) + a_{12}\tilde{\gamma}_2(h), \\
		\gamma_{22}(h) & = a_{21}^2\tilde{\gamma}_1(h) +  \tilde{\gamma}_2(h). \\
	\end{split}
\end{equation*}	
The system above can be solved to derive the relationship between the observable autocovariances. We get:
\begin{equation}
	\gamma_{12}(h)  = \frac{1}{1+a_{12}a_{21}}[a_{21}\gamma_{11}(h)+a_{12}\gamma_{22}(h)], \ \text{(if $a_{12}a_{21}\neq -1$)}.
\end{equation}
Therefore, if the ACF's $\gamma_{11}(h)$, $\gamma_{22}(h)$ are linearly independent, we can identify $\frac{a_{12}}{(1+a_{12}a_{21})}$ and $\frac{a_{21}}{(1+a_{12}a_{21})}$, or equivalently $c=\frac{a_{21}}{a_{12}}$ and $d=\frac{a_{21}}{(1+a_{12}a_{21})}=\frac{ca_{12}}{(1+ca^2_{12})}$. We deduce that $a_{12}$ is a solution of the equation of degree 2: 
\begin{equation*}
	d c a_{12}^2 -ca_{12} + d = 0, 
\end{equation*}
that has two solutions. Matrix $A$ is locally identifiable, not globally identifiable. But this lack of global identification is due to the definition of $A$, up to the permutation of components $x_{1,t}$ and $x_{2,t}$. Therefore, $A$ is essentially unique. Then, we can identify the sources by inverting the matrix A as $x_t=A^{-1}y_t$, and also the nonlinear dynamics of $x_{1,t}$ and $x_{2,t}$, respectively, since these Markov processes are one dimensional. 

\subsection{Proof of Proposition 9}

\textbf{(i) Transition of $(y_t)$ at horizon $h$.}\\

Let us assume $y_t = A(x_t) \iff x_t = C(y_t)$, where $C$ is the inverse of function $A$, that is the demixing function, and denote $\ell_{1,h}(x_{1,t+h}|x_{1,t}),\ell_{2,h}(x_{2,t+h}|x_{2,t})$ the transition densities at horizon $h$ of processes $(x_{1,t})$ and $(x_{2,t})$, respectively. Then the transition at horizon $h$ of Markov process $(y_t)$ is given by: 
\begin{equation}
	\ell^*_{h}(y_{t+h}|y_t) = \left|\det\frac{\partial C(y_{t+h})}{\partial y'}\right|\ell_{1,h}[c_1(y_{t+h})|c_1(y_t)]\ell_{2,h}[c_2(y_{t+h})|c_2(y_t)],
\end{equation}
by the Jacobian formula, and by taking the logarithm of both sides:
\begin{equation}
	\log	\ell^*_{h}(y|z) = \log\left|\det\left(\frac{\partial C(y)}{\partial y'}\right)\right|+\log\ell_{1,h}[c_1(y)|c_1(z)]+\log\ell_{2,h}[c_2(y)|c_2(z)],
\end{equation}
valid for $ y,z \in [0,1]^2, \forall h = 1,2,...$. We get an infinite set of relations, in which the left hand side is nonparametrically identifiable. \\ 

\textbf{(ii) Identifiability of the Jacobian} \\ 

When $h$ tends to infinity, the right hand side tends to $\log \left|\det \left(\frac{\partial C(y)}{\partial y'}\right)\right|$. Indeed, by the ergodicity of process $(y_t)$, the conditional transitions $\ell_{1,h}$,$\ell_{2,h}$ no longer depend on the conditioning values. Thus, they coincide with the unconditional density, that is equal to 1, since $x_1$ and $x_2$ have been normalized to have marginal distributions that are uniform on $[0,1]$. \\ 

\textbf{(iii) Identifiability of C (up to a linear affine transformation).}\\

Therefore, we can identify:
\begin{equation}
	\Lambda_h(y;z) = \log \ell_{1,h}[c_1(y)|c_1(z)] + \ell_{2,h}[c_2(y)|c_2(z)], \forall \ y,z,h.
\end{equation}
Let us take the derivative of these equations with respect to y. Then we get:
\begin{equation}
	\frac{\partial 	\Lambda_h(y;z)  }{\partial y} = \frac{\partial c_1(y)}{\partial y} \frac{\partial\log \ell_{1,h}}{\partial x_1} [c_1(y)|c_1(z)] + \frac{\partial c_2(y)}{\partial y} \frac{\partial\log \ell_{2,h}}{\partial x_2} [c_2(y)|c_2(z)] , \forall \ y,z,h. 
\end{equation}
Under the condition in Proposition 8, the sequences $\left[\frac{\partial \log \ell_{jh} }{\partial u}(u|v), \ \text{v varying}\right]$, $j=1,2$, are linearly independent. Then, the sequence $\left[\frac{\partial\Lambda_h(\cdot;v) }{\partial y}, \ \text{v varying}\right]$ generates a subspace of dimension 2, in which the two sequences above form a basis. Therefore, there exists a (2,2) matrix $D$ such that: $\frac{\partial C(y)}{\partial y'}D$ is identifiable. By integrating with respect to $y$, we deduce that the diffeomorphism $C(\cdot)$ is identifiable up to a linear affine transformation. \\ 

\textbf{(iv) Identification of the sources $x_t$ and their dynamics.} \\ 

The end of the proof is similar to the end of the proof in Proposition 7. 

\section*{Online Appendices}

\subsection*{B.1 The Set of Linear Affine Transformations with Jacobian Equal to One for $n=2$}


To find the transformation of $u=(u_1,u_2)$, we have just to consider the crossing of the almost vertical curves starting from $u_1$ and the almost horizontal curve starting from $u_2$. The crossing point is $\tilde{u}_1,\tilde{u}_2$, with $\tilde{u}=(\tilde{u}_1,\tilde{u}_2)' = T(u)$. There are no transformations on the boundaries to ensure the marginal uniform distribution conditions and no discontinuity when transforming the corners. We start by considering the restricted sets of linear, or linear affine transformations before providing more general results.

\subsubsection*{Linear Transformations}

Let us focus on the linear transformation: $T(u)=Tu$, where $T$ is a $(2,2)$ matrix. It is easily checked hat the condition $Tu\in[0,1]^2$, $\forall u \in [0,1]^2$ implies:
\begin{equation*}
	T_{i,j} \geq 0, \forall i,j=1,2, \ \text{and} \ \sum_{j=1}^{2}T_{i,j} \leq 1, \forall i = 1,2. 
\end{equation*}
Then, by the Perron-Froebenius Theorem on nonnegative matrices, the eigenvalue of matrix $T$ with the largest modulus is real positive smaller or equal to one. The condition $\det T=+1$ implies that the product of the modulus of eignvalues of $T$ is equal to one. This implies that all eigenvalues are such that $|\lambda_j|=1,j=1,2$ the largest eigenvalue being $\lambda_1=1$. Another application of the Perron-Frobenius theorem implies that the other eigenvalue is also equal to one. \\

\begin{proposition}
	The identity is the only admissible linear transformation $T$. 
\end{proposition}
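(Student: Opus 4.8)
The paragraph above has already reduced the statement to the following data on an admissible linear $T$: it is a $2\times 2$ matrix with nonnegative entries, row sums at most $1$, determinant $+1$, and (being a uniform-preserving bijection) it maps $[0,1]^2$ into itself, indeed onto itself since measure preservation together with continuity makes its image a closed full-measure subset of the square. My plan is to first pin down the spectrum of $T$, and then exploit stability of the square under iteration of $T$.

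For the spectrum I would apply Perron--Frobenius to the nonnegative matrix $T$: its spectral radius $\rho(T)$ is a real eigenvalue with $0<\rho(T)\le\max_i\sum_j T_{ij}\le 1$ (positivity since $\det T=1$). A real $2\times 2$ matrix with one real eigenvalue has both eigenvalues real, so I may write $\lambda_1\ge\lambda_2>0$ with $\lambda_1\lambda_2=\det T=1$; then $\lambda_1=\rho(T)\ge\lambda_2=1/\lambda_1$ forces $\lambda_1^2\ge 1$, which together with $\lambda_1\le 1$ gives $\lambda_1=\lambda_2=1$. Hence the characteristic polynomial is $(\lambda-1)^2$, and Cayley--Hamilton yields $(T-Id)^2=0$.

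To conclude I would rule out the non-identity unipotent case. With $N:=T-Id$ satisfying $N^2=0$, one has $T^k=Id+kN$ for every $k\ge 1$; since $T$ carries $[0,1]^2$ into itself, so does each $T^k$, hence the columns of $Id+kN$ (the images of the standard basis vectors) stay in $[0,1]^2$ for all $k$, which is impossible unless $N=0$. Thus $T=Id$. A variant avoiding iteration: $T^{-1}$ also maps $[0,1]^2$ onto itself and so has nonnegative entries, but $T^{-1}=\mathrm{adj}(T)$ has off-diagonal entries $-T_{12},-T_{21}$, forcing $T_{12}=T_{21}=0$; then $T=\mathrm{diag}(T_{11},T_{22})$ with $T_{11},T_{22}\in[0,1]$ and $T_{11}T_{22}=1$, so $T=Id$.

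The one step that could stall the argument is that the spectral information is not conclusive on its own --- the Jordan block with eigenvalue $1$ shares the spectrum and determinant $+1$ --- so a second use of the square-mapping property (equivalently, of invertibility plus nonnegativity of $T^{-1}$) is essential; everything around it is routine.
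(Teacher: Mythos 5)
Your main argument is correct and follows essentially the same route as the paper: Perron--Frobenius plus $\det T=1$ pins both eigenvalues at $1$, and the non-diagonalizable case is killed by iterating $T$ and watching the entries grow linearly while $T^k$ must keep the unit square inside itself. Your phrasing via Cayley--Hamilton and $T^k=Id+kN$ with $N^2=0$ is just a coordinate-free rewriting of the paper's Jordan-block computation $T^h=Q\bigl(\begin{smallmatrix}1&h\\0&1\end{smallmatrix}\bigr)Q^{-1}$; you are also slightly more careful than the paper in justifying surjectivity of $T$ on $[0,1]^2$ (via measure preservation and compactness) and in deducing that both eigenvalues are real and equal to $1$ in one pass rather than by a second appeal to Perron--Frobenius. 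Your closing variant, however, is a genuinely different and more elementary proof: since $\det T=1$, $T^{-1}=\mathrm{adj}(T)$ has off-diagonal entries $-T_{12},-T_{21}$, and nonnegativity of $T^{-1}$ (which follows because $T^{-1}$ also maps the square into itself) forces $T_{12}=T_{21}=0$, whence $T_{11}T_{22}=1$ with both entries in $[0,1]$ gives $T=Id$. That route needs no spectral theory and no iteration at all --- only the entrywise constraints, the determinant condition, and invertibility on the square --- so it is shorter and arguably more transparent than the paper's argument, at the cost of being specific to $n=2$ where the adjugate has this explicit sign pattern.
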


\textbf{Proof:} If $T$ is diagonalizable, $T$ is the identity. Let us now show that $T$ is necessarily diagonalizable. Otherwise, $T$ can be written under a triangular form by Jordan representation: $T=Q\begin{bmatrix}
	1& 1 \\
	0 &1 \\ 
\end{bmatrix}Q^{-1}$. Then, we have $T^h =Q\begin{bmatrix}
	1 & 1 \\
	0 & 1 \\ 
\end{bmatrix}^hQ^{-1}=Q\begin{bmatrix}
	1&  h \\
	0& 1 \\ 
\end{bmatrix}Q^{-1}$, with elements tending to infinity of h tends to infinity. This contradicts the constraint: $T^h([0,1]^2)=([0,1])^2$. \qed

\subsubsection*{Affine Transformations, $n=2$}
The linear transformations considered in Section 2.1 above do not account for the possibility of recovering a uniform distribution on $[0,1]$ by the mapping: $u \rightarrow 1-u$. Let us now consider affine transformations of the type:
\begin{equation*}
	T(u) = \begin{bmatrix}
		u_1 - 1/2 \\
		u_2 - 1/2 \\ 
	\end{bmatrix},
\end{equation*}
written for $n=2$. The condition $T(u)\in[0,1]^2$ is equivalent to the inequality restrictions:
\begin{equation*}
	\begin{split}
		-1 \leq &  T_{11}+T_{12}\leq 1, \ -1\leq T_{11}-T_{12}\leq 1, \\
		-1 \leq &  T_{21}+T_{22}\leq 1, \ -1\leq T_{21}-T_{22}\leq 1, \\ 
	\end{split}
\end{equation*}
and the Jacobian condition (with value $+1$) gives the equality:
\begin{equation*}
	T_{11}T_{22}-T_{12}T_{21}=1.
\end{equation*}
We would have now to look for the matrices $T$ satisfying jointly all the above restrictions.

\subsection*{B.3 Diffeomorphisms}

The normally distributed innovations $(\varepsilon_{1,t},\varepsilon_{2,t})$ can be written on polar coordinates such that: 
\begin{equation*}
	\begin{cases}
		\varepsilon_{1,t} = \rho \cos \theta,\\
		\varepsilon_{2,t} = \rho \sin \theta. \\ 
	\end{cases} \iff 	\begin{cases}
		\rho = \sqrt{\varepsilon_{1,t}^2+\varepsilon_{2,t}^2},\\
		\theta= \tan^{-1}(\varepsilon_{2,t}/\varepsilon_{1,t}). \\ 
	\end{cases}
\end{equation*}
Then, the density of the standard normal distribution becomes $f(\rho,\theta)=\frac{1}{2\pi}\rho\exp(-\rho^2/2)$, on the domain $(0,\infty)\times(0,2\pi)$ (up to modulo $2\pi$ for $\theta$). Now, consider the transformation $(\rho,\theta) \rightarrow (\rho^*,\theta^*) = (\rho,\theta+a(\rho))$, where $a(\cdot)$ is a continuous differentiable function of $\rho$\footnote{Note that the Jacobian formula is  valid for functions that are diffeomorphisms almost everywhere. In our framework, the transformation is a piecewise diffeomorphism and has to be understood with $\theta + a(\rho)$, up to modulo $2\pi$.}. Then we have: 
\begin{equation*}
	\begin{cases}
		\varepsilon^*_{1,t} = \rho^* \cos \theta^*,\\
		\varepsilon^*_{2,t} = \rho^* \sin \theta^*. \\ 
	\end{cases} \iff 	\begin{cases}
		\rho^* = \sqrt{\varepsilon_{1,t}^2+\varepsilon_{2,t}^2},\\
		\theta^*= \tan^{-1}(\varepsilon_{2,t}/\varepsilon_{1,t})+a\left(\sqrt{\varepsilon_{1,t}^2+\varepsilon_{2,t}^2}\right). \\ 
	\end{cases}
\end{equation*}
Thus, we get for $(u^*_{1,t},u^*_{2,t})=(\Phi(u_{1,t}),\Phi(u_{2,t}))$:
\begin{equation}
	\begin{cases}
u^*_{1,t} = \Phi \left[\sqrt{\Phi^{-1}(u_{1,t})^2+\Phi^{-1}(u_{2,t})^2} \cos\left(\tan^{-1}(\Phi^{-1}(u_{2,t})/\Phi^{-1}(u_{1,t}))+a\left(\sqrt{\Phi^{-1}(u_{1,t})^2+\Phi^{-1}(u_{2,t})^2}\right)\right)\right] ,\\
u^*_{2,t} = \Phi \left[\sqrt{\Phi^{-1}(u_{1,t})^2+\Phi^{-1}(u_{2,t})^2} \sin\left(\tan^{-1}(\Phi^{-1}(u_{2,t})/\Phi^{-1}(u_{1,t}))+a\left(\sqrt{\Phi^{-1}(u_{1,t})^2+\Phi^{-1}(u_{2,t})^2}\right)\right)\right] . \\ 
\end{cases}
\end{equation}

\end{appendices}

\end{document}